\newtheorem{theorem}{Theorem}[section]
\newtheorem{lemma}[theorem]{Lemma}
\newtheorem{definition}[theorem]{Definition}
\newcommand{\Pj}{ { {\tt Pj}_{\sigma^T_i} } }
\def\CC{\mathbb C}
\def\RR{\mathbb R}
\def\ZZ{\mathbb Z}
\def\cA{\mathcal A}
\def\cB{\mathcal B}
\def\cD{\mathcal D}
\def\cI{\mathcal I}
\def\cL{\mathcal L}
\def\cN{\mathcal N}
\def\cS{\mathcal S}
\def\cX{\mathcal X}
\def\bd{\mathbf d}
\def\bzero{\mathbf 0}
\newcommand{\hide}[1]{}
\newcommand{\raf}[1]{(\ref{#1})}
\newcommand{\cP}{\ensuremath{\mathcal{P}}}
\newcommand{\cR}{\ensuremath{\mathcal{R}}}
\newcommand{\cV}{\ensuremath{\mathcal{V}}}
\newcommand{\Z}{\ensuremath{\mathbb Z}}
\newcommand{\OPT}{\ensuremath{\textsc{Opt}}}
\newcommand{\argmin}{\ensuremath{\mathrm{argmin}}}
\newcommand{\poly}{\operatorname{poly}}
\newcommand{\argmax}{\operatorname{argmax}}
\title{Truthful Mechanisms for Combinatorial Allocation of Electric Power in Alternating Current Electric Systems for Smart Grid\footnote{This paper appears in ACM Transactions on Economics and Computation, Vol. 5, No. 1, Article 7, October, 2016. DOI: http://dx.doi.org/10.1145/2955089 Extended abstracts containing some of the results have been presented in the International Conference on
Autonomous Agents and Multi-Agent Systems (AAMAS 2014) \cite{CEK14CKS} and a workshop at the International Conference on Computer Communication and Networks (ICCCN 2014)\cite{KCE14}.}}
\author{Chi-Kin Chau$^\ast$}
\author{Khaled Elbassioni$^\ast$}
\author{Majid Khonji$^\dag$}
\affil{$^\ast$Department of EECS \\ Masdar Institute of Science and Technology, Abu Dhabi, UAE}
\affil{$^\dag$Department of Research and Development \\ Dubai Electricity and Water Authority (DEWA), Dubai, UAE} 
\affil{Email: \{ckchau, kelbassioni\}@masdar.ac.ae, majid.khonji@dewa.gov.ae}
\begin{document}

\maketitle

\begin{abstract}
	Traditional studies of combinatorial auctions often only consider linear constraints. The rise of smart grid presents a new class of auctions, characterized by quadratic constraints. This paper studies the {\em complex-demand knapsack problem}, in which the demands are complex valued and the capacity of supplies is described by the magnitude of total complex-valued demand. This naturally captures the power constraints in alternating current (AC) electric systems. In this paper, we provide a more complete study and generalize the problem to the multi-minded version, beyond the previously known $\frac{1}{2}$-approximation algorithm for only a subclass of the problem. More precisely, we give a truthful PTAS for the case $\phi\in[0,\frac{\pi}{2}-\delta]$, and a truthful FPTAS, which {\it fully} optimizes the objective function but violates the capacity constraint by at most $(1+\epsilon)$, for the case $\phi\in(\frac{\pi}{2},\pi-\delta]$, where $\phi$ is the maximum argument of any complex-valued demand and $\epsilon,\delta>0$ are arbitrarily small constants. We complement these results by showing that, unless P=NP, neither a PTAS for the case $\phi\in(\frac{\pi}{2},\pi-\delta]$ nor any bi-criteria approximation algorithm with polynomial guarantees for the case when $\phi$ is arbitrarily close to $\pi$ (that is, when $\delta$ is arbitrarily close to $0$) can exist.
\end{abstract}

%\keywords{Combinatorial Power Allocation; Multi-unit Combinatorial Auctions; Complex-Demand Knapsack Problem; Mechanism Design; Smart Grid; Hardness Results}

%\vspace{-5pt}
\section{Introduction} \label{sec:intro} 

Traditionally, many practical auction problems are combinatorial in nature, requiring carefully designed time-efficient approximation algorithms. Although there have been decades of research in approximating combinatorial auction problems, traditional studies of combinatorial auctions often only consider linear constraints. Namely, the demands for certain goods are limited by the respective supplies, described by certain linear constraints.

Recently, the rise of smart grid presents a new class of auction problems. In alternating current (AC) electric systems \cite{GS94power}, the power is determined by time-varying voltage and current, which gives rise to two types of power demands: (1) {\em active} power (that can be consumed by resistors at the loads) and, (2) {\em reactive} power (that continuously bounces back and forth between the power sources and loads). The combination of active power and reactive power is known as {\em apparent} power. The ratio between active power and apparent power is known as {\em power factor}. In practice, the physical capacity of power generation and transmission is often expressed by apparent power. Electric appliances and instruments with capacitive or inductive components have non-zero reactive power. However, most electric appliances and instruments are subject to regulations to limit their power factors \cite{NEC}. It is vital to ensure that the total power usage is within the apparent power constraint, given the maximum power factor of power demands.

In the common literature of electric power systems \cite{GS94power}, apparent power is represented by a complex number, wherein the real part represents the active power and the imaginary part represents the reactive power. Hence, it is often necessary to use a quadratic constraint, namely the magnitude of complex numbers, to describe the system capacity. The power factor is related to the phase angle between active power and reactive power. Yu and Chau \cite{YC13CKS} introduced the {\em complex-demand knapsack problem} (CKP) to model a one-shot auction for combinatorial AC electric power allocation, which is a quadratic programming variant of the classical knapsack problem.

Furthermore, future smart grids will be automated by agents representing individual users. Hence, one might expect these agents to be self-interested and may untruthfully report their valuations or demands. This motivates us to consider truthful (aka. incentive-compatible) approximation mechanisms, in which it is in the best interest of the agents to report their true parameters. In \cite{YC13CKS} a monotone $\frac{1}{2}$-approximation algorithm that induces a deterministic truthful mechanism was devised for the complex-demand knapsack problem, which, however, assumes that all complex-valued demands lie in the positive quadrant. 

In this paper, we provide a complete study and generalize the complex-demand knapsack problem to the multi-minded version, beyond the previously known $\frac{1}{2}$-approximation algorithm. More precisely, we consider the problem under the framework of (bi-criteria) $(\alpha,\beta)$-approximation algorithms, which compute a feasible solution with objective function within a factor of $\alpha$ of optimal, but may violate the capacity constraint by a factor of at most $\beta$. 
We give a (deterministic) truthful $(1-\epsilon,1)$-approximation algorithm for the case $\phi\in[0,\frac{\pi}{2}-\delta]$, and a truthful $(1,1+\epsilon)$-approximation for the case $\phi\in(\frac{\pi}{2},\pi-\delta]$, where $\phi$ is the maximum argument of any complex-valued demand   and $\epsilon,\delta>0$ are arbitrarily small constants. 
Moreover, the running time in the latter case is polynomial in $n$ and $\frac{1}{\epsilon}$ (this may be thought of as an {\it FPTAS with resource augmentation}; see, e.g., \cite{KP2000,PSTW02}). We complement these results by showing that, unless P=NP, neither a PTAS can exist for the latter case nor any bi-criteria approximation algorithm with exponential guarantees for the case when $\phi$ is arbitrarily close to $\pi$. Note that the difficulty when $\phi\in(\frac{\pi}{2},\pi]$ is mainly due to the fact that demands are allowed to have both positive and negative real parts, which can cancel each other; this allows an optimal solution to pack much larger set of demands, within the available capacity, than any polynomial time algorithm can detect. We remark also that \cite{woeginger2000does,YC13CKS} show no FPTAS exists for the case $\phi\in[0,\frac{\pi}{2}-\delta]$. Therefore, our results completely settle the open questions in \cite{YC13CKS}.

%compare 
\subsection{Contribution}
In Table \ref{tab}, we briefly list the inapproximability and the best known truthful mechanisms for the {\em $m$-dimensional knapsack problem} ({\sc $m$DKP}), $m\ge2$, along with
 our results for three classes of {\sc CKP} (namely, demands with maximum argument $\phi \in [0,\frac{\pi}{2}]$, $\phi \in [0, \pi - \delta]$, and $\phi \ge \pi - \delta'$ where $\delta$ is {\em polynomially} small in $n$, while $\delta'$ is {\em exponentially} small $n$).

\begin{table}[h!]  \hspace{-10pt}

\centering{ \footnotesize

\begin{tabular} {l c c c c}
 \toprule
 & {\sc  CKP$[0,\frac{\pi}{2}]$} &{\sc CKP$[\frac{\pi}{2}+\delta,\pi \mbox{-}\delta]$}  & {\sc CKP$[\pi\mbox{-}\delta',\pi]$} & {\sc $m$DKP}\\
\midrule[1pt]
  \parbox{3cm}{\bf Inapproximability}  & \parbox{2.5cm}{\centering No FPTAS \cite{YC13CKS,woeginger2000does}} & \parbox{2cm}{\centering No $(\alpha, 1)$-approx (Sec. \ref{sec:hardness}) }
  & {\parbox{2cm}{\centering
  Bi-criteria Inapproximable (Sec. \ref{sec:hardness}) }} & \parbox{3cm}{\centering No FPTAS \\  \centering(see, e.g.,  \cite{KPP10book})}\\
%\cline{1-3}
 \cmidrule{1-5}
\parbox{3cm}{\bf Truthful Mechanism} &   \parbox{2.5cm}{\centering PTAS (Sec. \ref{sec:truthful-ptas})\\
\centering $(1,1+\epsilon)$-FPTAS (Sec. \ref{sec:tbp})	}&  \parbox{2.5cm}{\centering $(1,1+\epsilon)$-FPTAS (Sec. \ref{sec:tbp}) }  & None & \parbox{2.5cm}{\centering PTAS \cite{DN10}\\ \centering Bi-criteria FPTAS \cite{KTV13}}\\
\bottomrule
\end{tabular}
}
\caption{A summary of results}
\label{tab}
\end{table}

%\vspace{-5pt}
\section{Related Work} \label{sec:related} 

Linear combinatorial auctions can be formulated as variants of the classical knapsack problem \cite{CK00,KPP10book,FC84alg}. Notably, these include the {\em one-dimensional knapsack problem} ({\sc 1DKP}) where a single item has multiple copies, and its multi-dimensional generalization, the {\em $m$-dimensional knapsack problem} ({\sc $m$DKP}).
There is an FPTAS for {\sc 1DKP} (see, e.g.,  \cite{KPP10book}). 

In mechanism design setting, where each customer may untruthfully report her valuation and demand, it is desirable to design {\it truthful} or  {\it incentive-compatible} approximation mechanisms, in which it is in the best interest of each customer to reveal her true valuation and demand \cite{DN07}.
In the so-called {\it single-minded case}, a {\it monotone} procedure can guarantee incentive compatibility \cite{NRTV07}.  While the straightforward FPTAS for  {\sc 1DKP} is not monotone, since the scaling factor involves the maximum item value, \cite{BKV05KS} gave a monotone FPTAS, by performing the same procedure with a series of different scaling factors irrelevant to the item values and taking the best solution out of them. Hence, {\sc 1DKP} admits a truthful FPTAS. We remark that monotonicity  may be not enough for the incentive compatibility in the general setting. More recently, a truthful PTAS, based on another approach using dynamic programming and the notion of the  {\it maximal-in-range} mechanism, was given in \cite{DN10} for the {\it multi-minded} case. We will use the maximal-in-range approach in this paper.     

As for {\sc $m$DKP} with $m\geq 2$, a PTAS is given in \cite{FC84alg} based on the integer programming formulation, but it is not evident to see whether it is monotone.     
On the other hand, {\sc 2DKP} is already inapproximable by an FPTAS unless P = NP, by a reduction from {\sc equipartition} \cite{KPP10book}.  Very recently, \cite{KTV13} gave a truthful FPTAS with $(1+\epsilon)$-violation for multi-unit combinatorial auctions with a constant number of distinct goods (including {\sc $m$DKP}), and its generalization to the multi-minded version, when $m$ is fixed. Their technique is based on applying the VCG-mechanism to a rounded problem. Based on the PTAS for the {\it $m$-minded multi-unit auctions}
 developed in \cite{DN10}, they also obtained a truthful PTAS for $m$-minded multi-unit combinatorial auctions with a constant number of distinct goods.  
 {Intuitively, a valuation function is $m$-minded if it is completely determined by the values on $m$ different choices; for simplicity we call this type of valuation {\it multi-minded} in the rest of the paper.}
 %We note that for the case of a {\it single} type of goods (i.e., for {\it multi-unit auctions}), a {\it universally truthful} randomized mechanism (truthful with probability 1) was given recently in \cite{V12}. 

In contrast, truthful non-linear combinatorial auctions were explored to a little extent.  Yu and Chau  \cite{YC13CKS} introduced the complex-demand knapsack problem, which models auctions with a convex quadratic constraint. An earlier paper \cite{woeginger2000does} also introduced the same problem without considering truthfulness by a different name called $2$-weighted knapsack problem. One can regard, the complex-demand knapsack problem with strategic considerations as an auction design problem, where users bid on complex-valued items, and a feasible solution allocates one item to each user such that the total magnitude of allocated items is below a certain threshold. Even though some of the existing techniques can deal with combinatorial auctions with convex non-linear relaxations (see, e.g., \cite{LS11}), those techniques require bounded integrality gap and yield randomized truthful-in-expectations mechanisms. 

%\vspace{-10pt}
\section{Problem Definitions and Notations}\label{sec:model}
In this section we formally define the complex-demand knapsack problem. We present first the non-strategic version of the problem where we assume all parameters are known beforehand. Then we describe the strategic version where each user $k$ declares his/her valuation function defined over a set of declared demands. In the latter setting, we consider the case where users could lie about their valuation functions and demand sets in order to optimize their utility functions (see Sec.~\ref{sec:model.truth} for a formal definition of utility). 
Towards the end of this section, we present an application of the complex-demand knapsck problem to power allocation in  (AC) alternating current electric systems.

\subsection{Complex-demand Knapsack Problem (non-strategic version) } 

We adopt the notations from \cite{YC13CKS}. 
Our study concerns power allocation under a capacity constraint on the magnitude of the total satisfiable demand (i.e., apparent power). Throughout this paper, we sometimes denote $\nu^{\rm R} \triangleq {\rm Re}(\nu)$ as the real part and $\nu^{\rm I} \triangleq {\rm Im}(\nu)$ as the imaginary part of a given complex number $\nu$. We also interchangeably denote a complex number by a 2D-vector as well as a point in the complex plane. $|\nu|$ denotes the magnitude of $\nu$.

We define the non-strategic version of the complex-demand knapsack problem ({\sc CKP}) with a set $[n]\triangleq\{1,\ldots,n\}$ of users as follows: %\vspace{-5pt}
\begin{eqnarray}
\textsc{(CKP)} \qquad& \displaystyle \max_{x \in \{0, 1 \}^n} \sum_{k\in[n]}v_k x_k \label{CKP}\\
\text{subject to}\qquad & \displaystyle \Big|\sum_{k\in \cN}d_k x_k\Big| \le C. \label{C1}
\end{eqnarray}
where $d_k = d_k^{\rm R} + {\bf i} d_k^{\rm I} \in\CC$ is the {\em complex-valued} demand of power for the $k$-th user, $C \in\RR_+$ is a real-valued capacity of total satisfiable demand in apparent power, and $v_k \in \RR_+$ is the valuation of user $k$ if her demand $d_k$ is satisfied (i.e., $x_k = 1$). 
Evidently, {\sc CKP} is also NP-complete, because the classical 1-dimensional knapsack problem ({\sc 1DKP}) is a special case. %when we set all $d_k^{\rm I} = 0$.  

We define a class of sub-problems for {\sc CKP}, by restricting the maximum phase angle (i.e., the argument) of any demand. In particular, we will write {\sc CKP}$[\phi_1,\phi_2]$ for the restriction of problem {\sc CKP} subject to $\phi_1 \le \max_{k \in \cN}{\rm arg}(d_k)$ $\le \phi_2$, where ${\rm arg}(d_k)\in [0,\pi]$. We remark that in the realistic settings of power systems, the active power demand is positive (i.e., $d_k^{\rm R} \ge 0$), but the power factor (defined by $\frac{d^{\rm R}_k}{|d_k|}$) is bounded by a certain threshold, which is equivalent to restricting the argument of complex-valued demands. % In most settings, the power factor $\frac{d^{\rm R}_k}{|d_k|} \ge 0.8$, and hence, we obtain ${\rm arg}(d_k)\le \frac{\pi}{4}$.

From the computational point of view, we will need to specify how the inputs are described. Throughout the paper we will assume that each of the demands is given by their real and imaginary components, represented as rational numbers.
\iffalse
\footnote{we may also consider other representations, such as the {\it polar representation}, and the conversion between the two representations can be done using standard numerical approximations}, and that the capacity parameter $C$ is also rational. 
Furthermore, the formulation of {\sc CKP} can be generalized to mixed settings with both elastic and inelastic demands in later sections by two models: 1) elastic demands modeled by linear utility functions w.r.t. demands and 2) multi-minded preferences over a feasible set of demands.
\fi

%\vspace{-5pt}
\subsection{Non-single-minded Complex Knapsack Problem (strategic version)}
In this paper, we extend the single-minded {\textsc CKP} to general {\it non-single-minded} version, and then we apply the well-known {\it VCG-mechanism}, or equivalently the framework of {\it maximal-in-range} mechanisms \cite{NR07}. The non-single-minded version is defined as follows. By a slight abuse of notation, we denote $v_k(\cdot)$ as a valuation function for the non-single-minded setting. As above we assume a set $\cN$ of $n$ users: user $k$ has a valuation function $v_k(\cdot):\cD\to\RR_+$ over a (possibly infinite) set of demands $\cD\subseteq\CC$. We assume that $\bzero\in\cD$, $v_k(\bzero)=0$ for all $k\in\cN$, and w.l.o.g., $|d|\le C$ for all $d\in\cD$. We further assume that each $v_k(\cdot)$ is {\it monotone} with respect to a partial order ``$\preceq$'' defined on the elements of $\CC$ as follows: for $d,f\in\CC$, $d\succeq f$ if and only if
{\small
$$
|d^{\rm R}|\ge |f^{\rm R}|,|d^{\rm I}|\ge |f^{\rm I}|,{\rm sgn}(d^{\rm R}) = {\rm sgn}(f^{\rm R}),{\rm sgn}(d^{\rm I}) = {\rm sgn}(f^{\rm I}).
$$
}
(See Fig.~\ref{fig:monotone} for pictorial illustration.)
We assume $\bzero\preceq d$ for all $d\in\cD$. Then for all $k\in\cN$, the monotonicity of $v_k(\cdot)$ means that $v_k(d)\ge v_k(f)$ whenever $d\succeq f$.
\begin{figure}[!ht]
	%\vspace{-5pt}
	\begin{center} 
		\includegraphics[scale=0.8]{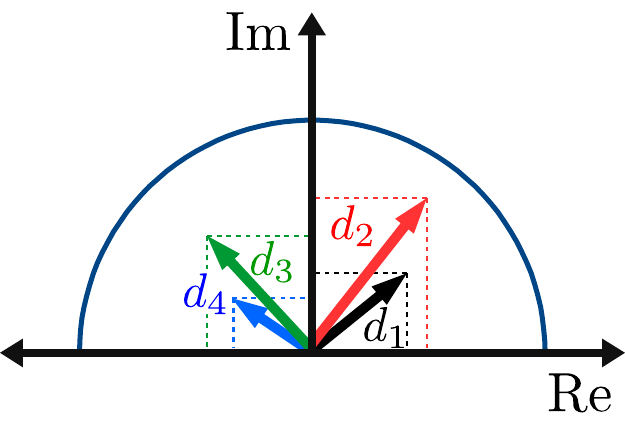}
	\end{center} 
	%\vspace{-15pt}
	\caption{A pictorial illustration for the partial order ``$\preceq$'': $d_1\preceq d_2$ and $d_4\preceq d_3$.}
	\label{fig:monotone}
\end{figure} 

The non-single-minded problem can be described by the following program (in the variables $d_k$): 
\begin{eqnarray}
\textsc{(NsmCKP)}&\displaystyle \max  \sum_{k\in\cN}v_k (d_k) \label{(CV-nsm-KS)}\\
\text{s.t.} & \displaystyle \bigg|\sum_{k\in\cN} d_k\bigg|\le C \label{nsm-CV1}\\
& d_k\in\cD \text{ for all }k\in \cN, \label{nsm-CV2}
\end{eqnarray}
where $ |\sum_{k\in\cN} d_k | =\sqrt{(\sum_{k\in\cN} d_k^{\rm R} )^2 + (\sum_{k\in\cN} d_k^{\rm I} )^2}$.
Of particular interest is the {\it multi-minded} version of the problem ({\sc MultiCKP}), defined as follows. Each user $k\in\cN$ is interested only in a {\it polynomial-size} subset of demands $D_k\subseteq \cD$ and declares her valuation only over this set  (that is, 
$D_k$ is a set of 2-dimensional vectors of size $|D_k|=\poly(n)$). Note that the multi-minded problem can be modeled in the form \textsc{(NsmCKP)} by assuming w.l.o.g. that $\bzero\in D_k$, for each user $k\in\cN$, and defining the valuation function $v_k(\cdot):\cD\to\RR_+$ as follows: 
\begin{equation}\label{mm-val}
v_k(d)=\max_{d_k\in D_k}\{v_k(d_k):~d_k\preceq d \}.
\end{equation}
We shall assume that the demand set of each user lies completely in one of the quadrants, namely, either $d^{\rm R}\ge 0$ for all $d\in D_k$, or $d^{\rm R}< 0$ for all $d\in D_k$.  
This assumption is needed  for the results in Sec.~\ref{sec:tbp}, as we will see, the problem is split into two independent {\sc $2$-DKP} problems based on the demands' quadrants, then each user $k$ is allocated by one demand in the quadrant where all her demands $D_k$ lie. (As we will see in Sec.~\ref{sec:app}, $d^{\rm R}\ge 0$ corresponds to an inductive load, while $d^{\rm I}$ corresponds to a capacitive load.)
%This assumption is only needed in Sec.~\ref{sec:ckp1} to ensure all demands lie in the first quadrant which is necessary to apply the PTAS (see Secs.~\ref{sec:ptas}, \ref{sec:truthful-ptas}), otherwise the FPTAS described in Sec.~\ref{sec:tbp} is applied.
Note that the single-minded version (which is \textsc{CKP}) is special case, where $|D_k|=1$ for all $k$. 
When we consider strategic users, we will assume that the users can lie about their demand sets and/or valuation functions (as long as the demand set of each user lies completely in one of the quadrants).

We will write {\sc MultiCKP}$[\phi_1,\phi_2]$ for the restriction of the problem  subject to $\phi_1 \le \phi \le \phi_2$ for all $d \in \cD$ where $\phi\triangleq\max_{d\in\cD}{\rm arg}(d)$ (and as before we assume ${\rm arg}(d)\ge 0$). 

\iffalse
From a complexity point of view, we may assume that each point in $\cD$ is a rational number with a polynomial size (in $n$), and hence by the above bounds, we may also assume that the set $\cD$ is finite (though possibly exponentially large). 
\fi

\subsection{Non-single-minded Multidimensional Knapsack Problem}\label{MCMDKS-sec}
To design truthful mechanisms for \textsc{NsmCKP}, it will be useful to consider the {\it non-single-minded multidimensional knapsack} problem\footnote{Sometimes, this is also called the {\it multiple-choice knapsack problem}.} (\textsc{Nsm-$m$DKP}) defined as follows, where we assume more generally that $\cD\subseteq\RR_+^m$ and a {\it capacity vector} $c\in\RR_+^m$ is given. As before, a valuation function for each user $k$ is given by \raf{mm-val}. %, where the relation "$\preceq$ is the component-wise inequality relation "$\leq$". 
An {\it allocation} is given by an assignment of a demand $d_k=(d_k^1,...,d_k^m)\in\cD$ for each user $k$, so as to satisfy the $m$-dimensional capacity constraint $\sum_{k\in\cN}d_k\le c$. The objective is to find an allocation $\bd=(d_1,\ldots,d_n)\in\cD^n$ so as to maximize the sum of the valuations $\sum_{k\in\cN}v_k(d_k)$. The problem can be described by the following program:

\begin{eqnarray}
\textsc{(Nsm-$m$DKP)} & \displaystyle \max\sum_{k\in\cN}v_k (d_k)& \label{(mCmD-KS)}\\
\text{ s.t. } & \displaystyle \sum_{k\in\cN} d_k &\le  c \label{mCmD-1}\\
& d_k\in\cD \text{ for all }k\in \cN. \label{mCmD-3}
\end{eqnarray}

Similarly, we consider the {\it multi-minded} version of the problem ({\sc Multi-$m$DKP}): each user $k\in\cN$ is interested only in a {\it polynomial-size} subset of demands $D_k\subseteq \cD$ and declares her valuation only over this set. The multi-minded problem can be modeled in the form \textsc{Nsm-$m$DKP} by assuming w.l.o.g. that $\bzero\in D_k$, for each user $k\in\cN$, and defining the valuation function $v_k(\cdot):\cD\to\RR_+$ as 
$v_k(d)=\max_{d_k\in D_k}\{v_k(d_k):~d_k\preceq d \}$,
where ``$\preceq$" is a component-wise partial order.

It is worth noting that \textsc{Nsm-$m$DKP} is similar to multi-unit combinatorial auctions (CA) with $m$ distinct goods; the difference is that in the latter problem the set $\cD$ is restricted to be integral, whereas we do not assume this restriction in \textsc{Nsm-$m$DKP}.

\iffalse
Note that this program can be solved in polynomial time (to within an additive error of $\epsilon$) by using the Ellipsoid method on the dual problem:
\begin{eqnarray}
\min & c^Ty+\sum_{k\in\cN}z_k& \label{(mCmD-dualKS)}\\
\text{ s.t. } & y^Td+z_k&\ge  v_k(d), ~~~~\text{ for all }k\in\cN\text{ and }d\in\cD \label{mCmD-dual1}\\
 & y_j&\ge 0, ~~~~\forall j=1,\ldots,m, ~~~~z_k\ge 0, ~~~~\forall k\in\cN.  \label{mCmD-dual2}
\end{eqnarray}
since the assumed demand oracle can be used as a separation oracle. 
Again, we will restrict our attention again to the case when $\cD$ is a rational set, and hence can be assumed finite. 
Note that the multi-minded valuations~\raf{mm-val} admit a demand oracle: it easy to see that for any $y\in\RR^m_+$, $\max_{d\in\cD}(v_k(d)-y^Td)=\max_{d\in D_k}(v_k(d)-y^Td)$ (Indeed, let $d^*\in\argmax_{d\in\cD}(v_k(d)-y^Td)$; then there exists $\bar d\in D_k$ such that $d^*\ge \bar d$ and $v_k(d^*)=v_k(\bar d)$; since $y\ge 0$, $v_k(d^*)-y^Td^*\le v_k(\bar d)-y^T\bar d\le\max_{d\in D_k}(v_k(d)-y^Td)$). 
\fi

\subsection{Approximation Algorithms}

We present an explicit definition of approximation algorithms for our problem.
A feasible allocation satisfying \raf{nsm-CV1} is represented by a vector $\bd=(d_1,\ldots,d_n) \in \cD^n$. When no demand is allocated to user $k$, we assume $d_k=\bzero$. We write $v(\bd)\triangleq\sum_{k\in\cN}v_k(d_k)$.
Let $ \bd^\ast$ be an optimal allocation of \textsc{NsmCKP} (or \textsc{MultiCKP}) and $\OPT \triangleq v(\bd^\ast)$ be the corresponding total valuation. We are interested in polynomial time algorithms that output an allocation that is within a factor $\alpha$ of the optimum total valuation, but may violate the capacity constraint by at most a factor of $\beta$:  
\begin{definition}
For $\alpha\in(0,1]$ and $\beta\ge 1$, a bi-criteria $(\alpha,\beta)$-approximation to \textsc{NsmCKP} is an allocation $(d_k)_{k} \in \cD^n$ satisfying 
\begin{eqnarray}
& \displaystyle \Big|\sum_{k\in\cN}d_k\Big| \le \beta \cdot C \label{C1'}\\
\text{such that}\qquad & \displaystyle \sum_{k\in\cN}v_k(d_k) \ge \alpha \cdot \OPT.
\end{eqnarray}
Similarly we define an $(\alpha,\beta)$-approximation to \textsc{MultiCKP}. 
\end{definition}
In particular, a {\em polynomial-time approximation scheme} (PTAS) is a $(1-\epsilon,1)$-approximation algorithm for any $\epsilon>0$.  The running time of a PTAS is polynomial in the input size for every fixed $\epsilon$, but the exponent of the polynomial may depend on $1/\epsilon$.  
\iffalse
One way of addressing this is to define the {\em efficient polynomial-time approximation scheme} (EPTAS), whose running time is the multiplication of a function in $1/\epsilon$ and a polynomial in the input size independent of $\epsilon$. 
\fi
An even stronger notion is a {\em fully polynomial-time approximation scheme} (FPTAS), which requires the running time to be polynomial in both input size and $1/\epsilon$. 
In this paper, we are interested in an FPTAS in the {\it resource augmentation model}, which is a $(1, 1+\epsilon)$-approximation algorithm for any $\epsilon>0$, with the running time being polynomial in the input size and $1/\epsilon$. We will refer to this as a $(1,1+\epsilon)$-FPTAS.

%\vspace{-5pt}
\subsection{Truthful Mechanisms}\label{sec:model.truth}

This section follows the terminology of \cite{NRTV07}.
We define truthful (aka. incentive-compatible) approximation mechanisms for our problem. We denote by $\cX\subseteq\cD^n$ the set of {\it feasible allocations} in our problem (\textsc{NsmCKP} or \textsc{Multi-$m$DKP}). 
\iffalse
For notational convenience, we write $\beta\cdot {\cal \cX} \triangleq \{\beta\cdot x \mid x\in\cX,~x$ satisfies \raf{nsm-2}$\}$ for the set of allocations violating the capacity constraint by a factor of $\beta$. 
\fi

\begin{definition}[Mechanisms]%\label{d3}
Let $\cV\triangleq\cV_1\times\cdots\times\cV_n$, where $\cV_k$ is the set of all possible valuations of agent $k$. 
A mechanism $(\cA,\cP)$ is defined by an allocation rule $\cA:\cV\to\cX$ and a payment rule $\cP:\cV\to\RR^n_+$. We assume that the utility of player $k$, under the mechanism, when it receives the vector of bids $v\triangleq(v_1,\ldots,v_n)\in\cV$, is defined as $U_k(v)\triangleq\bar v_k(d_k(v))-p_k(v)$, where $\cA(v)=(d_1(v),\ldots,d_n(v)),$ and $\cP(v)=(p_1(v),\ldots,p_n(v))$ and $\bar v_k$ denotes the true valuation of player $k$.   
\end{definition}
Namely, a mechanism defines an allocation rule and payment scheme, and the utility of a player is defined as the difference between her valuation over her allocated demand and her payment.

\begin{definition}[Truthful Mechanisms]%\label{d4}
A mechanism is said to be {\it truthful} if for all $k$ and all $v_k\in\cV_k$, and $v_{-k}\in\cV_{-k}$, it guarantees that $U_k(\bar v_k,v_{-k})\geq U_k(v_k,v_{-k})$. 
\end{definition}
Namely, the utility of any player is maximized, when she reports the true valuation. 
\begin{definition}[Social Efficiency]%\label{d4}
\iffalse
A mechanism is {\it individually rational} if the utility of a truth-telling bidder under the mechanism is non-negative (with probability 1): $U_k(\bar v_k,v_{-k})\ge 0$ for all $v_{-k}\in\cV_{-k}$.  A mechanism is said to have {\it no positive transfer} if all payments are non-negative (with probability 1). 
\fi
A mechanism is said to be {\it $\alpha$-socially efficient} if for any $v\in\cV$, it returns an allocation $\bd\in\cX$ such that the  total valuation (also called {\it social welfare}) obtained is at least an $\alpha$-fraction of the optimum: $v(\bd)\ge\alpha \cdot 
\OPT$. 
\end{definition}
As in \cite{NR07,DN10,KTV13}, our truthful mechanisms are based on using {\it VCG payments} with {\it Maximal-in-Range} (MIR) allocation rules:
\begin{definition}[MIR]\label{d5}
An allocation rule $\cA:\cV\to\cX$ is an MIR, if there is a range $\cR\subseteq\cX$, such that for any $v\in\cV$, $\cA(v)\in\argmax_{\bd\in\cR}v(\bd)$.
\end{definition}
Namely, $\cA$ is an MIR if it maximizes the social welfare over a fixed ({\it declaration-independent}) range $\cR$ of feasible allocations. It is well-known (and also easy to prove by a VCG-based argument) that an MIR, combined with VCG payments (computed with respect to range $\cR$), yields a truthful mechanism. If, additionally, the range $\cR$ satisfies: $\max_{\bd\in\cR}v(\bd)\ge\alpha\cdot \max_{\bd\in\cX}v(\bd)$, then such a mechanism is also $\alpha$--socially efficient.    

Finally a mechanism is {\it computationally efficient} if it can be implemented in polynomial time (in the size of the input).

\subsection{Application to Power Allocation in Alternating Current Electric Systems} \label{sec:app}

% rotation
Conventionally, the demands in (AC) alternating current electric systems are represented by active power in positive real numbers and reactive power in (positive or negative) imaginary real numbers, which are complex numbers in the first and fourth quadrants of the complex plane. We note that our problem is invariant, when the arguments of all demands are shifted by the same angle. For convenience of notation, we assume the demands are rotated by $90$ degrees unless all demands are entirely in the first quadrant of the complex plane\footnote{{  Note that it is customary in the power engineering literature to assume that the demands lie in the first and fourth quadrants of the complex plane. However, for convenience of presentation, we prefer to work in the first and second quadrants. For instance, if all the demands are capacitive (i.e., lie in the fourth quadrant), then rotation by $90$ degrees allows us to assume that the all numbers involved are {\it non-negative}, a property which is necessary for obtaining a PTAS.}}. See Fig.~\ref{fig:rotate} for a pictorial illustration. Notice that the axis labels are swapped after the rotation, the real axis indicates the reactive power while the imaginary indicates the active power.

\begin{figure}[!ht]
	%\vspace{-5pt}
	\begin{center} 
		\includegraphics[scale=0.8]{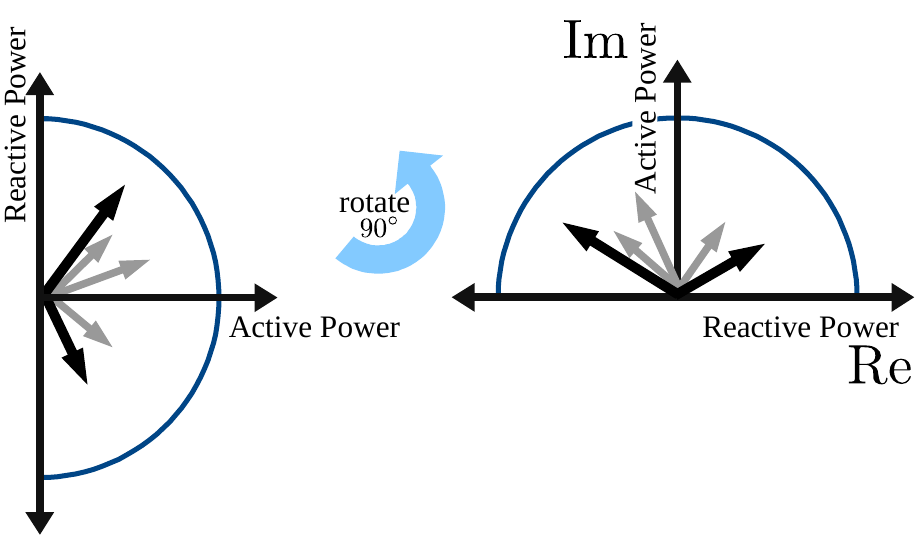}
	\end{center} 
	%\vspace{-15pt}
	\caption{Each vector represents a power demand $d$. The figure shows that all demands are rotated by $90^\circ$. }
	\label{fig:rotate}
\end{figure} 

{\sc CKP} is a simplified model of the real-world AC electric systems, which considers the capacity constraint of a single link (e.g., a single bottleneck). There are practical scenarios, where the single-link capacity constraint is critical. For example, in a microgrid, there is usually a single transmission line connecting the main grid and the microgrid. In such a setting, our model can capture power allocation considering the capacity of transmission bottleneck.
We also remark that our single-link model is fundamental to general network setting with multiple capacitated links. A thorough understanding of the single-link case can pave the way to solving the multi-link case. As a follow-up study, our recent paper \cite{MCK16} considers non-strategic power allocation of power flow of inelastic demands in a power network, which is based on some of the fundamental results obtained in this paper.

In the conventional models of AC electric systems, there are multiple operating constraints, in addition to capacity constraint. One example is the nodal voltage constraint at customers. In our recent study of event-based demand response management in microgrids \cite{KKCEZ16}, we evaluate the changes of nodal voltage in response to power allocation decisions of customers. We observe that the voltage is less sensitive to the decisions of power allocation. Hence, {\sc CKP} is a suitable model for approximating the power allocation in such a setting.

We remark that the simplified DistFlow model \cite{baran1989sizing,baran1989placement,low2014convex1} is a well-known approximation model of power flows by ignoring the loss terms from the formulation. In fact, {\sc CKP} is equivalent to the simplified DistFlow model on a capacitated single link topology. 

Naturally, {\sc MultiCKP} models combinatorial power auctions in AC electric systems.
Each customer $k$ declares to the utility company a set of demands $D_k$ that represents her preferences among different alternatives of load profiles, and a  valuation function $v_k$ over $D_k$.  
% Each power demand has its required minimum reactive power requirement  as well as minimum active power demand requirement.
The valuation $v_k(d)$ represents the amount customer $k$ is welling to pay if her load profile $d\in D_k$ is satisfied. 
If  customer $k$ wants to bid for a load that represents multiple appliances at once, she can include her corresponding vector sum to the set $D_k$ as an additional preference.
The monotonicity of $v_k$ with respect to the partial order ``$\preceq$'' implies the free disposal of extra supplied power (see Fig.~\ref{fig:monotone}). 
Larger active power (imaginary component) should have at least the value of smaller active power.
Conventionally, capacitive loads have negative reactive power, while inductive loads have positive reactive power.
Monotonicity implies that extra supplied reactive power of the same type (capacitive or inductive) should have at least the same valuation.
Indeed, customers can define constant valuation for different values of reactive power. 
We remark that all demands in the set $D_k$ are assumed to be either in the first quadrant or the second, but not in both. This in fact implies the load profiles of each customer are either only capacitive, or only inductive. Such mild assumption is actually needed in order to apply the $(1,1+\epsilon)$-FPTAS in Sec.~\ref{sec:tbp}.

%\vspace{-5pt}

\section{Hardness of Power Allocation in AC Electric Systems}\label{sec:hardness}

%In this section, we complete the approximation results by presenting the hardness results for  \textsc{CKP$[0,\frac{\pi}{2}+\delta]$}, which is inapproximable without violating Constraint~\raf{C1} (i.e., there exists no $(\alpha,1)$-approximation for \textsc{CKP$[0,\frac{\pi}{2}+\delta]$}). Moreover, when $\phi$ approaches $\pi$,  there is no $(\alpha,\beta)$-approximation for \textsc{CKP$[0,\pi-\delta]$}, where $\delta$ is very small. Recall that $\theta = \max\{\phi - \frac{\pi}{2},0\}$.

In this section, we present our main hardness result for \textsc{CKP}, which depends on the maximum angle $\phi$ the demands make with the positive real axis. When $\phi\in[\frac{\pi}{2}+\delta,\pi]$,  we show that the problem is inapproximable within any polynomial factor if we do not allow a violation of Constraint~\raf{C1}. % (i.e., there exists no $(\alpha,1)$-approximation for \textsc{CKP$[0,\frac{\pi}{2}+\delta]$}). 
Moreover, when $\phi$ approaches $\pi$, there is no $(\alpha,\beta)$-approximation, for any $\alpha$ and $\beta$ with polynomial bit length. 
Our hardness results indicate that the approximability of the problem \textsc{CKP} differs depending on maximum argument of any demand $\phi$. This insight suggests to study different techniques in the later sections to achieve the best approximation result possible for each case.

%
%\begin{theorem}\label{hard}
%Assume that all demands are on the real line, except one demand $d_{m+1}$ such that ${\rm arg}(d_{m+1}) = \frac{\pi}{2}+\theta$. Unless P=NP, we have
%\begin{itemize}
%\item For any $\theta \in (0,\frac{\pi}{2}]$, there is no $(\alpha,1)$-approximation for \textsc{CKP$[0,\frac{\pi}{2}+\delta]$} where $\alpha$ has polynomial length.
%\item There exists $\theta_0>0$, such that for any $\theta\in(\frac{\pi}{2}-\theta_0,\frac{\pi}{2}]$, there is no ($\alpha$, $\beta$)-approximation for \textsc{CKP$[0,\pi-\delta]$}, where $\alpha$ and $\beta$ have polynomial length, and $\theta_0$ depends exponentially on $n$.
%\end{itemize} 
%\end{theorem} 

\

\begin{theorem}\label{hard}
Unless P=NP, for any $\delta>0$ and $\delta'>0$ 
\begin{itemize}
\item[(i)] there is no $(\alpha,1)$-approximation for \textsc{CKP$[\frac{\pi}{2}+\delta,\pi]$} where $\alpha,\delta$ have polynomial length.
\item[(ii)] there is no ($\alpha$, $\beta$)-approximation for \textsc{CKP$[\pi-\delta',\pi]$}, where $\alpha$ and $\beta$ have polynomial length, and $\delta'$ is exponentially small in $n$.
\end{itemize}

\end{theorem}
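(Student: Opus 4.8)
The plan is to reduce from a suitable \np-hard problem — the natural candidate is \textsc{Partition} (or \textsc{Subset-Sum}) — and to exploit the geometry of the magnitude constraint $|\sum_k d_k x_k| \le C$ when demands may have real parts of opposite sign. The key idea is that when $\phi > \frac{\pi}{2}$, we can place demands in the first and second quadrants so that their real parts can \emph{cancel}. Concretely, I would encode each element $a_k$ of a \textsc{Partition} instance $\{a_1,\dots,a_n\}$ with $\sum_k a_k = 2S$ as a demand $d_k$ whose real part is $\pm a_k$ (scaled) and whose imaginary part is some small fixed quantity, chosen so that $\arg(d_k)$ is very close to $\frac{\pi}{2}$ (from above, i.e.\ real part slightly negative) — or, to get both signs available, add for each $k$ a ``mirror'' demand with real part $+a_k$. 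Give every demand value $v_k = 1$. Then a solution attains the full value $n$ (i.e.\ picks one demand per element) \emph{and} keeps $|\sum d_k x_k|$ small precisely when the chosen signs make the total real part zero, which happens iff the \textsc{Partition} instance is a YES-instance. If it is a NO-instance, the minimum achievable $|$real part$|$ is at least $1$ (integrality), forcing $|\sum d_k x_k| \ge $ something strictly larger than the capacity $C$ we set.

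For part~(i): I would set $C$ equal to the magnitude one gets when the real parts cancel exactly (so $C$ is determined only by the accumulated imaginary parts), and argue that in a NO-instance \emph{any} allocation collecting value more than, say, $n - 1$ must pick nearly one demand per element and hence have residual real part $\ge 1$, pushing the magnitude above $C$. Thus an $(\alpha,1)$-approximation with $\alpha > 1 - \frac1n$ (in particular any $\alpha$ of polynomial bit length bounded away from $0$ by a polynomial, after amplifying by taking many disjoint copies) would decide \textsc{Partition}. The amplification trick — replicating the instance $t$ times on disjoint user sets with a shared capacity, or boosting the valuation gap — is what lets us go from ``no PTAS-like constant'' to ``no $(\alpha,1)$-approximation for any $\alpha$ of polynomial length''; one has to be a little careful that the magnitude constraint, being an $\ell_2$-ball rather than a box, still composes correctly across copies (placing the copies along distinct imaginary offsets, or rescaling, handles this). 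The constraint $\phi \le \pi$ with $\phi \ge \frac\pi2 + \delta$ is met by choosing the imaginary parts small enough relative to the real parts; since $\delta$ is only required to have polynomial length, polynomially-small imaginary components suffice, and all numbers involved stay of polynomial bit length.

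For part~(ii): here I want to rule out \emph{any} $(\alpha,\beta)$-approximation with $\alpha,\beta$ of polynomial bit length, at the cost of letting $\phi$ be within $\delta'$ of $\pi$ for $\delta'$ exponentially small. The mechanism is the same cancellation phenomenon, but now pushed to the extreme: make the imaginary parts exponentially small (so $\arg(d_k)$ is within $\delta'$ of $\pi$), so that when the real parts cancel the magnitude $C$ is exponentially tiny, while in a NO-instance the residual real part is $\ge 1$, a \emph{factor} exponentially larger than $C$. Then no polynomial-bit-length $\beta$ can bridge that gap, and since the valuation gap is also present (a polynomial-bit-length $\alpha$ cannot be satisfied either, after the same amplification), we get the claimed bi-criteria inapproximability. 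The main subtlety to check is that the exponentially small imaginary parts can still be represented with polynomially many bits — they can, e.g.\ $2^{-\poly(n)}$ — so the whole instance remains polynomial-size, which is essential for the reduction to run in polynomial time.

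The step I expect to be the main obstacle is making the amplification genuinely go through for the $\ell_2$ (magnitude) constraint rather than an $\ell_\infty$ box constraint, and simultaneously controlling \emph{both} coordinates: we need the ``good'' solutions to keep the magnitude at exactly (or below) $C$ while the ``bad'' solutions are forced strictly above $\beta C$, and we need this separation to survive when we amplify to defeat arbitrary polynomial-length $\alpha$. Getting the imaginary offsets of the different copies, and the capacity bookkeeping, to cooperate — so that combining a YES-solution in each copy still yields total magnitude $\le C$ — is the delicate part; I would likely arrange the copies so their imaginary contributions add up in a controlled way (e.g.\ all copies share the same tiny imaginary structure, or are stacked with geometrically shrinking offsets) and verify the triangle-inequality estimates carefully. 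Everything else (choice of $C$, the integrality lower bound of $1$ on a nonzero residual in a NO-instance, bounding $\arg(d_k)$ via the ratio of imaginary to real parts) is routine.
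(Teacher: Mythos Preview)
Your high-level instinct---reduce from \textsc{Subset-Sum}/\textsc{Partition} and exploit real-part cancellation once $\phi>\tfrac{\pi}{2}$---is exactly right and is what the paper does. But the concrete construction you sketch has two genuine problems, and the paper's proof resolves both with a single clean device that you are missing.

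First, the ``mirror demand'' encoding is broken: if for each element $a_k$ you introduce demands with real parts $+a_k$ and $-a_k$ (each of value $1$), nothing in \textsc{CKP} prevents a solver from picking \emph{both}, achieving perfect cancellation regardless of whether the \textsc{Partition} instance is a YES or NO. You would need an exogenous ``one per pair'' constraint that the problem does not supply; relying on the imaginary-part budget to enforce it does not work (you can pick both copies from some pairs and none from others). Second, and more fundamentally, the amplification step you flag as ``the main obstacle'' really is one: disjoint copies of a gadget compose badly under a single $\ell_2$ (magnitude) constraint, and the ideas you list (imaginary offsets, geometric shrinking) do not obviously produce a clean YES$/$NO gap of arbitrary polynomial-length $\alpha$. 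You have not shown how to get past this, and I do not see that your outline does.

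The paper avoids both issues simultaneously by building the $\alpha$-gap directly into the valuations rather than amplifying. Given \textsc{Subset-Sum} integers $a_1,\dots,a_m$ with target $B$, it sets $d_k=a_k$ (on the positive real axis) with tiny value $v_k=\tfrac{\alpha}{m+1}$ for $k\le m$, and a single special demand $d_{m+1}=-B+\mathbf{i}\,B\cot\theta$ with value $v_{m+1}=1$; the capacity is $C=B\cot\theta$. Now \emph{any} allocation of value $\ge\alpha$ must include $d_{m+1}$ (the others together are worth $<\alpha$), which forces $(\sum_{k\le m}a_kx_k-B)^2\le B^2\cot^2\theta(\beta^2-1)$. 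For $\beta=1$ this is $0$, so feasibility of \textsc{Subset-Sum} is decided---part~(i), with no amplification and no mirror demands. For part~(ii), one simply takes $\theta$ large enough that $B^2\cot^2\theta(\beta^2-1)<1$; since $B$ is exponential in hard \textsc{Subset-Sum} instances and $\beta$ has polynomial length, this pushes $\arg(d_{m+1})=\tfrac{\pi}{2}+\theta$ to within an exponentially small $\delta'$ of $\pi$, exactly as the statement allows. The ``concentrate all the value in one forcing demand'' trick is the missing idea in your plan.
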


\

{\bf Remark.} In fact, these hardness results hold even if we assume that all demands are on the real line, except one demand $d_{m+1}$ such that ${\rm arg}(d_{m+1}) = \frac{\pi}{2}+\theta$, for some $\theta\in[\delta,\frac{\pi}{2}]$ (see Fig.~\ref{fig:theta}). Note that the trivial approximation of picking the user with the highest feasible value does not obtain $\tfrac{1}{n}$ approximation, because we allow demands to have both positive and negative real parts, which can cancel each other.

\begin{figure}[!htb]
	\begin{center}
		\includegraphics[scale=0.8]{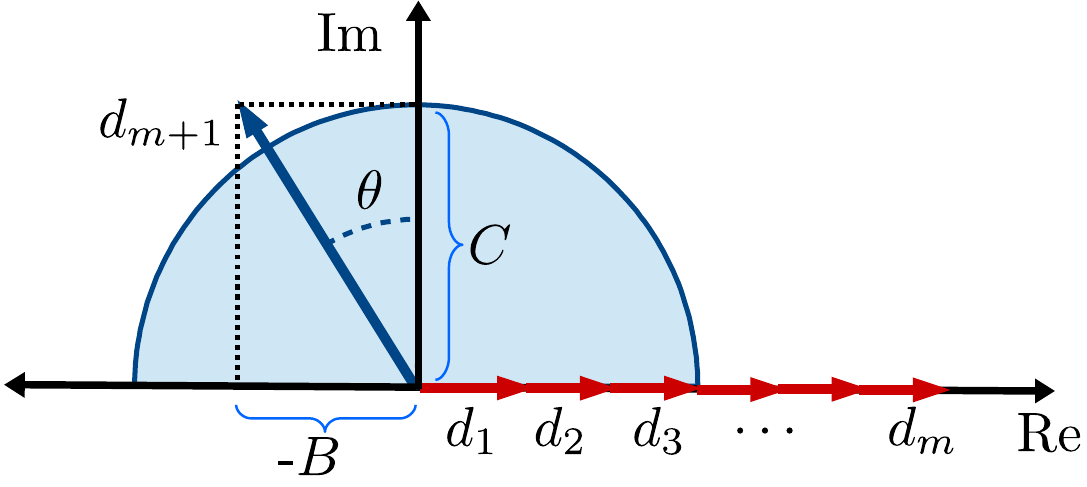}
	\end{center}
\caption{The set of demands $\{d_k\}$ for the proof of Theorem~\ref{hard}.}
	\label{fig:theta}
\end{figure}

%In other words, Algorithm {\sc CKP-biFPTAS} is the best approximation algorithm possible for \textsc{CKP$[\frac{\pi}{2},\pi-\delta]$}, though it cannot be extend to \textsc{CKP$[\pi-\delta,\pi]$}, for which there is no bi-criteria approximation algorithm.

\begin{proof}
We present a reduction from the (weakly) NP-hard subset sum problem ({\sc SubSum}): given an instance $I$, a set of positive integers ${A} \triangleq\{a_1,\ldots,a_m\}$ and a positive integer $B$, does there exist a subset of ${A}$ that sums-up to exactly $B$? 
We assume that $B$ is not polynomial in $m$, otherwise the problem can be easily solved in polynomial time by  dynamic programming.

We construct an instance $I'$ of  \textsc{CKP$[\frac{\pi}{2}+\theta,\frac{\pi}{2}+\theta]$} for each instance  $I$ of {\sc SubSum} such that if {\sc SubSum}$(I)$ is a ``yes'' instance then the optimum value of  \textsc{CKP$[\frac{\pi}{2}+\theta,\frac{\pi}{2}+\theta]$}, denoted by $\OPT$, is at least $1$; and if {\sc SubSum}$(I)$ is a  ``no" instance, then $\OPT < \alpha$ even when Cons.~\raf{C1} can be violated by $\beta$.

We define $n\triangleq m+1$ demands: for each $a_k, k= 1, ..., m$, define a demand $d_k \triangleq a_k$, and an additional demand
 $$d_{m+1} \triangleq -B + {\bf i} B \cot \theta.$$
For all $k = 1, ..., m$, let valuation $v_k \triangleq \frac{\alpha}{m+1}$, and $v_{m+1} \triangleq 1$. We let 
$$C \triangleq B \cot \theta.$$

We prove the first direction, assuming {\sc SubSum}($I$) is feasible. Namely, $\sum_{k = 1}^{m}a_k \hat{x}_k = B$, where $\hat x \in \{0,1\}^m$ is a solution vector of {\sc SubSum}. Construct a  solution $x\in \{0,1\}^{m+1}$ of \textsc{CKP} such that
\begin{equation*}
x_k = \left\{
\begin{array}{l l}
\hat x_k& \quad \text{if $k = 1, ..., m$}\\
1		& \quad \text{if $k = m+1$}.
\end{array} \right.
\end{equation*}
In fact, this is a feasible solution that satisfies Constraint~\raf{C1}: using $\sum_{k = 1}^{m}a_k \hat x_k - B = 0$,  we get
\begin{align*}
&\left(\sum_{k = 1}^{m} d_k^{\rm R} x_k + d_{m+1}^{\rm R} \right)^2 + \left(\sum_{k = 1}^{m} d_k^{\rm I} x_k+ d_{m+1}^{\rm I}  \right)^2\\
 &= \left(\sum_{k = 1}^{m} d_k^{\rm R} x_k - B  \right)^2 + B^2 \cot^2 \theta\\
&= B^2 \cot^2 \theta = C^2.
\end{align*}
Since $v_{m+1}=1$, the total value of such solution $v(x)\ge 1$, which implies that $\OPT$ is at least 1.%, and hence  by the feasibility of this solution, any $(\alpha, \beta)$-approximation algorithm would return a solution of  value at least $\alpha$.

Conversely, assume that $\OPT\ge \alpha$. Let $x^*\in\{0,1\}^{m+1}$ be an optimal solution that may violate Cons.~\raf{C1} by   $\beta$.  Since user $m+1$ has valuation $v_{m+1}=1$, while the rest of users valuations total to less than $\alpha$: $\sum_{k = 1}^{m} v_k < \alpha$, user $m+1$  must be included in the optimum.
Therefore, substituting in Constraint~\raf{C1'},
\begin{equation*}
\left(\sum_{k = 1}^{m} d_k^{\rm R} x^*_k - B  \right)^2 + B^2 \cot^2 \theta \le \beta^2 C^2
\end{equation*} 
gives
\begin{align}
\left(\sum_{k = 1}^{m} a_k x^*_k - B  \right)^2 &\le \beta^2 C^2  - B^2 \cot^2 \theta \nonumber \\
&= B^2 \cot^2 \theta(\beta^2 - 1).  \label{h1}
\end{align}
By the integrality of the $a_i$'s, 
\begin{equation}\label{eq:h1.1}
\sum_{k= 1}^m a_k x^*_k = B \iff |\sum_{k= 1}^m a_k x^*_k - B|< 1
\end{equation}
In other words, {\sc SubSum} is feasible if and only if the absolute difference  $|\sum_{k= 1}^m a_k x^*_k - B|< 1$.  This implies,  {\sc SubSum}$(I)$ is feasible when the R.H.S. of Eqn. \raf{h1} is strictly less than  1. When $\beta = 1$, R.H.S. of Eqn. \raf{h1} is zero, and  we complete the second direction and hence, the proof of part (i) of the theorem.

For large enough $\theta$, the R.H.S. of Eqn.~\raf{h1} is strictly less than $1$:
$$ B^2 \cot^2 \theta(\beta^2 - 1) < 1$$
This implies, $\theta > \tan^{-1} \sqrt{B^2 (\beta^2 -1)}$. By 
Eqn.~\raf{eq:h1.1}, {\sc SubSum} is feasible which  completes the second direction and establishes part (ii) of the theorem.
	\hide{
		Next, consider the general case when $\theta$ is not rational (more precisely, when $\theta$ cannot be represented by a polynomial number of bits in $n$). For real numbers $u,R\in\RR_+$, denote by $\lfloor u \rfloor_L$ the largest integer $h\in\Z_+$ such that $h\cdot R\le u$.  Given an instance of {\sc SubSum}, defined by a positive set of integers $\{a_1,\ldots,a_m,B\}$, and assuming the existence of an $(\alpha,\beta)$-approximation algorithm for \textsc{CKP}, we define the following set of $n\triangleq m+1$  demands: $d_k\triangleq a_k$ for $k=1,\ldots,m$, and\footnote{Note that we need to truncate the (possibly) irrational numbers to within an accuracy of $\frac{1}{2B}$ to guarantee that the number of bits needed to represent the input to problem \textsc{CKP} is bounded by a polynomial in the length of the input to {\sc SubSum}.} $d_{m+1}\triangleq-B+{\bf i} \frac{L}{\beta}\lfloor B \cot\theta\rfloor_{L}$, for $\theta\in(\cot^{-1}2^n,\frac{\pi}{2}]$, where $L\triangleq\frac{\delta^2}{B 2^{n+4}}$ and $\delta\in(0,\frac{1}{2})$ is a constant. As before, we set all valuations of the first $m$ vectors to $\frac{\alpha}{m+1}$, and the valuation of the last vector to $1$. Finally, we let $C\triangleq\frac{L}{\beta}\left(\left\lfloor\sqrt{\delta^2+B^2 \cot^2\theta}\right\rfloor_{L}+1\right)$. 
		
		Suppose that the {\sc SubSum} instance is feasible, then there exists an $x\in\{0,1\}^{m+1}$, with $x_{m+1}=1$, such that $|\sum_{k\in[m]}a_kx_k-Bx_{m+1}|=0$, and hence,
		\begin{equation}
		|\sum_{k=1}^{m+1}d_kx_k| = \frac{L}{\beta}\lfloor B \cot\theta\rfloor_{L}\le \frac{1}{\beta} B \cot\theta < \frac{1}{\beta}\sqrt{\delta^2+B^2 \cot^2\theta}< \frac{1}{\beta}\left(L\left\lfloor\sqrt{\delta^2+B^2 \cot^2\theta}\right\rfloor_{L}+L\right)= C
		\end{equation}
		where we used the inequalities $u-L< L\lfloor u \rfloor_{L}\le u$, valid for all $u,R\in\RR_+$.
		It follows that $x$ is a feasible solution to \textsc{CKP$[0,\frac{\pi}{2}+\delta]$}, and hence, the $(\alpha,\beta)$-approximation algorithm
		will return a solution with value at least $\alpha$. 
		
		On the other hand, if the {\sc SubSum} instance is infeasible, then for all $x\in\{0,1\}^{m+1}$ the absolute difference $|\sum_{k\in[m]}a_kx_k-bx_{m+1}|\ge 1>\frac{1}{2}$. We claim in this case that the $(\alpha,\beta)$-approximation algorithm returns a solution $x$ with value strictly less than $\alpha$. Indeed, if such a solution has $x_{m+1}=1$, then it would satisfy 
		\begin{equation}
		\left(L\left\lfloor\sqrt{\delta^2+B^2 \cot^2\theta}\right\rfloor_{L}+L\right)^2 = \beta^2 C^2
		\ge |\sum_{k=1}^{m+1}d_kx_k|^2=\left(\sum_{k\in[m]}a_kx_k-B\right)^2+\frac{L^2}{\beta^2}\lfloor B \cot\theta\rfloor_{L}^2
		\end{equation}
		implying that
		\begin{eqnarray}
		& & \left(\sum_{k\in[m]}a_kx_k-B\right)^2 \\
		&\le & \left(L\left\lfloor\sqrt{\delta^2+B^2 \cot^2\theta}\right\rfloor_{L}+L-\frac{L}{\beta}\lfloor B \cot\theta\rfloor_{L}\right)\cdot \left(L\left\lfloor\sqrt{\delta^2+B^2 \cot^2\theta}\right\rfloor_{L}+L+\frac{L}{\beta}\lfloor B \cot\theta\rfloor_{L}\right) \nonumber\\
		&<&\left(\sqrt{\delta^2+B^2 \cot^2\theta}+L-\frac{1}{\beta}(B \cot\theta-L)\right) \cdot\left(\sqrt{\delta^2+B^2 \cot^2\theta}+L+\frac{1}{\beta} B \cot\theta\right) \\
		&\le& \delta^2+\left(1-\frac{1}{\beta^2}\right)B^2\cot^2\theta+L\left[\left(2+\frac{1}{\beta}\right)\sqrt{\delta^2+B^2 \cot^2\theta}+\frac{B}{\beta^2}\cot\theta\right]+L^2\left(1+\frac{1}{\beta}\right)
		\\ &\le& 2\delta^2+\left(1-\frac{1}{\beta^2}\right)B^2\cot^2\theta< 1
		\end{eqnarray}
		where the last two inequalities follow by our choices for $L$ and $\theta$ respectively, such that $\left(1-\frac{1}{\beta^2}\right)B^2\cot^2\theta\le 1-\delta$. It follows necessarily that in $x$ we must have $x_{m+1}=0$ and hence the total valuation obtained is at most $n\cdot\frac{\alpha}{n+1}<\alpha$. 
		
		Therefore, we conclude that (i) for any $\theta \in (\frac{\pi}{2}+\delta,\pi-\delta]$, there is no $(\alpha,1)$-approximation for \textsc{CKP$[\frac{\pi}{2}+\theta,\frac{\pi}{2}+\theta]$}, or (ii) by setting $\left(1-\frac{1}{\beta^2}\right)B^2\cot^2\theta_0\le 1-\delta$,  there is no ($\alpha$, $\beta$)-approximation for \textsc{CKP$[\frac{\pi}{2}+\theta,\frac{\pi}{2}+\theta]$}, where $\theta_0$ depends on $B$. In {\sc SubSum}, the value $B$ for NP-hard instances scales exponentially in $m$ (and hence, $n$).
	}

\end{proof}

\section{A Truthful PTAS for {\sc MultiCKP}$[0,\frac{\pi}{2}-\delta]$}\label{sec:ckp1}
In this section, we present our truthful PTAS for {\sc MultiCKP}$[0,\frac{\pi}{2}-\delta]$. This PTAS invokes a truthful PTAS for \textsc{Multi-$m$DKP} as a subroutine.
Problem \textsc{Multi-$m$DKP} was shown in \cite{KTV13} to have a $(1-\epsilon)$-socially efficient truthful PTAS in the setting of {\it multi-unit auctions with a few distinct goods}, based on generalizing the result for the case $m=1$ in \cite{DN10}. We explain this result first in our setting, and then use it in Sections~\ref{sec:ptas} and \ref{sec:truthful-ptas} to derive a truthful PTAS for {\sc MultiCKP}$[0,\frac{\pi}{2}-\delta]$. We remark that, without the truthfulness requirement, our PTAS works even for $\delta=0$. However, we are only able to make it truthful for any given, but arbitrarily small, constant $\delta>0$.  Removing this technical assumption is an interesting open question.    

\subsection{A Truthful PTAS for \textsc{Multi-$m$DKP}}\label{sec:Tm-mC-KS}
 We present a truthful PTAS for \textsc{Multi-$m$DKP} that will be needed in Sec.~\ref{sec:ptas}-\ref{sec:truthful-ptas} below. This result is a slight generalization of \cite{KTV13} to accommodate real-valued demand vectors instead of integer-valued. 
%Multiple-choice Multidimensional Knapsack Problem}

Let $c=(c^1,\ldots,c^m)$ be the capacity vector, and for any $d\in\cD\subseteq\RR^m_+$, write $d_k=(d^1_k,\ldots,d^m_k)$. For any subset of users $N\subseteq\cN$ and a partial selection of demands $\bar\bd=(d_k\in\cD:~k\in N)$, such that $\sum_{k\in N}d_k\le c$, define the vector $b_{N,\bar\bd}=(b_{N,\bar\bd}^1,\ldots,b_{N,\bar\bd}^m)\in\RR^m_+$ as follows
\begin{equation}\label{bdT}
b_{N,\bar\bd}^i=\frac{c^i-\sum_{k\in N}d_k^i}{(n-{|N|})^2}.
\end{equation}
Following \cite{DN10,NR07,KTV13}, we consider a restricted range of allocations defined as follows: 
\begin{equation}\label{range}
\cS\triangleq\bigcup_{\stackrel{N\subseteq\cN,~\bar\bd=(d_k:~k\in N):~|N|\le\frac{m}{\epsilon},}{d_k\in\cD~\forall k\in N}}\cS_{N,\bar\bd},
\end{equation}
where, for a set $N\subseteq\cN$ and a partial selection of demands $\bar\bd=(\bar d_k\in \cD:~k\in N)$,
{\small
\begin{align*}
\cS_{N,\bar\bd}&\triangleq\Big\{(d_1,\ldots, d_n)\in\cD^n~|~\sum_{k\in\cN}d_k\le c, d_k=\bar d_k\ \forall k\in N,\\
&\forall k\not\in N~\forall i\in[m]~\exists r_k^i\in\ZZ_+ \mbox{\ s.t.\ }  d_k^i=r_k^i\cdot b_{N,\bar\bd}^i\text{ and } \sum_{k\not\in N}r_k^i\le (n-|N|)^2~\Big\}.
\end{align*}}
\hspace{-0.05in}Note that the range $\cS$ {\it does not} depend on the declarations $D_1,\ldots,D_n$. The following two lemmas establish that the range $\cS$ is a good approximation of the set of all feasible allocations and that it can be optimized over in polynomial time. The first lemma is essentially a generalization of a similar one for multi-unit auctions in \cite{DN10}, with the simplifying difference that we do not insist here on demands to be integral. The second lemma is also a generalization of a similar result in \cite{DN10}, which was stated for the multi-unit auctions with a few distinct goods in \cite{KTV13}. For completeness, we give the proofs in the appendix.  
\begin{lemma}[\cite{DN10}]\label{l1-}
$\max_{\bd\in\cS}v(\bd)\ge(1-\epsilon)\OPT.$
\end{lemma}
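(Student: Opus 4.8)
The plan is to take an optimal allocation $\bd^*=(d^*_1,\dots,d^*_n)$ of value $\OPT$ and perturb it into an allocation lying in $\cS$ while losing at most an $\epsilon$-fraction of the value; since $\max_{\bd\in\cS}v(\bd)$ is at least the value of that particular allocation, this proves the lemma. The case $n\le m/\epsilon$ is immediate: taking $N=\cN$ and $\bar\bd=\bd^*$ puts $\bd^*$ itself in $\cS_{N,\bar\bd}\subseteq\cS$, so assume $n>m/\epsilon$. I would only use that each $v_k$ is monotone with respect to the (componentwise) order $\preceq$ on $\RR_+^m$; no multi-minded structure is needed for this lemma.

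First I would select $N$ to be the $\lfloor m/\epsilon\rfloor$ users with the largest values $v_k(d^*_k)$ in $\bd^*$, and freeze them at $\bar d_k=d^*_k$. A simple averaging bound then gives, for every $k\notin N$, $v_k(d^*_k)\le\frac{1}{|N|}\sum_{j\in N}v_j(d^*_j)\le\frac{\OPT}{\lfloor m/\epsilon\rfloor}=O\!\bigl(\tfrac{\epsilon}{m}\OPT\bigr)$. Let $c'^i=c^i-\sum_{k\in N}d^{*i}_k\ge0$ be the residual capacity and $b^i=b^i_{N,\bar\bd}=c'^i/(n-|N|)^2$ the grid width from~\raf{bdT}. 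For each $k\notin N$ I would round every coordinate of $d^*_k$ \emph{up} onto the grid, $\hat d^i_k:=b^i\lceil d^{*i}_k/b^i\rceil$, so that $d^{*i}_k\le\hat d^i_k<d^{*i}_k+b^i$ and $\hat d^i_k$ is an integer multiple of $b^i$. Since $\hat d_k\succeq d^*_k$ componentwise, monotonicity gives $v_k(\hat d_k)\ge v_k(d^*_k)$, so rounding up costs no value on the users outside $N$.

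The main obstacle is that this rounding can push the outside-$N$ users past the residual capacity (and hence past the cell bound in $\cS_{N,\bar\bd}$). The overflow in coordinate $i$ is $o^i:=\max\{0,\sum_{k\notin N}\hat d^i_k-c'^i\}$, and because each of the $n-|N|$ summands grew by less than $b^i$ we get $o^i<(n-|N|)b^i=c'^i/(n-|N|)$. The key point I would exploit is that whenever $o^i>0$ we have $\sum_{k\notin N}\hat d^i_k>c'^i$, so the largest single term satisfies $\max_{k\notin N}\hat d^i_k\ge\frac{1}{n-|N|}\sum_{k\notin N}\hat d^i_k>\frac{c'^i}{n-|N|}>o^i$. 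Hence, for each coordinate $i$ with $o^i>0$, dropping one user $k_i\in\argmax_{k\notin N}\hat d^i_k$ (resetting its demand to $\bzero$) already brings $\sum_{k\notin N\cup D}\hat d^i_k$ back below $c'^i$, and since removing users only decreases coordinate sums, this simultaneously preserves feasibility in every other coordinate. The discarded set $D$ has $|D|\le m$, so the value lost is at most $m\cdot O\!\bigl(\tfrac{\epsilon}{m}\OPT\bigr)=O(\epsilon\OPT)$.

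Finally, the allocation assigning $d^*_k$ to $k\in N$, $\hat d_k$ to $k\notin N\cup D$, and $\bzero$ to $k\in D$ has $\sum_k d_k\le\sum_{k\in N}d^*_k+c'=c$, places all outside-$N$ demands on the grid with $\sum_{k\notin N}r^i_k\le c'^i/b^i=(n-|N|)^2$, and therefore lies in $\cS_{N,\bar\bd}\subseteq\cS$; its value is at least $\sum_{k\notin D}v_k(d^*_k)=\OPT-\sum_{k\in D}v_k(d^*_k)\ge(1-O(\epsilon))\OPT$, and rescaling $\epsilon$ by a fixed constant at the outset yields exactly $(1-\epsilon)\OPT$. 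The only routine matters I would leave to the write-up are this constant bookkeeping (relating $\lfloor m/\epsilon\rfloor$ to $m/\epsilon$), the averaging inequality above, and the degenerate coordinates with $c'^i=0$, where every outside-$N$ user has $d^{*i}_k=0$ and nothing needs rounding in that coordinate.
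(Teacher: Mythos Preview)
Your proposal is correct and takes essentially the same approach as the paper: freeze the $\approx m/\epsilon$ highest-value users in $N$, round the remaining users' demands up to the grid, then delete at most $m$ outside-$N$ users (one per violated coordinate) to restore feasibility, bounding the value loss via the averaging inequality. The paper's version differs only cosmetically---it first assumes w.l.o.g.\ $\sum_k d_k=c$ and selects each $k_i$ as a user whose \emph{unrounded} $i$th coordinate is at least the average $c'^i/(n-|N|)$, rather than your overflow-based choice of the user with the largest \emph{rounded} $i$th coordinate, and it uses $\lceil m/\epsilon\rceil$ where you use $\lfloor m/\epsilon\rfloor$ with a rescaling remark.
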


\begin{lemma}[\cite{DN10,KTV13}]\label{l2-}
We can find $\bd^*\in\argmax_{\bd\in\cS}v(\bd)$ using dynamic programming in time $\left|\bigcup_{k}D_k\right|^{O(m/\epsilon)}$.
\end{lemma}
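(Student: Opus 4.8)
The plan is to optimise over the range $\cS$ by brute-force enumeration of its ``pinned'' part together with a coordinate-bounded dynamic program for the ``rounded'' part, exactly along the lines of \cite{DN10,KTV13}; the only differences are that we book-keep with the integer multipliers $r_k^i$ (rather than integer demand vectors), and that we must reduce the pinned coordinates to the polynomial declaration sets $D_k$. First I would show that $\max_{\bd\in\cS}v(\bd)$ is attained at some $\bd\in\cS_{N,\bar\bd}$ with $\bar d_k\in D_k$ for all $k\in N$ (recall $\bzero\in D_k$). Take an optimal $\bd^\ast\in\cS_{N,\bar\bd}$; for $k\in N$ choose $f_k\in D_k$ with $f_k\preceq\bar d_k$ and $v_k(f_k)=v_k(\bar d_k)$ (possible since, by \raf{mm-val}, $v_k$ is monotone and takes values in $\{v_k(f):f\in D_k\}$), and put $\bar\bd'\triangleq(f_k)_{k\in N}$. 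As $\sum_{k\in N}f_k^i\le\sum_{k\in N}\bar d_k^i$, the base vector $b'\triangleq b_{N,\bar\bd'}$ of \raf{bdT} satisfies $b'^i\ge b_{N,\bar\bd}^i$; writing $d_k^\ast=(r_k^i b_{N,\bar\bd}^i)_i$ for $k\notin N$ and replacing it by $d_k'\triangleq(r_k^i b'^i)_i\succeq d_k^\ast$ keeps $v_k(d_k')\ge v_k(d_k^\ast)$ by monotonicity, while $\sum_{k\notin N}d_k'^i=\sum_{k\notin N}r_k^i b'^i\le(n-|N|)^2 b'^i=c^i-\sum_{k\in N}f_k^i$ shows that the allocation which is $f_k$ on $N$ and $d_k'$ off $N$ lies in $\cS_{N,\bar\bd'}$. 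Thus it is again optimal and has pinned coordinates in $\bigcup_k D_k$.

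Consequently it suffices to enumerate all pairs $(N,\bar\bd)$ with $N\subseteq\cN$, $|N|\le m/\epsilon$, $\bar d_k\in D_k$ for $k\in N$ (at most $n^{O(m/\epsilon)}\bigl|\bigcup_k D_k\bigr|^{O(m/\epsilon)}$ of them) and, for each with $\sum_{k\in N}\bar d_k\le c$, to compute $\max_{\bd\in\cS_{N,\bar\bd}}v(\bd)$ with a maximiser. Fix such a pair and set $b\triangleq b_{N,\bar\bd}$. The key point is that, for the objective, a completion inside $\cS_{N,\bar\bd}$ is equivalent to choosing one $f_k\in D_k$ per user $k\notin N$ and ``spending'' in coordinate $i$ exactly $\rho_k(f_k)^i\triangleq\lceil f_k^i/b^i\rceil$ units of $b^i$ (when $b^i=0$, only $f_k$ with $f_k^i=0$ are admissible, and then $\rho_k(f_k)^i=0$): indeed $d_k\triangleq(\rho_k(f_k)^i b^i)_i\succeq f_k$ has $v_k(d_k)\ge v_k(f_k)$, and by definition of $b$ the bound $\sum_{k\notin N}\rho_k(f_k)^i\le(n-|N|)^2$ is exactly the capacity constraint $\sum_k d_k\le c$. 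I would therefore run a dynamic program processing the users outside $N$ one at a time, with state the committed multiplier vector $\rho\in\{0,1,\dots,(n-|N|)^2\}^m$; the table $T_j[\rho]$ holds the best value from the first $j$ of these users with multipliers summing to $\rho$, via $T_j[\rho]=\max\{T_{j-1}[\rho-\rho_{k_j}(f)]+v_{k_j}(f):f\in D_{k_j},\ \rho_{k_j}(f)\le\rho\}$. The value for $(N,\bar\bd)$ is $\sum_{k\in N}v_k(\bar d_k)+\max_\rho T_{n-|N|}[\rho]$, and the allocation is recovered by standard backtracking.

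Correctness for a fixed $(N,\bar\bd)$ is a routine two-way argument: given $\bd\in\cS_{N,\bar\bd}$, for each $k\notin N$ pick $f_k\in D_k$ with $f_k\preceq d_k$ and $v_k(f_k)=v_k(d_k)$; then $r_k^i\ge\lceil f_k^i/b^i\rceil=\rho_k(f_k)^i$, so the DP path choosing $(f_k)_{k\notin N}$ is feasible and has value $\ge v(\bd)$; conversely any DP path gives a bona fide member of $\cS_{N,\bar\bd}$ of at least the computed value. Taking the maximum over all enumerated pairs then yields $\max_{\bd\in\cS}v(\bd)$ by the first paragraph. Each dynamic program has $O(n\cdot n^{2m})=n^{O(m)}$ states and $O\bigl(\bigl|\bigcup_k D_k\bigr|\bigr)$ work per state, hence runs in $n^{O(m)}\bigl|\bigcup_k D_k\bigr|$ time; multiplying by the number of pairs gives the claimed $\bigl|\bigcup_k D_k\bigr|^{O(m/\epsilon)}$ bound (the polynomial-in-$n$ factors being absorbed into the input size as usual). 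The one step that really differs from \cite{DN10,KTV13} is this translation between real demand vectors and integer multipliers — including the $b^i=0$ corner case — together with the first-paragraph reduction of pinned coordinates to $D_k$; I do not expect it to pose a genuine obstacle, and the remainder is the standard maximal-in-range dynamic program.
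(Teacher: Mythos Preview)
Your proposal is correct and follows essentially the same approach as the paper: first reduce the pinned coordinates $\bar d_k$ to the declared sets $D_k$ via the monotonicity of $b_{N,\bar\bd}$ in the residual capacity, then enumerate all $(N,\bar\bd)$ and optimise over each $\cS_{N,\bar\bd}$ by a dynamic program whose state is the integer multiplier vector in $\{0,\dots,(n-|N|)^2\}^m$. The only cosmetic difference is that your DP transition ranges over $f\in D_{k_j}$ (with cost $\rho_k(f)=\lceil f/b\rceil$) whereas the paper's transition ranges over all integer multipliers $r_{k+1}$ directly and evaluates $v_{k+1}(b*r_{k+1})$ via~\raf{mm-val}; both give the same optimum and the same asymptotic running time.
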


It follows that an allocation rule defined as an MIR over range $\cS$ yields a $(1-\epsilon)$-socially efficient truthful mechanism for \textsc{Multi-$m$DKP}.

%\vspace{-5pt}

\subsection{A PTAS for {\sc MultiCKP}$[0,\frac{\pi}{2}]$} \label{sec:ptas}
We now apply the result in the previous section to the multi-minded complex-demand knapsack problem, when all agents are restricted to report their demands in the positive quadrant. We begin first by presenting a PTAS without strategic considerations; then it is shown in the next section how to use this PTAS within the aforementioned framework of MIR's to obtain a truthful mechanism.

\ 

{\bf Overview of Technique.}  
As we will see in Sec.~\ref{sec:tbp}, it is possible to obtain a $(1,1+\epsilon)$-approximation by a reduction to the {\sc Multi-$2$DKP} problem. To get a better result without violating the constraint, we reduce {\sc MultiCKP}$[0,\frac{\pi}{2}]$ instance to {\sc Multi-$m$DKP}.
We note that although there is a PTAS for {\sc Multi-$m$DKP} with constant $m$ \cite{FC84alg}, such a PTAS cannot be directly applied to \textsc{MultiCKP$[0,\frac{\pi}{2}]$} by polygonizing the circular feasible region for \textsc{MultiCKP$[0,\frac{\pi}{2}]$}, because one can show that such an approximation ratio is at least a constant factor. This is the case, for instance, if the optimal solution consists of a few large (in magnitude) demands together with many small demands, and it is not clear at what level of accuracy we should polygonize the region to be able to capture these small demands. To overcome this difficulty, we have to first guess the large demands, then we construct a grid (or a lattice) on the remaining part of the circular region, defining a polygonal region in which we try to pack the maximum-utility set of demands. The latter problem is easily seen to be a special case of the {\sc Multi-$m$DKP} problem. The main challenge is to choose the granularity of the grid small enough to well-approximate the optimal, but also large enough so that the number of sides of the polygon, and hence $m$ is a constant only depending on $1/\epsilon$.  %We show that this is indeed possible. 

\

In this section  we assume that ${\rm arg}(d) \le \frac{\pi}{2}$, that is, $d^{\rm R} \ge 0$ and $d^{\rm I} \ge 0$ for all $d\in\cD$. 
\iffalse
This settles an open question in \cite{YC13CKS}, which proved that there is no FPTAS for \textsc{CKP$[0,\frac{\pi}{2}]$}.
\fi
%As we shall see in Section~\ref{sec:tbp}, it is possible to get a $(1,1+\epsilon)$-approximation by a reduction to the \textsc{Multi-$2$DKP} problem. We note further that although there is a PTAS for {\sc $m$DKP} with constant $m$ \cite{FC84alg}, such a PTAS cannot be directly applied to {\sc MultiCKP}$[0,\frac{\pi}{2}]$ by polygonizing the circular feasible region for {\sc MultiCKP}$[0,\frac{\pi}{2}]$, because one can show that such an approximation ratio is at least a constant factor. This is the case, for instance, if the optimal solution consists of a few large (in magnitude) demands together with many small demands, and it is not clear at what level of accuracy we should polygonize the region to be able to capture these small demands. To overcome this difficulty, we first guess the large demands, then we construct a grid (or a lattice) on the remaining part of the circular region, defining a polygonal region in which we try to pack the maximum-value set of demands. The latter problem is easily seen to be a special case of the {\sc Multi-$m$DKP} problem. The main challenge is to choose the granularity of the grid small enough to well-approximate the optimal, but also large enough so that the number of sides of the polygon, and hence $m$ is a constant only depending on $1/\epsilon$.  %We show that this is indeed possible. 
Without loss of generality, we assume $\epsilon<\frac{1}{4}$ where $\frac{1}{\epsilon}\in\ZZ_+$.
For an integer $i\in\ZZ_+$, let $\cL_1(i)$ and $\cL_2(i)$, respectively, denote the sets of all vertical and all horizontal lines in the complex plane that are at (non-negative) distances, from the real and imaginary axes, which are integer multiples of $\frac{C}{2^i}$, that is,
\begin{eqnarray*}
\cL_1(j)&\triangleq&\{x+{\bf i}y\in \CC~|~x=\frac{\lambda C}{2^j},~\lambda\in \ZZ_+,y\in\RR\},\\
\cL_2(j)&\triangleq&\{x+{\bf i}y\in \CC~|~y=\frac{\lambda C}{2^j},~\lambda\in \ZZ_+,x\in\RR\}.
\end{eqnarray*}
%We will say that these vertical and horizontal lines are {\it at level} $i$.

Given a feasible set of vectors $T\subseteq \cD$ to \textsc{MultiCKP$[0,\frac{\pi}{2}]$} (that is, $\left|\sum_{d\in T}d\right|\le C$), define $d_T\triangleq\sum_{d \in T} d$, and let
\begin{equation}\label{wT}
w^{\rm I}_T \triangleq \sqrt{C^2 -{\rm Re}(d_T)^2} - {\rm Im}(d_T),~ w^{\rm R}_T \triangleq \sqrt{C^2 -{\rm Im}(d_T)^2} - {\rm Re}(d_T).
\end{equation}
Namely,  $w^{\rm R}_T$ (resp.  $w^{\rm I}_T$) is the horizontal (resp. vertical) distance between $d_T$ and the boundary of the constraint disk (see Fig.~\ref{f1} for an illustration).

Let $\rho_1(T)$ and $\rho_2(T)$ be the smallest integers such that 
$$
{C}\big/{2^{\rho_1(T)}}\le\frac{\epsilon w_T^{\rm R} }{4} \text{ and } {C}\big/{2^{\rho_2(T)}}\le\frac{\epsilon w_T^{\rm I}}{4}. 
$$

The set of lines in $\cL_1(\rho_1(T))\cup\cL_2(\rho_2(T))$ define a grid on the feasible region at ``vertical and horizontal levels'' $\rho_1(T)$ and  $\rho_2(T)$, respectively. We observe that if we increase $d^{\rm R}_T$, then $w^{\rm R}_T$ decreases as well as the real granularity of the grid, and hence  $\rho_1(T)$ becomes larger. Similar observation holds when we increase $d^{\rm I}_T$
 
Let $\lambda_1(T)$ and $\lambda_2(T)$ be the largest integers such that
$$
d_T^{\rm R}\ge \frac{\lambda_1(T) C}{2^{\rho_1(T)}}~\text{ and }~d_T^{\rm I}\ge \frac{\lambda_2(T) C}{2^{\rho_2(T)}},
$$
and $z_T\in\CC$ be the intersection of the two lines corresponding to $\lambda_1(T)$ and $\lambda_2(T)$: 
$$
z_T\triangleq\{x+{\bf i}y\in \CC~|~x=\frac{\lambda_1(T) C}{2^{\rho_1(T)}}\}\cap\{x+{\bf i}y\in \CC~|~y=\frac{\lambda_2(T) C}{2^{\rho_2(T)}}\}.
$$
Given $z_T$, we define four points in the complex plane $({\psi'}_T^1,\psi_T^1,\psi_T^2,{\psi'}_T^2)$ such that
{\small
\begin{eqnarray*}
& {\psi'}_T^1 = \Big(0, \sqrt{C^2 -{\rm Re}(z_T)^2} \Big), ~ \psi_T^1 = \Big({\rm Re}(z_T),\sqrt{C^2 -{\rm Re}(z_T)^2}\Big), \\
& {\psi'}_T^2 = \Big(\sqrt{C^2 -{\rm Im}(z_T)^2}, 0\Big), ~ \psi_T^2 = \Big(\sqrt{C^2 -{\rm Im}(z_T)^2},{\rm Im}(z_T)\Big).
\end{eqnarray*}}
Let $\cR_T$ be the part of the feasible region dominating $z_T$:
\begin{equation*}
\cR_T\triangleq\{x+{\bf i}y\in\CC~:~|x+{\bf i}y|\leq C,~x\ge {\rm Re}(z_T), y\ge {\rm Im}(z_T)\},
\end{equation*}
and $P_T(\epsilon)$ be the set of intersection points\footnote{For simplicity of presentation, we will ignore the issue of finite precision needed to represent intermediate calculations (such as the square roots above, or the intersection points of the lines of the grid with the boundary of the circle).}
between the grid lines in $\cL_1(\rho_1(T))\cup\cL_2(\rho_2(T))$ and the boundary of $\cR_T$:
$$
P_T(\epsilon)\triangleq\{z\in\cR_T~:~|z|= C\}\cap(\cL_1(\rho_1(T))\cup\cL_2(\rho_2(T))).
$$
The convex hull of the set of points $P_T(\epsilon)\cup\{{\psi'}_T^1,\psi_T^1,\psi_T^2,{\psi'}_T^2, \bzero\}$ defines a polygonized region, which we denote by $\cP_T(\epsilon)$ and its size (number of sides) by $m_T(\epsilon)$ (see Fig.~\ref{f1} for an illustration).   

\begin{lemma}\label{l-size}
$ m_T(\epsilon) \leq \frac{18}{\epsilon}+3$. 
\end{lemma}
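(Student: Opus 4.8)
\emph{Reduction to a grid count.} The plan is to bound the number of vertices of $\cP_T(\epsilon)$ by first bounding $|P_T(\epsilon)|$. Observe that $\psi_T^1$ and $\psi_T^2$ are themselves intersection points of a grid line of $\cL_1(\rho_1(T))$, respectively $\cL_2(\rho_2(T))$, with the circular part of $\partial\cR_T$; hence $\psi_T^1,\psi_T^2\in P_T(\epsilon)$, so $\cP_T(\epsilon)=\conv\bigl(P_T(\epsilon)\cup\{{\psi'}_T^1,{\psi'}_T^2,\bzero\}\bigr)$ has at most $|P_T(\epsilon)|+3$ vertices, and it suffices to prove $|P_T(\epsilon)|\le\tfrac{18}{\epsilon}$. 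Writing $L_1$ for the number of points of $P_T(\epsilon)$ lying on a vertical grid line and $L_2$ for those on a horizontal one, we have $|P_T(\epsilon)|\le L_1+L_2$, so the real target is $L_1<\tfrac{9}{\epsilon}$ and, symmetrically, $L_2<\tfrac{9}{\epsilon}$.

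\emph{Consequences of the choice of $\rho_1(T),\rho_2(T)$, and the grid count.} Put $s_1:=C/2^{\rho_1(T)}$ and $s_2:=C/2^{\rho_2(T)}$. Minimality of $\rho_1(T),\rho_2(T)$ gives $\tfrac{\epsilon w^{\rm R}_T}{8}<s_1\le\tfrac{\epsilon w^{\rm R}_T}{4}$ and $\tfrac{\epsilon w^{\rm I}_T}{8}<s_2\le\tfrac{\epsilon w^{\rm I}_T}{4}$; in particular $\tfrac{w^{\rm R}_T}{s_1}<\tfrac{8}{\epsilon}$, $\tfrac{w^{\rm I}_T}{s_2}<\tfrac{8}{\epsilon}$, and $\tfrac{s_2}{s_1}<\tfrac{2w^{\rm I}_T}{w^{\rm R}_T}$. (We may assume $|d_T|<C$, since otherwise $w^{\rm R}_T=w^{\rm I}_T=0$, $\cR_T$ collapses to the single point $d_T$, and the bound is immediate.) A vertical grid line $x=\lambda s_1$ meets the circular part of $\partial\cR_T$ exactly when ${\rm Re}(z_T)=\lambda_1(T)s_1\le\lambda s_1\le\sqrt{C^2-{\rm Im}(z_T)^2}$; writing $\widetilde w^{\rm R}:=\sqrt{C^2-{\rm Im}(z_T)^2}-{\rm Re}(z_T)$ for the horizontal extent of that arc, this yields $L_1\le\widetilde w^{\rm R}/s_1+1$, and symmetrically $L_2\le\widetilde w^{\rm I}/s_2+1$ with $\widetilde w^{\rm I}:=\sqrt{C^2-{\rm Re}(z_T)^2}-{\rm Im}(z_T)$.

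\emph{Bounding $\widetilde w^{\rm R}/s_1$.} Using ${\rm Re}(z_T)>{\rm Re}(d_T)-s_1$, ${\rm Im}(z_T)\le{\rm Im}(d_T)$, the monotonicity of $t\mapsto t/\sqrt{C^2-t^2}$ on $[0,C)$, and the identity $\sqrt{C^2-{\rm Im}(d_T)^2}=w^{\rm R}_T+{\rm Re}(d_T)$, one obtains
\[
\widetilde w^{\rm R}\;<\;w^{\rm R}_T+s_1+\frac{s_2\,{\rm Im}(d_T)}{w^{\rm R}_T+{\rm Re}(d_T)}.
\]
Dividing by $s_1$ and invoking $s_2/s_1<2w^{\rm I}_T/w^{\rm R}_T$, every term is plainly $O(1/\epsilon)$-bounded except the cross term $\dfrac{2\,w^{\rm I}_T\,{\rm Im}(d_T)}{w^{\rm R}_T\,(w^{\rm R}_T+{\rm Re}(d_T))}$; controlling this is the heart of the lemma, because a priori $w^{\rm I}_T/w^{\rm R}_T$ can be arbitrarily large. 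The key is the identity
\[
w^{\rm R}_T\bigl(w^{\rm R}_T+2\,{\rm Re}(d_T)\bigr)=w^{\rm I}_T\bigl(w^{\rm I}_T+2\,{\rm Im}(d_T)\bigr),
\]
obtained by squaring $w^{\rm R}_T+{\rm Re}(d_T)=\sqrt{C^2-{\rm Im}(d_T)^2}$ and $w^{\rm I}_T+{\rm Im}(d_T)=\sqrt{C^2-{\rm Re}(d_T)^2}$ and subtracting; it gives $2\,w^{\rm I}_T\,{\rm Im}(d_T)\le w^{\rm R}_T\bigl(w^{\rm R}_T+2\,{\rm Re}(d_T)\bigr)$, whence the cross term is at most $\dfrac{w^{\rm R}_T+2\,{\rm Re}(d_T)}{w^{\rm R}_T+{\rm Re}(d_T)}\le 2$. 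Hence $\widetilde w^{\rm R}/s_1<\tfrac{8}{\epsilon}+3$, so $L_1\le\widetilde w^{\rm R}/s_1+1<\tfrac{8}{\epsilon}+4<\tfrac{9}{\epsilon}$ (the last step using $\epsilon<\tfrac14$), and symmetrically $L_2<\tfrac{9}{\epsilon}$. Therefore $m_T(\epsilon)\le|P_T(\epsilon)|+3\le L_1+L_2+3<\tfrac{18}{\epsilon}+3$.

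\emph{Where the difficulty lies.} The genuinely delicate step is the bound on the cross term $w^{\rm I}_T\,{\rm Im}(d_T)/\bigl(w^{\rm R}_T(w^{\rm R}_T+{\rm Re}(d_T))\bigr)$ via the quadratic identity; everything else is elementary grid counting and routine square-root estimates. A point to handle with care is that one must keep the full denominator $w^{\rm R}_T+{\rm Re}(d_T)=\sqrt{C^2-{\rm Im}(d_T)^2}$ in the estimate for $\widetilde w^{\rm R}$ rather than weakening it to $w^{\rm R}_T$, since otherwise the identity above is not quite strong enough to close the argument.
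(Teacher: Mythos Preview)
Your proof is correct and follows the paper's overall strategy: bound the number of vertical and of horizontal grid lines that meet the arc by $\tfrac{9}{\epsilon}$ each, then add the three extra vertices $\{{\psi'}_T^1,{\psi'}_T^2,\bzero\}$.

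Where you differ is in the care taken with the bound on the horizontal extent. The paper simply asserts, ``by the definition of $z_T$'', that the horizontal distance $x$ from $z_T$ to the circle satisfies $x\le w_T^{\rm R}+C/2^{\rho_1(T)}$; this accounts for the real shift ${\rm Re}(d_T)-{\rm Re}(z_T)<s_1$ but ignores that ${\rm Im}(z_T)\le{\rm Im}(d_T)$ moves the right endpoint of the arc outward as well, so an additional term $\sqrt{C^2-{\rm Im}(z_T)^2}-\sqrt{C^2-{\rm Im}(d_T)^2}$ must be controlled. You handle this explicitly, reducing it via the mean-value estimate to the cross term $\dfrac{2w_T^{\rm I}\,{\rm Im}(d_T)}{w_T^{\rm R}(w_T^{\rm R}+{\rm Re}(d_T))}$ and then bounding that by~$2$ using the identity $w_T^{\rm R}(w_T^{\rm R}+2\,{\rm Re}(d_T))=w_T^{\rm I}(w_T^{\rm I}+2\,{\rm Im}(d_T))$. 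The net effect is that your intermediate constant is $\tfrac{8}{\epsilon}+4$ rather than the paper's $\tfrac{8(1+\epsilon/4)}{\epsilon}+1$; both are $<\tfrac{9}{\epsilon}$ under the standing assumption $\epsilon<\tfrac14$, so the final bound is identical. In short, your argument is a more rigorous version of the same proof, with the quadratic identity as the one genuinely new ingredient.
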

\begin{proof}
Let $x$ be the horizontal distance from $z_T$ to the boundary of the circle (with center $\bzero$ and radius $C$). Then, by the definition of $\rho_1(T)$,
$
\frac{C}{2^{\rho_1(T)-1}}>\frac{\epsilon w_T^{\rm R} }{4},
$
which implies that $\frac{w_T^{\rm R} }{C}<\frac{1}{2^{\rho_1(T)-3}\epsilon}$. On the other hand, by the definition of $z_T$,
$$
x\le w_T^{\rm R}+\frac{C}{2^{\rho_1(T)}}\le w_T^{\rm R}+\frac{\epsilon w_T^{\rm R} }{4}=\left(1+\frac{\epsilon}{4}\right)w_T^{\rm R} . 
$$
It follows from the above inequalities that the number of vertical grid lines (at level $\rho_1(T)$) between $z_T$ and the boundary of the circle is at most
$$
\frac{x}{C/2^{\rho_1(T)}}+1\le\frac{2^{\rho_1(T)}\left(1+\frac{\epsilon}{4}\right)w_T^{\rm R}}{C}+1
<\frac{8\left(1+\frac{\epsilon}{4}\right)}{\epsilon}+1<\frac{9}{\epsilon}.
$$
Similarly, we can show that the number of horizontal grid lines between $z_T$ and the boundary of the circle is at most $\frac{9}{\epsilon}$. Adding the three other points $\{{\psi'}_T^1,{\psi'}_T^2, \bzero\}$ gives the claim. 
\end{proof}

	\begin{figure}[!htb]
		\centering
		\includegraphics[scale=0.8]{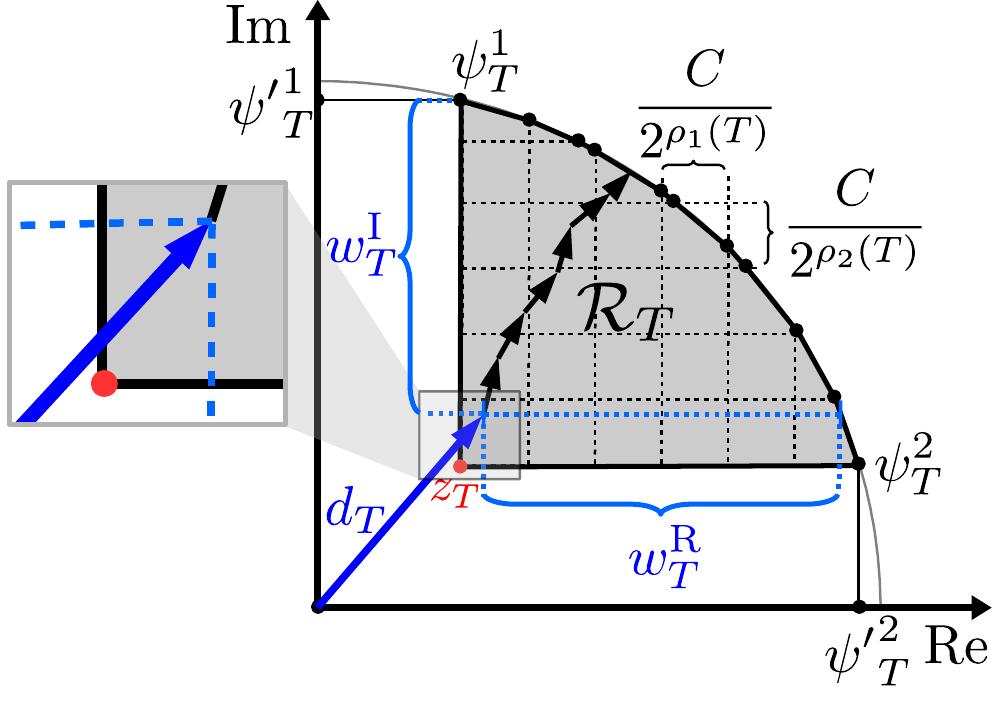}
		\caption{We illustrate the region $\cR_T$ by the shaded area and $P_T(\epsilon)$ by the black dots on the arc of the circle. The zoomed area highlights that $w^{\rm R}_T$ (resp. $w^{\rm I}_T$) is defined by $d_T$ {\em not} $z_T$.}
		\label{f1}
		\end{figure}

%\vspace{-5pt}
	%  
%	\begin{subfigure}% {.5\textwidth}
%		\includegraphics[width=1in]{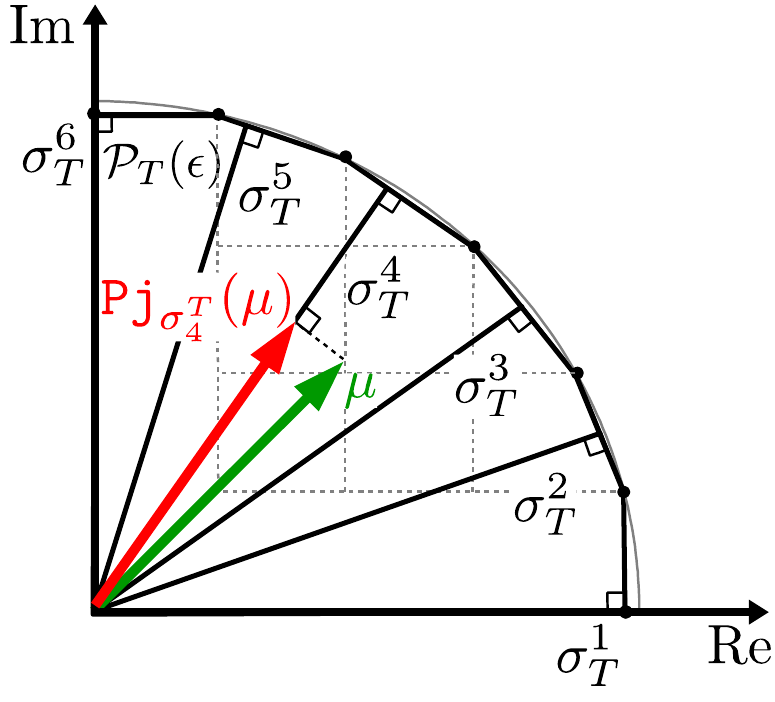}
%		\caption{ a}
%	\end{subfigure}

\begin{definition}
Consider a subset of users $N\subseteq \cN$ and a feasible set $T\triangleq\{\overline d_k:k\in N\}$ to {\sc MultiCKP}$[0,\frac{\pi}{2}]$. We define an approximate problem ({\sc PGZ}$_T$) by polygonizing \textsc{MultiCKP$[0,\frac{\pi}{2}]$}:
\begin{eqnarray*}
\textsc{(PGZ$_T$)} \qquad& \displaystyle \max  \sum_{k\in\cN}v_k (d_k) \label{PGZ}\\
\text{s.t.}\qquad & \displaystyle \sum_{k\in \cN}d_k \in \cP_T(\epsilon) \label{CPGZ}\\
\qquad & d_k=\overline d_k,~~~~\forall k\in N\label{DPGZ}\\ 
\qquad & d_k\in\cD, ~~~~\forall k\in\cN\backslash N.
\end{eqnarray*}
\end{definition}

Given two complex numbers $\mu$ and $\nu$, we denote the projection of $\mu$ on $\nu$ by ${\tt Pj}_\nu(\mu) \triangleq \frac{\nu}{|\nu|^2}(\mu^{\rm R}\nu^{\rm R} +  \mu^{\rm I}\nu^{\rm I})$. Given the convex hull $\cP_T(\epsilon)$, we define a set of $m_T(\epsilon)$ vectors $\{ \sigma_T^i\}$, each of which is perpendicular to each boundary edge of $\cP_T(\epsilon)$ and starting at the origin (see Fig.~\ref{f2} for an illustration).  

	\begin{figure}[!htb]
		\centering
		\includegraphics[scale=0.8]{}
		
		\caption{Each $\sigma_T^i$ is a vector (starting at the origin) perpendicular to each boundary edge of $\cP_T(\epsilon)$.\\ }
		
		\label{f2}
	\end{figure}

\iffalse
\begin{figure}
\centering
\begin{subfigure}{.5\textwidth}
  \includegraphics{fig/fig2.pdf}
\caption{Each in $\{\sigma_T^i\}$ is a vector (starting at the origin) perpendicular to each boundary edge of $\cP_T(\epsilon)$. }
	\label{f2}
\end{subfigure}%  
\begin{subfigure}{.5\textwidth}
  \includegraphics{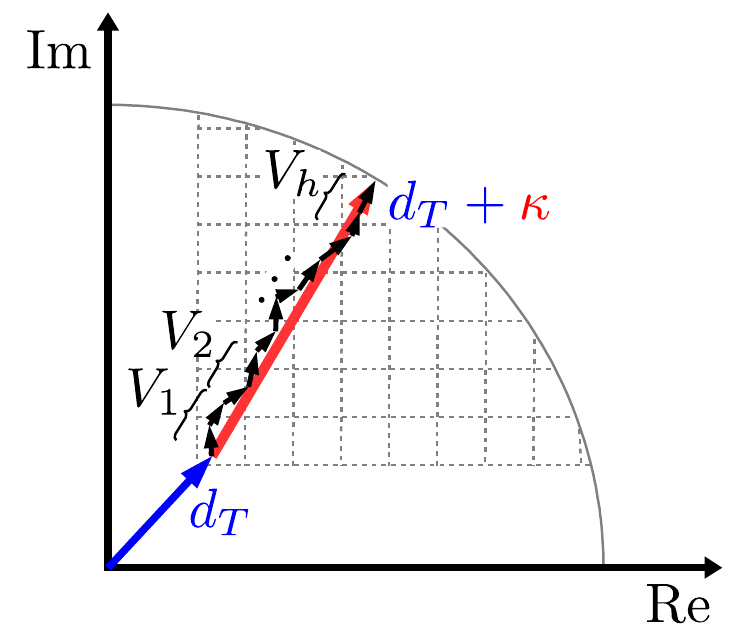}
	\caption{Demands $\{d_k \mid T \in \cN \backslash T\}$ are partitioned into $\{V_i\}$. The zoomed-in gray area illustrates that a total satisfied demands can lie outside the polygonized region $\cP_T(\epsilon)$.}
	\label{f3}
\end{subfigure}
\caption{ }
\label{fig:test}
\end{figure}
\fi

\begin{definition}
Consider a subset of users $N\subseteq \cN$ and a feasible set $T\triangleq\{\overline d_k:k\in N\}$ to {\sc MultiCKP}$[0,\frac{\pi}{2}]$. We define a \textsc{Multi-$m$DKP} problem based on $\{ \sigma_T^i\}$:

\begin{eqnarray}
\textsc{(Multi-$m$DKP$\{ \sigma_T^i\}$)}&\quad \displaystyle \max  \sum_{k\in\cN}v_k (d_{k}) \label{(mDKP)}\\
\text{s.t.}& \displaystyle \sum_{k\in\cN} | \Pj(d_k) | \le  |\sigma_T^i|, \quad \forall i= 1,\ldots, m_T(\epsilon) \label{mCm-1}\\
\qquad & d_k=\overline d_k,~~~~\forall k\in N\label{mCm-3}\\ 
\qquad & d_k\in\cD, ~~~~\forall k\in\cN\backslash N. \label{mCm-2}
\end{eqnarray}
One can see that \textsc{Multi-$m$DKP$\{ \sigma_T^i\}$} is an instance of {\sc Multi-$m$DKP} (defined in Sec.~\ref{MCMDKS-sec}) over the set of users $\cN\backslash N$,  $m=m_T(\epsilon)$, $d^i_k = | \Pj(d_k) |$, and $c^i = |\sigma_T^i| - \sum_{k \in N} | \Pj(\overline d_k) |$.

%\vspace*{-20 pt}
\end{definition}

\begin{lemma}\label{lem-proj}
Given a feasible set $T$ to \textsc{MultiCKP$[0,\frac{\pi}{2}]$},  {\sc PGZ}$_T$ and {\sc Multi-$m$DKP}$\{ \sigma_T^i\}$ are equivalent.
\end{lemma}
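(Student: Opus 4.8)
The plan is to show that $\cP_T(\epsilon)$ is exactly the feasible region defined by the half-space constraints $\sum_{k\in\cN}|\Pj(d_k)|\le|\sigma_T^i|$, so that a demand vector sum $\sum_k d_k$ lies in $\cP_T(\epsilon)$ if and only if it satisfies all the inequalities~\raf{mCm-1}. Since the objective functions and the variable constraints $d_k=\overline d_k$ for $k\in N$, $d_k\in\cD$ for $k\notin N$ are literally identical in the two programs, establishing this equivalence of feasible regions for the aggregate demand proves the lemma.

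First I would recall the standard fact that a bounded convex polygon containing the origin in its interior (here $\cP_T(\epsilon)$, whose vertex set includes $\bzero$ and the points on the two axes ${\psi'}_T^1,{\psi'}_T^2$, so the polygon spans both coordinate directions) can be written as the intersection of the half-spaces supporting its edges. For each boundary edge $e_i$ of $\cP_T(\epsilon)$, let $\sigma_T^i$ be the vector from the origin perpendicular to (the line through) $e_i$, as defined just before the lemma; then the line through $e_i$ is precisely $\{\mu\in\CC:\mu^{\rm R}(\sigma_T^i)^{\rm R}+\mu^{\rm I}(\sigma_T^i)^{\rm I}=|\sigma_T^i|^2\}$, and the polygon is the set of $\mu$ for which $\mu^{\rm R}(\sigma_T^i)^{\rm R}+\mu^{\rm I}(\sigma_T^i)^{\rm I}\le|\sigma_T^i|^2$ for all $i$. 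The next step is to rewrite this inner-product inequality in terms of the projection operator: by definition ${\tt Pj}_{\sigma_T^i}(\mu)=\frac{\sigma_T^i}{|\sigma_T^i|^2}(\mu^{\rm R}(\sigma_T^i)^{\rm R}+\mu^{\rm I}(\sigma_T^i)^{\rm I})$, which is a scalar multiple of $\sigma_T^i$; its magnitude is $|{\tt Pj}_{\sigma_T^i}(\mu)|=\frac{|\mu^{\rm R}(\sigma_T^i)^{\rm R}+\mu^{\rm I}(\sigma_T^i)^{\rm I}|}{|\sigma_T^i|}$. I will argue that for every $\mu$ relevant here the inner product $\mu^{\rm R}(\sigma_T^i)^{\rm R}+\mu^{\rm I}(\sigma_T^i)^{\rm I}$ is nonnegative — because all the $\sigma_T^i$ point into the positive quadrant (the edges of $\cP_T(\epsilon)$ all lie on the far side from the origin) and all demands and partial sums have nonnegative real and imaginary parts since $T$ and $\cD$ lie in $[0,\frac{\pi}{2}]$ — so $|{\tt Pj}_{\sigma_T^i}(\mu)|=\frac{\mu^{\rm R}(\sigma_T^i)^{\rm R}+\mu^{\rm I}(\sigma_T^i)^{\rm I}}{|\sigma_T^i|}$. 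Dividing the edge inequality through by $|\sigma_T^i|$ then turns $\mu^{\rm R}(\sigma_T^i)^{\rm R}+\mu^{\rm I}(\sigma_T^i)^{\rm I}\le|\sigma_T^i|^2$ into exactly $|{\tt Pj}_{\sigma_T^i}(\mu)|\le|\sigma_T^i|$, giving membership in $\cP_T(\epsilon)\iff|{\tt Pj}_{\sigma_T^i}(\mu)|\le|\sigma_T^i|$ for all $i$.

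Finally I would apply this with $\mu=\sum_{k\in\cN}d_k$ and use linearity of projection, ${\tt Pj}_{\sigma_T^i}(\sum_k d_k)=\sum_k {\tt Pj}_{\sigma_T^i}(d_k)$, together with the observation that each ${\tt Pj}_{\sigma_T^i}(d_k)$ is a nonnegative multiple of the fixed direction $\sigma_T^i$, so that $|\sum_k {\tt Pj}_{\sigma_T^i}(d_k)|=\sum_k|{\tt Pj}_{\sigma_T^i}(d_k)|$; this converts the single constraint $\sum_k d_k\in\cP_T(\epsilon)$ into the $m_T(\epsilon)$ constraints $\sum_k|{\tt Pj}_{\sigma_T^i}(d_k)|\le|\sigma_T^i|$, matching~\raf{mCm-1} (after absorbing the already-fixed terms $k\in N$ into the capacities $c^i$ as stated in the definition). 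The main obstacle I anticipate is making the sign/direction argument airtight: one must verify that every $\sigma_T^i$ indeed lies in the closed first quadrant (equivalently, that no edge of $\cP_T(\epsilon)$ separates the origin from the first quadrant), which requires examining the edges incident to ${\psi'}_T^1$ and ${\psi'}_T^2$ and the edge along each axis — these axis-adjacent edges have outward normals along the coordinate axes, and all arc edges have normals pointing outward from the origin into the first quadrant; once this is checked, the nonnegativity of all inner products, and hence the removal of absolute values, follows and the rest is bookkeeping.
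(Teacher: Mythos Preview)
Your proposal is correct and matches the paper's approach: the paper's entire proof is the one-line remark that the lemma ``follows straightforwardly from the convexity of the polygon $\cP_T(\epsilon)$,'' and your argument is exactly the unpacking of that remark via the half-space description of a convex polygon together with the non-negativity of all inner products in the first quadrant. One minor correction worth noting: $\bzero$ is a \emph{vertex} of $\cP_T(\epsilon)$, not an interior point, so the two edges along the coordinate axes have $|\sigma_T^i|=0$; these correspond to the constraints $\sum_k d_k^{\rm R}\ge 0$ and $\sum_k d_k^{\rm I}\ge 0$, which are automatically satisfied in the first quadrant and may simply be omitted from the list~\raf{mCm-1}.
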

Lemma~\ref{lem-proj} follows straightforwardly from the convexity of the polygon $\cP_T(\epsilon)$.

Our PTAS for \textsc{MutliCKP$[0,\frac{\pi}{2}]$} is described in Algorithm {\sc MultiCKP-PTAS}, which enumerates every subset partial selection $T$ of at most $\frac{1}{\epsilon}$ demands, then finds a near optimal allocation for each polygonized region $\cP_T(\epsilon)$ using the PTAS of {\sc Multi-$m$DKP} from Section~\ref{sec:Tm-mC-KS}, which we denote by {\sc Multi-$m$DKP-PTAS}$[\cdot]$.
\begin{algorithm}[!htb]
	\caption{ {\sc MultiCKP-PTAS}$(\{v_k,D_k\}_{k\in\cN},C,\epsilon)$} \label{CKP-PTAS}
\begin{algorithmic}[1]
\Require Users' multi-minded valuations $\{v_k,D_k\}_{k\in \cN}$; capacity $C$; accuracy parameter $\epsilon$
\Ensure $(1-3\epsilon)$-allocation $(\widehat{d}_1,\ldots,\widehat d_n)$ to \textsc{MultiCKP$[0,\frac{\pi}{2}]$}
\State $(\widehat{d}_1,\ldots,\widehat d_n) \leftarrow (\bzero,\ldots,\bzero)$
\For{each subset $N\subseteq \cN$ and each subset $T=\{\overline d_k\in D_k:k\in N\}$ of size at most $\frac{1}{\epsilon}$ s.t. $\big|\sum_{d\in T}d\big|\le C$}\label{ss1}
  \State Set $d_T \leftarrow \sum_{d\in T}d$, and define the corresponding vectors $\{ \sigma_T^i\}$
  \State Obtain $(d_1,\ldots,d_n) \leftarrow$ {\sc Multi-$m$DKP-PTAS} [{\sc Multi-$m$DKP}$\{ \sigma_T^i\}$] within accuracy $\epsilon$ \label{s1} 
  \If{$\sum_kv_k(\widehat{d}_k) < \sum_kv_k(d_k)$}
  \State $(\widehat{d}_1,\ldots,\widehat d_n) \leftarrow (d_1,\ldots,d_n)$
  \EndIf
\EndFor
\State \Return $(\widehat{d}_1,\ldots,\widehat d_n)$
\end{algorithmic}
\end{algorithm}

\begin{theorem}\label{t2}
For any $\epsilon>0$, Algorithm {\sc MultiCKP-PTAS} finds a $(1-3\epsilon, 1)$-approximation to \textsc{MultiCKP$[0,\frac{\pi}{2}]$}. 
The running time of the algorithm is $\left|\bigcup_k D_k\right|^{O(\frac{1}{\epsilon^2})}$.
\end{theorem}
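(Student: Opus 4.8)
The plan is to establish the approximation guarantee by a two-step argument: first show that some choice of the partial selection $T$ in the algorithm corresponds to a polygonal region $\cP_T(\epsilon)$ that both (a) is contained in the feasible disk, so feasibility is preserved, and (b) contains enough of the optimal solution to capture a $(1-2\epsilon)$-fraction of $\OPT$; then invoke the truthful PTAS for \textsc{Multi-$m$DKP} (Lemmas~\ref{l1-} and~\ref{l2-}, via Lemma~\ref{lem-proj}) to lose only another $(1-\epsilon)$ factor, giving $(1-\epsilon)(1-2\epsilon)\OPT\ge(1-3\epsilon)\OPT$. For the running time, combine the bound $m_T(\epsilon)\le\frac{18}{\epsilon}+3$ from Lemma~\ref{l-size} with the $\left|\bigcup_kD_k\right|^{O(m/\epsilon)}$ bound from Lemma~\ref{l2-} and the number of choices of $T$, which is $\left|\bigcup_kD_k\right|^{O(1/\epsilon)}$ since $|T|\le\frac{1}{\epsilon}$; the product is $\left|\bigcup_kD_k\right|^{O(1/\epsilon^2)}$.

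The heart of the argument is choosing the right $T$. Let $\bd^*=(d_k^*)_k$ be an optimal allocation and let $S^*=\{k:d_k^*\neq\bzero\}$ be its support, with $d_{S^*}=\sum_{k\in S^*}d_k^*$ the total demand vector. I would take $T$ to be the set of the (at most $\frac{1}{\epsilon}$) demands $d_k^*$ in the optimal solution with largest magnitude — more precisely, let $N\subseteq S^*$ index the $\lfloor1/\epsilon\rfloor$ users with largest $|d_k^*|$ and set $T=\{d_k^*:k\in N\}$. Since this $T$ is a feasible subset of $\cD$ with $|T|\le\frac{1}{\epsilon}$, it is enumerated by the algorithm (line~\ref{ss1}); I fix this $T$ for the analysis. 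The key geometric claims are: (i) the grid point $z_T$ lies ``below-left'' of $d_T$ in both coordinates (this is exactly how $\lambda_1(T),\lambda_2(T)$ are defined), so $d_T\in\cR_T$; (ii) because $\cP_T(\epsilon)$ is the convex hull of points on the boundary circle together with the axis points $\psi',\psi$ and the origin, we have $\cP_T(\epsilon)\subseteq\{z:|z|\le C\}$, hence any allocation feasible for \textsc{PGZ}$_T$ is feasible for \textsc{MultiCKP}; and (iii) the residual demand $d_{S^*}-d_T=\sum_{k\in S^*\setminus N}d_k^*$ has both coordinates bounded by roughly $w_T^{\rm R}$ and $w_T^{\rm I}$ respectively — this follows because each such $d_k^*$ has magnitude at most $\frac{\epsilon}{1}\cdot|d_T|$-ish by the choice of $N$, and more importantly because $d_{S^*}$ itself lies in the disk so the residual cannot push too far past $d_T$ toward the boundary. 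Combining (iii) with the fact that the grid spacing near $z_T$ is $\frac{\epsilon w_T^{\rm R}}{4}$ resp. $\frac{\epsilon w_T^{\rm I}}{4}$, one shows that the "reachable" region of the polygon extending from $z_T$ covers $d_{S^*}$, so $\bd^*$ restricted appropriately is feasible for \textsc{PGZ}$_T$.

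The part I expect to be the main obstacle is making claim (iii) and the covering argument precise while only losing a $(1-2\epsilon)$ factor rather than just preserving feasibility exactly. The subtlety is that after fixing the large demands $T$, the remaining small demands of $\bd^*$ may push $d_{S^*}$ slightly outside $\cP_T(\epsilon)$ (the zoomed region in Fig.~\ref{f1} illustrates exactly this), because the polygon is an inner approximation of the circular arc. So one cannot simply assert $d_{S^*}\in\cP_T(\epsilon)$; instead I would scale down the small demands by a $(1-2\epsilon)$ factor — or, more cleanly, argue that dropping an $O(\epsilon)$-fraction (by value) of the small demands brings the total inside the polygon, using that the chord defining each polygon edge is within $O(\epsilon w_T)$ of the corresponding arc and $w_T\le C$. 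The monotonicity of the $v_k$ with respect to "$\preceq$" is what lets us legitimately shrink a demand to a smaller feasible one while controlling the value loss. This accounts for the $(1-2\epsilon)$ factor, and composing with the PTAS for \textsc{Multi-$m$DKP} gives the stated $(1-3\epsilon)$ bound (after renaming $\epsilon$). I would present the geometry via the figures and keep the arc-versus-chord estimate as the one genuinely computational lemma.
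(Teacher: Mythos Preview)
Your high-level structure and your running-time argument are exactly the paper's: reduce to \textsc{PGZ}$_T$ for a suitable $T$ with $|T|\le\frac{1}{\epsilon}$, lose a factor $(1-2\epsilon)$ in that reduction, then apply the \textsc{Multi-$m$DKP} PTAS (Lemmas~\ref{l1-},~\ref{l2-} via Lemma~\ref{lem-proj}) to lose another $(1-\epsilon)$, and multiply the $\left|\bigcup_kD_k\right|^{O(1/\epsilon)}$ choices of $T$ by the $\left|\bigcup_kD_k\right|^{O(m/\epsilon)}$ cost with $m=O(1/\epsilon)$ from Lemma~\ref{l-size}.

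The gap is in the core step, your choice of $T$ and your ``recovery'' argument. Picking $T$ as the $\lfloor 1/\epsilon\rfloor$ demands of largest \emph{magnitude} does not give the property you actually need, which is that every remaining demand has $d^{\rm R}\le\frac{\epsilon}{4}w_T^{\rm R}$ and $d^{\rm I}\le\frac{\epsilon}{4}w_T^{\rm I}$. A demand of magnitude $\epsilon|d_T|$ can still have a component far exceeding $\frac{\epsilon}{4}w_T^{\rm R}$, since $w_T^{\rm R}$ may be tiny compared to $|d_T|$ when $d_T$ is near the boundary; concretely, four demands $(0.2,0)$ and one demand $(0,0.5)$ with $C=1$ and $\epsilon=\tfrac14$ already break your claim~(iii). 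The paper instead builds $T$ \emph{iteratively} (Algorithm~{\sc Construct}, Lemma~\ref{main-lem}): at each stage it adds to $T_\ell$ any demand with $d^{\rm R}>\frac{\epsilon}{4}w_{T_\ell}^{\rm R}$ or $d^{\rm I}>\frac{\epsilon}{4}w_{T_\ell}^{\rm I}$, recomputes the (now smaller) widths, and repeats until either $|T_\ell|\ge\frac{1}{\epsilon}$ or every remaining demand is small in both coordinates relative to the \emph{current} widths. This adaptive definition of ``large'' is what your one-shot choice misses.

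Your recovery options then both fail. ``Scaling down the small demands by $(1-2\epsilon)$'' is not available: allocations must come from the discrete sets $D_k$, and in any case scaling moves toward the origin rather than across the violated polygon edge. ``Dropping an $O(\epsilon)$-fraction by value'' is the right idea but needs a mechanism that ties value to geometry; the paper supplies it in Lemma~\ref{lem-pack}. Once all remaining demands have components below $\frac{\epsilon}{4}w_T$, infeasibility for \textsc{PGZ}$_T$ forces $\kappa^{\rm R}\ge\frac{1}{2}w_T^{\rm R}$ or $\kappa^{\rm I}\ge\frac{1}{2}w_T^{\rm I}$, and one can greedily pack the small demands into at least $\frac{1}{\epsilon}-1$ consecutive batches each with total component in $(\frac{\epsilon}{4}w_T,\frac{\epsilon}{2}w_T]$; deleting the least-valuable batch (or, in the case $|T_\ell|\ge\frac{1}{\epsilon}$, the least-valuable single demand in $T_\ell$) costs at most a $2\epsilon$ fraction of value and moves the total inside $\cP_T(\epsilon)$ because the removed component exceeds the grid spacing. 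This packing argument is precisely why the remaining demands must be small \emph{relative to} $w_T$, and why the iterative construction of $T$ is needed.
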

\begin{proof}
First, the upper bound on the running time of Algorithm {\sc MultiCKP-PTAS} is due to the fact that each of the $\left|\bigcup_k D_k\right|^{O\left (\frac{1}{\epsilon}\right)}$ iterations in line~\ref{ss1} requires invoking the PTAS of {\sc Multi-$m$DKP}, which in turn takes $\left|\bigcup_{k}D_k\right|^{O(m/\epsilon)}$ time, by Lemma~\ref{l2-}, where $m =O(\frac{1}{\epsilon})$. Therefore  the total running time is $\left|\bigcup_{k}D_k\right|^{O(1/\epsilon)} \cdot \left|\bigcup_{k}D_k\right|^{O(1/\epsilon^2)}  =  \left|\bigcup_{k}D_k\right|^{O(1/\epsilon^2)}$.

The algorithm outputs a feasible allocation by Lemma \ref{lem-proj} and the construction of $\cP_T(\epsilon)$.
To prove the approximation ratio, we show in Lemma~\ref{main-lem} below that, for any optimal (or feasible) allocation $(d_1^*,\ldots,d_n^*)$, we can construct another feasible allocation $(\widetilde d_1,\ldots,\widetilde d_n)$ such that $\sum_{k}v_k(\widetilde d_k)\ge(1-2\epsilon)\sum_kv_k(d_k^*)$ and $(\widetilde d_1,\ldots,\widetilde d_n)$ is feasible to {\sc PGZ$_T$} for some $T$ of size at most $\frac{1}{\epsilon}$. By Lemma \ref{lem-proj}, invoking the PTAS of {\sc Multi-$m$DKP$\{\sigma^i_T\}$} gives  a $(1-\epsilon)$-approximation $(\widehat d_1,\ldots,\widehat d_k)$ to {\sc PGZ$_T$}. Then
{\small
\begin{eqnarray*}
\sum_kv_k( \widehat d_k )\ge (1-\epsilon) \sum_kv_k(\widetilde d_k) \ge  (1-3\epsilon) \OPT.
\end{eqnarray*}}
\hspace{-0.05in}We give an explicit construction of the allocation $(\widetilde d_1,\ldots,\widetilde d_n)$ in Algorithm~\ref{Construct}, thus completing the proof by Lemma~\ref{main-lem}.
\end{proof}

\begin{lemma}\label{main-lem}
Consider a feasible allocation $\bd=(d_1,\ldots,$ $d_n)$ to \textsc{MultiCKP$[0,\frac{\pi}{2}]$}. Then we can find a set $T\subseteq \{d_1,\ldots,$ $d_n\}$ and construct an allocation $\widetilde \bd=(\widetilde d_1,\ldots,\widetilde d_n)$, such that $|T|\le\frac{1}{\epsilon}$ and
$\widetilde\bd$ is a feasible solution to {\sc PGZ}$_T$ and $v(\widetilde\bd) \ge(1-2\epsilon)v(\bd)$.    
\end{lemma}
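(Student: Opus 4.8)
The plan is to start from a feasible allocation $\bd=(d_1,\dots,d_n)$ with total demand $d_{\bd}\triangleq\sum_k d_k$ inside the disk of radius $C$, and to extract a small set $T$ of ``large'' demands so that the remaining demands can be packed inside the polygon $\cP_T(\epsilon)$ after a mild shrinking. First I would identify $T$: sort the nonzero demands of $\bd$ and greedily place into $T$ any demand $d_k$ whose real part exceeds $\frac{\epsilon}{4}w^{\rm R}$ (where $w^{\rm R}$ is the horizontal slack of the \emph{partial} sum built so far), and likewise for the imaginary part; since each such demand consumes at least an $\frac{\epsilon}{4}$-fraction of the current horizontal (resp.\ vertical) slack, and slacks only decrease, a counting argument bounds $|T|\le\frac{1}{\epsilon}$ — more precisely, roughly $\frac{1}{2\epsilon}$ on each coordinate. (This is the standard ``guess the large items'' trick; the exact threshold must be chosen to mesh with the grid granularity $\rho_1(T),\rho_2(T)$ defined via $w^{\rm R}_T,w^{\rm I}_T$.) Set $\overline d_k\triangleq d_k$ for $k\in T$ and keep these fixed.

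Next, with $T$ fixed, consider the point $z_T$ (the grid point just below-left of $d_T\triangleq\sum_{d\in T}d$) and the region $\cR_T$ dominating it. The key geometric claim is that the leftover demands $\{d_k:k\notin N\}$, summed on top of $z_T$ instead of $d_T$, still almost fit: since $d_T^{\rm R}-z_T^{\rm R}\le \frac{C}{2^{\rho_1(T)}}\le\frac{\epsilon w_T^{\rm R}}{4}$ and similarly in the imaginary direction, and since $w_T^{\rm R},w_T^{\rm I}$ measure exactly the slack of $d_{\bd}$ to the circle, the total demand $z_T+\sum_{k\notin N}d_k$ lies within the circle of radius $C$ but possibly just outside the polygon $\cP_T(\epsilon)$ in the small ``caps'' cut off by the grid lines. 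To push it inside $\cP_T(\epsilon)$ I would scale the leftover demands by a factor $(1-2\epsilon)$ (or $(1-c\epsilon)$ for an appropriate constant): this is legitimate because for each $k\notin N$ there is, by monotonicity of $v_k$ and the definition~\eqref{mm-val} together with $\bzero\preceq d$, a demand $\widetilde d_k\preceq d_k$ in $D_k$ (or a scaled-down lattice point) with $v_k(\widetilde d_k)=v_k(d_k)$ — wait, that is the subtle point: scaling need not preserve value. So instead the construction (deferred to Algorithm~\ref{Construct}) should, for each leftover user, \emph{drop} a $(1-2\epsilon)$-value-fraction's worth differently: order leftover users and discard a prefix/suffix of small total value, or round each $d_k$ down to the grid. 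The cleanest route is: round each leftover $d_k$ down coordinatewise to the nearest multiple of $\frac{C}{2^{\rho_1(T)}}$ resp.\ $\frac{C}{2^{\rho_2(T)}}$; this only decreases each coordinate, hence by the partial order and monotonicity the valuation is unchanged if these rounded points lie in $D_k$ — which they generally won't, so one actually invokes that {\sc PGZ}$_T$/{\sc Multi-$m$DKP}$\{\sigma_T^i\}$ allows arbitrary $d_k\in\cD$, not just $D_k$, and the valuation $v_k(\cdot)$ on $\cD$ from~\eqref{mm-val} is monotone, so rounding down keeps $v_k$ unchanged. The accumulated rounding error over $n$ users is at most $n\cdot\frac{C}{2^{\rho_1(T)}}$, which is \emph{not} small — so the rounding must instead be to a common coarse grid, i.e.\ exactly the lattice defining $\cP_T(\epsilon)$, and the overshoot past the polygon boundary is then a single ``cap'' of width $O(\epsilon w_T^{\rm R})$, absorbable by the $(1-2\epsilon)$ scaling of the \emph{aggregate}, realized by deleting a lowest-value subset of leftover demands of total value $\le 2\epsilon\,v(\bd)$.

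Thus the ordered steps are: (1) define the greedy selection of $T$ and prove $|T|\le\frac1\epsilon$ via the slack-halving count; (2) fix $z_T$, $\cR_T$, $\cP_T(\epsilon)$ and bound $d_T-z_T$ coordinatewise by $\frac{\epsilon}{4}(w_T^{\rm R},w_T^{\rm I})$; (3) show $z_T+\sum_{k\notin N}d_k$ is in the disk, so it violates $\cP_T(\epsilon)$ only within thin boundary caps; (4) exhibit $\widetilde\bd$ (the content of Algorithm~\ref{Construct}) that keeps $T$ fixed, shrinks/trims the rest to land in $\cP_T(\epsilon)$, and argue via monotonicity of $v_k$ and a value-deletion bound that $v(\widetilde\bd)\ge(1-2\epsilon)v(\bd)$. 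The main obstacle is step~(4): scaling demands down does not obviously preserve or controllably degrade valuations, so the crux is to convert the purely geometric ``shrink by $(1-2\epsilon)$'' into a combinatorial operation (dropping an $O(\epsilon)$-value set of users, or exploiting that $\cD$ is downward closed under $\preceq$ with $v_k$ monotone) that legitimately loses only a $(1-2\epsilon)$ factor in the objective while guaranteeing feasibility for {\sc PGZ}$_T$.
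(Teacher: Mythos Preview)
Your proposal correctly identifies the overall structure---guess a small set $T$ of large demands, then argue the residual can be made to fit in $\cP_T(\epsilon)$ after discarding something of small value---but it has two genuine gaps, one in each half.

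First, the counting argument for $|T|\le\frac{1}{\epsilon}$ does not work as stated. The threshold $\frac{\epsilon}{4}w^{\rm R}_{T_\ell}$ \emph{decreases} as you add demands to $T$, so demands that were ``small'' in early rounds can become ``large'' later; nothing prevents the greedy process from accumulating far more than $\frac{1}{\epsilon}$ demands. The paper does not bound $|T|$ by a counting argument at all: it simply \emph{stops} the loop once $|T_\ell|\ge\frac{1}{\epsilon}$, truncates $T_\ell$ to its first $\frac{1}{\epsilon}$ elements, and then shows that removing the single lowest-value demand from this truncated $T$ already makes the rest feasible for {\sc PGZ}$_{T\setminus\{d_{\hat k}\}}$ (because any removed demand had a component exceeding the grid width). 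This case-split is essential and is absent from your plan.

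Second---and this is the crux you explicitly flag but do not resolve---you never say \emph{how} to drop an $O(\epsilon)$-value subset of the small leftover demands so that the remainder lands in $\cP_T(\epsilon)$. Scaling fails (as you note), and coordinatewise rounding to the grid accumulates $\Theta(n)$ error. The paper's missing ingredient is a packing lemma (Lemma~\ref{lem-pack}): if the small leftovers overflow $\cP_T(\epsilon)$, then their total real (or imaginary) component is at least $\frac{1}{2}w^{\rm R}_T$, while each individual one is at most $\frac{\epsilon}{4}w^{\rm R}_T$; hence they can be partitioned into $h\ge\frac{1}{\epsilon}-1$ consecutive batches, each with real component in $(\frac{\epsilon}{4}w^{\rm R}_T,\frac{\epsilon}{2}w^{\rm R}_T]$. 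Removing \emph{any one} batch pulls the total back inside $\cP_T(\epsilon)$ (its component exceeds the grid width $\frac{C}{2^{\rho_1(T)}}$), and by pigeonhole the least-valuable batch carries at most a $\frac{1}{h}\le\frac{\epsilon}{1-\epsilon}$ fraction of the value. This batching is the combinatorial operation you were looking for; without it step~(4) is just a restatement of the difficulty.
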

\begin{proof}
In Algorithm~\ref{Construct}, let $\bar \ell$ and $T_{\bar \ell}$ be the values of $\ell$ and $T_\ell$ at the end of the repeat-until loop (line~\ref{ss1-}).

The basic idea of Algorithm~\ref{Construct} is that we first construct a nested sequence of sets of demands $T_0 \subset T_1 \subset\ldots\subset T_{\bar \ell}$, such that a demand is included in each iteration if it has either a large real component or a large imaginary component. The iteration proceeds until a sufficiently large number of demands have been summed up (namely, $|T_{\bar \ell}|\ge \frac{1}{\epsilon}$), or no demands with large components remain. At the end of the iteration, if the condition in line~\ref{sss1} holds, then $S = {T_\ell}$, i.e., the whole set $S$ can be packed within the polygonized region $\cP_{T_{\bar\ell}}(\epsilon)$. Otherwise, we find a subset of $S$ that is feasible to {\sc PGZ}$_{T_{\bar\ell}}$.

To do so, we partition $S \backslash T_{\bar\ell}$ into groups (possibly one group), each having a large component along either the real or the imaginary axes, with respect to the boundaries of the region $\cR_{T_{\bar \ell}}$. Then removing the group with smallest value among these, or removing one of the large demands with smallest valuation  will ensure that remaining demands have a large value and can be packed within $\cP_{T_{\bar\ell}}(\epsilon)$. 

We then have to consider two cases (line~\ref{sss2}): (i) $|T_{\bar \ell}|$ becomes at least $\frac{1}{\epsilon}$, or (ii) ${S}_{\bar \ell}^{\rm R}\cup {S}_{\bar \ell}^{\rm I} = \varnothing$. 
For case (i), We reduce the size of $T_{\bar \ell}$ by taking the first $\frac{1}{\epsilon}$ demands, and then we combine the remaining demands in $S\backslash T_{\bar \ell}$ into a group $V_1$. We show next that removing any one demand $d_k \in T_{\bar \ell}$ will make $S \backslash \{d_k\}$ a feasible solution to {\sc PGZ}$_{T_{\bar\ell}\backslash\{d_k\}}$. Since $d_{T_\ell} \succeq d_{T_{\ell'}}$ for $\ell> \ell'$,  the lengths $w_{T_\ell}^{\rm R}$ and $w_{T_\ell}^{\rm I}$ are monotone decreasing for $\ell=1,2,\ldots$ Indeed, w.l.o.g., let $\ell$ be such that $d_{k}\in S^{\rm R}_{\ell}$ (also $d_k \not\in T_\ell$) and let $T=T_{\bar\ell}\backslash\{d_k\}$.  Then, $w_T^{\rm R}\le w_{T_\ell}^{\rm R}$ implies that 
$$
\frac{C}{2^{\rho_1(T)}}\le\frac{C}{2^{\rho_1(T_\ell)}}\le\frac{\epsilon w_{T_\ell}^{\rm R}}{4}<d_k^{\rm R}.
$$
Hence, $d^{\rm R}_k$ is larger than the width of the grid's cell $\frac{C}{2^{\rho_1(T)}}$. Therefore, $d_{S\backslash \{d_k\}}$ is a feasible solution to $\text{{\sc PGZ}}_{T}$ because the removed demand exceeds the width of the grid (see Fig.~\ref{fig:rm} for an illustration).

\begin{figure}
	\centering
	\includegraphics[scale=0.8]{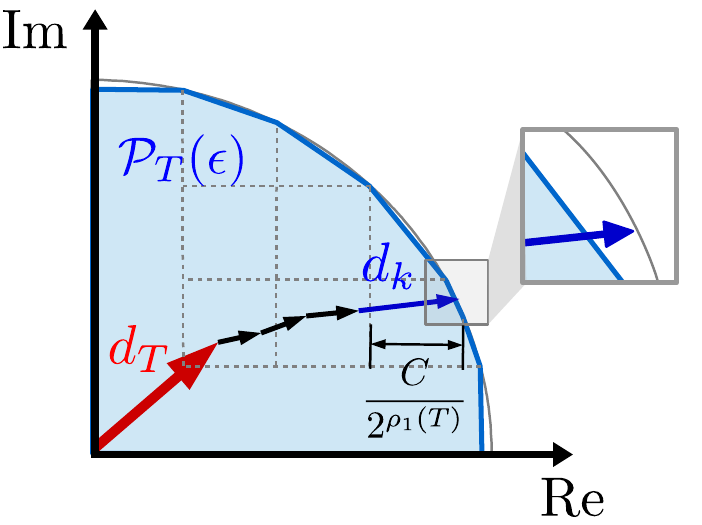}
	\caption{$d^{\rm R}_k$ is larger then the grid's width. If $d_k$ is removed, the remaining set of demands is feasible to $\text{{\sc PGZ}}_{T}$. }
	\label{fig:rm}
\end{figure}

For case (ii), we can apply Lemma~\ref{lem-pack} below to partition $S\backslash T_{\bar \ell}$ into at least $\frac{1}{\epsilon}-1$ groups  $\{V_1,\ldots, V_h\}$, where each group $V_j$ has a large total component along either the real or the imaginary axes (precisely, greater than $\frac{\epsilon}{4}w^{\rm R}_{T_{\bar\ell}}$ or $\frac{\epsilon}{4}w^{\rm I}_{T_{\bar\ell}}$ respectively). This implies that removing any group $V_j$ will make $T_{\bar\ell} \cup \bigcup_{j' \ne j} V_{j'}$ a feasible solution to {\sc PGZ}$_{T_{\bar\ell}}$.

To conclude, there are either (i) at least $\frac{1}{\epsilon}$ demands in $|T_{\bar \ell}|$, or (ii) ${S}_{\bar \ell}^{\rm R}\cup {S}_{\bar \ell}^{\rm I} = \varnothing$. 
%$T_{\bar \ell} \neq \varnothing$, or (iii) $T_{\bar \ell}  = \varnothing$. For the both cases (ii) and (iii), there are at least $\frac{1}{\epsilon}-1$ groups in $S^\ast \backslash T_{\bar \ell}$.
We define $S'$ by deleting a minimum valuation demand  or group of demands from $ S$ (lines \ref{sss4} and \ref{sss5}). Then, we set $\widetilde d_k=d_k$ if $k\in S'$ and  $\widetilde d_k=0$ if $k\not \in S'$. Hence, in case (i), $v(\widetilde\bd)\ge(1-\epsilon)\OPT$, and in case (ii), $v(\widetilde\bd)\ge (1-\frac{1}{h})\OPT\ge\frac{1-2\epsilon}{1-\epsilon}\cdot\OPT\geq(1-2\epsilon)\OPT$. 
\end{proof}

\begin{algorithm}[!htb]
\caption{{\sc Construct}$(\{v_k\}_{k\in\cN},\bd,C,\epsilon)$}
\label{Construct}
\begin{algorithmic}[1]
\Require  \mbox{Users' valuations $\{v_k\}_{k\in\cN}$}; \mbox{a feasible allocation $\bd =(d_1,...,d_n)$};  \mbox{capacity $C$};
\Statex  \hspace{30pt} \mbox{accuracy parameter $\epsilon$}
\Ensure A set of demands $T\subseteq\{d_1,\ldots,d_n\}$ and a feasible allocation $\widetilde{\bd}$ 
\Statex \hspace{-18pt}{\bf Initialize:} $S\leftarrow\{d_1,\ldots,d_n\}$; $\widetilde\bd=\bd$; $\ell\leftarrow 0$; $T_\ell\leftarrow\varnothing$;  $\eta_\ell \leftarrow \bzero$ \label{init}
\Statex \Comment{{\em Find a subset of large demands $T_{\ell}$}}
\Repeat 
  \State $\ell\leftarrow\ell+1$
  \State $d_{T_\ell} \leftarrow \sum_{d\in {T_{\ell-1}}}d$
  \State $w^{\rm I}_{T_\ell} \leftarrow \sqrt{C^2 -{\rm Re}(d_{T_\ell})^2} - {\rm Im}(d_{T_\ell}); \quad 
  w^{\rm R}_{T_\ell} \leftarrow \sqrt{C^2 -{\rm Im}(d_{T_\ell})^2} - {\rm Re}(d_{T_\ell})$
  \State ${S}_\ell^{\rm R}\leftarrow\{d\in S\setminus T_\ell\mid d^{\rm R}>\frac{\epsilon}{4} w^{\rm R}_{T_\ell}\}$; \quad ${S}_\ell^{\rm I}\leftarrow\{d\in S\setminus T_\ell\mid d^{\rm I}>\frac{\epsilon}{4} w^{\rm I}_{T_\ell}\}$
  \State ${T_\ell} \leftarrow {T_\ell} \cup{S}_\ell^{\rm R}\cup {S}_\ell^{\rm I}$ 
  \State $\eta_{\ell}\leftarrow \eta_{\ell-1}+\sum_{d\in {S}_\ell^{\rm R}\cup {S}_\ell^{\rm I}}d$
  \Until{ $|{T_\ell}|\ge \frac{1}{\epsilon}$ \  or \   ${S}_\ell^{\rm R}\cup {S}_\ell^{\rm I}=\varnothing$ \   or  \  $ S\backslash {T_\ell}=\varnothing$}\label{ss1-}   
 
\State $\kappa\leftarrow\sum_{d\in  S\backslash {T_\ell}}d$ 
\If{$S\backslash {T_\ell}=\varnothing$ \   or \   $\eta_{\ell}+\kappa\in \cP_{T_\ell}(\epsilon)$} \label{sss1} 
  \State \Return $({T_\ell},\bd)$
\Else
       \Statex \Comment{{\em Find a subset $S' \subset S$  that is feasible to {\sc PGZ}$_{T_{\ell}}$}}
     \If{$|{T_\ell}|\ge\frac{1}{\epsilon}$} \label{sss2}
         \State $T_\ell\leftarrow $ the set of the first $\frac{1}{\epsilon}$ elements added to $T_\ell$ 
         \State $h\leftarrow1$; $V_1\leftarrow S\backslash {T_\ell}$\label{sss2-}
      \Else

        \State Find a partition $V_1,\ldots, V_h$ over $S \backslash {T_\ell}$ such that either 
		\Statex \qquad \qquad (i)~ $\sum_{d\in V_j}d^{\rm R}\geq\frac{\epsilon}{4} w^{\rm R}_{T_\ell}$ for all $j={1,\ldots,h}$, or 
		\Statex \qquad \qquad (ii) $\sum_{d\in V_j}d^{\rm I}\ge\frac{\epsilon}{4}w^{\rm I}_{T_\ell} $ for all $j={1,\ldots,h}$,   \label{sss3}
		\Statex \qquad \quad where $h$ is defined in Lemma~\ref{lem-pack}
     \EndIf
	 \State Pick $\widehat{k}\in\argmin\{v_k(d_k)\mid d_k\in {T_\ell}\}$\label{sss3-}
	 \State Pick $\widehat{j}\in\argmin\{\sum_{k:d_k\in V_j}v_k(d_k)\mid j={1,\ldots,h}\}$
      \If{$v_{\widehat{k}}(d_{\widehat{k}}) < \sum_{k:d_k\in V_{\widehat{j}}}v_k(d_k)$}\label{sss3.4}
          \State  $\widetilde d_{\widehat k}\leftarrow\bzero$
          \State \Return $(T_\ell\backslash\{d_{\widehat k}\},\widetilde\bd)$\label{sss4}
      \Else
          \State $\widetilde d_k\leftarrow \bzero$ for all $k:d_k\in V_{\widehat{j}}$ 
          \State \Return $(T_\ell, \widetilde\bd)$\label{sss5}
      \EndIf
\EndIf
\end{algorithmic}
\end{algorithm}

\begin{figure}
\centering
  \includegraphics[scale=0.8]{fig/fig3.pdf}
	\caption{Demands $\{d_k \mid T \in \cN \backslash T\}$ are partitioned into $\{V_i\}$. }%The zoomed-in gray area illustrates that a total satisfied demand can lie outside the polygonized region $\cP_T(\epsilon)$.}
	\label{f3}
\end{figure}

\begin{lemma}\label{lem-pack}
Consider a set  of demands $S\subseteq\cD$ and $T \subseteq S$, such that
\begin{enumerate}
\item $S$ is feasible solution to \textsc{MultiCKP$[0,\frac{\pi}{2}]$}, but $S$ is not a feasible solution to {\sc PGZ}$_T$
\item $d^{\rm R}\le \frac{\epsilon}{4} w^{\rm R}_T$ and  $d^{\rm I}\le \frac{\epsilon}{4} w^{\rm I}_T$, for all $d \in S \backslash T$. 
\end{enumerate}
 Then there exists a partition $\{V_1,\ldots, V_h\}$ of $S \backslash T$ such that 
\begin{enumerate}
\item 
 either (i) $\sum_{d\in V_j}d^{\rm R}\geq\frac{\epsilon}{4} w^{\rm R}_T$ for all $j={1,\ldots,h}$, 
\item or (ii) $\sum_{d\in V_j}d^{\rm I}\ge\frac{\epsilon}{4}w^{\rm I}_T $ for all $j={1,\ldots,h}$.
\end{enumerate} 
where $h\in[\frac{1}{\epsilon}-1,\frac{4}{\epsilon})$.% , or if $T = \varnothing$,  $h\in[\frac{1}{\epsilon},\frac{4}{\epsilon}+1)$.
\end{lemma}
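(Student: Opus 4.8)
The plan is to isolate a single geometric fact about the point $p\triangleq\sum_{d\in S}d=d_T+\kappa$, where $\kappa\triangleq\sum_{d\in S\setminus T}d$, and then finish with a one-dimensional bin-covering argument. Since all demands have argument in $[0,\tfrac{\pi}{2}]$, the vectors $d_T,\kappa,p$ all lie in the closed first quadrant and $p\succeq d_T\succeq z_T$, so $p\in\cR_T$; feasibility of $S$ for \textsc{MultiCKP}$[0,\tfrac{\pi}{2}]$ gives $|p|\le C$, and infeasibility of $S$ for {\sc PGZ}$_T$ gives $p\notin\cP_T(\epsilon)$. Two bounds are then immediate: since $p^{\rm I}\ge d_T^{\rm I}$ and $|p|\le C$, we have $p^{\rm R}\le\sqrt{C^2-(p^{\rm I})^2}\le\sqrt{C^2-(d_T^{\rm I})^2}=d_T^{\rm R}+w^{\rm R}_T$, hence $\kappa^{\rm R}\le w^{\rm R}_T$; symmetrically $\kappa^{\rm I}\le w^{\rm I}_T$. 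These will control the number of groups from above.

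The heart of the argument is the dichotomy: \emph{either} $\kappa^{\rm R}>(\tfrac12-\tfrac\epsilon4)w^{\rm R}_T$ \emph{or} $\kappa^{\rm I}>(\tfrac12-\tfrac\epsilon4)w^{\rm I}_T$. To prove it I would first argue that $\cR_T\setminus\cP_T(\epsilon)$ is precisely the thin sliver trapped between the polygonized boundary arc of $\cP_T(\epsilon)$ and the circular arc $\{|z|=C\}\cap\cR_T$: the hull defining $\cP_T(\epsilon)$ already contains $z_T$, $\psi_T^1$, $\psi_T^2$ and all arc vertices $P_T(\epsilon)$ between them, which together cover all of $\cR_T$ except that sliver, so a point of $\cR_T$ can leave $\cP_T(\epsilon)$ only by lying strictly outside one of the chords of the arc-polygonization --- lying outside the horizontal edge ${\psi'}_T^1\psi_T^1$, the vertical edge $\psi_T^2{\psi'}_T^2$, or an axis edge is impossible once $p\succeq z_T$ and $|p|\le C$. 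Consecutive vertices of the arc-polygonization are intersections of $|z|=C$ with consecutive grid lines, so they differ by less than $C/2^{\rho_1(T)}\le\tfrac\epsilon4 w^{\rm R}_T$ in the real coordinate; chasing the horizontal slice through $p$ then yields
\[
\big(p^{\rm R}+\tfrac\epsilon4 w^{\rm R}_T\big)^2+(p^{\rm I})^2\ \ge\ \big(p^{\rm R}+C/2^{\rho_1(T)}\big)^2+(p^{\rm I})^2\ >\ C^2 .
\]
Now if both $\kappa^{\rm R}\le(\tfrac12-\tfrac\epsilon4)w^{\rm R}_T$ and $\kappa^{\rm I}\le(\tfrac12-\tfrac\epsilon4)w^{\rm I}_T$, then $p^{\rm R}+\tfrac\epsilon4 w^{\rm R}_T\le d_T^{\rm R}+\tfrac12 w^{\rm R}_T$ and $p^{\rm I}\le d_T^{\rm I}+\tfrac12 w^{\rm I}_T$, so the left-hand side above is at most $(d_T^{\rm R}+\tfrac12 w^{\rm R}_T)^2+(d_T^{\rm I}+\tfrac12 w^{\rm I}_T)^2$, which, using the identity $2d_T^{\rm R}w^{\rm R}_T+(w^{\rm R}_T)^2=2d_T^{\rm I}w^{\rm I}_T+(w^{\rm I}_T)^2=C^2-|d_T|^2$ (both sides equal $C^2-(d_T^{\rm R})^2-(d_T^{\rm I})^2$ by the definitions of $w^{\rm R}_T,w^{\rm I}_T$), equals $C^2-\tfrac14\big((w^{\rm R}_T)^2+(w^{\rm I}_T)^2\big)<C^2$ --- a contradiction. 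Hence the dichotomy holds.

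For the bin-covering step, assume w.l.o.g.\ $\kappa^{\rm R}>(\tfrac12-\tfrac\epsilon4)w^{\rm R}_T$ (the case $\kappa^{\rm I}>(\tfrac12-\tfrac\epsilon4)w^{\rm I}_T$ is symmetric and yields alternative (ii)); set $s\triangleq\tfrac\epsilon4 w^{\rm R}_T$, so hypothesis (2) says $d^{\rm R}\le s$ for every $d\in S\setminus T$. Scan the demands of $S\setminus T$, accumulating them into a current group and closing the group as soon as its total real part first reaches $s$, then merge the final (possibly deficient) group into the last closed one. Each resulting group $V_j$ has $\sum_{d\in V_j}d^{\rm R}\ge s=\tfrac\epsilon4 w^{\rm R}_T$, which is exactly alternative (i). The number $h$ of groups satisfies $h\le\kappa^{\rm R}/s\le w^{\rm R}_T/s=\tfrac4\epsilon$ (each group has real total $\ge s$, and $\kappa^{\rm R}\le w^{\rm R}_T$), and since every non-merged group has real total $<2s$ and $\kappa^{\rm R}>(\tfrac12-\tfrac\epsilon4)w^{\rm R}_T=(\tfrac2\epsilon-1)s$, also $h>\tfrac1\epsilon-1$; if $h$ happens to equal $\tfrac4\epsilon$, merging any two of the groups pushes $h$ strictly below $\tfrac4\epsilon$ while keeping all totals $\ge s$ and $h\ge\tfrac1\epsilon-1$. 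This produces the required partition with $h\in[\tfrac1\epsilon-1,\tfrac4\epsilon)$.

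The one genuinely delicate point --- the step I expect to require the most care --- is the geometric claim that $p\notin\cP_T(\epsilon)$ forces $p$ outside an \emph{arc} chord (rather than some other edge of $\cP_T(\epsilon)$), and hence within one grid cell of the circle; this entails checking each of the ``flat'' edges of $\cP_T(\epsilon)$ against $p\succeq z_T$ and $|p|\le C$, and it inherits the finite-precision caveats flagged in the definitions of $z_T$ and $P_T(\epsilon)$. Once that inequality is in hand, the remainder is elementary algebra together with a standard greedy covering bound.
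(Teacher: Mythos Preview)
Your proof is correct and follows essentially the same two-step structure as the paper's: first a dichotomy on $\kappa^{\rm R}$ versus $\kappa^{\rm I}$, then a one-dimensional greedy bin-covering along the large coordinate. The paper simply asserts ``either $\kappa^{\rm R}\ge \tfrac{1}{2}w^{\rm R}_T$ or $\kappa^{\rm I}\ge \tfrac{1}{2}w^{\rm I}_T$'' with a pointer to a figure, whereas you actually \emph{prove} a slightly weaker threshold $(\tfrac{1}{2}-\tfrac{\epsilon}{4})$ via the sliver argument and the identity $(d_T^{\rm R}+\tfrac{1}{2}w^{\rm R}_T)^2+(d_T^{\rm I}+\tfrac{1}{2}w^{\rm I}_T)^2=C^2-\tfrac{1}{4}((w^{\rm R}_T)^2+(w^{\rm I}_T)^2)$; this is more rigorous and the weaker threshold is still enough for the lower bound on $h$. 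For the covering step the paper closes a batch once adding the next item would push the sum past $\tfrac{\epsilon}{2}w^{\rm R}_T$ and sets $h=h'-1$, while you close once the running sum reaches $s=\tfrac{\epsilon}{4}w^{\rm R}_T$ and merge at the end; these are cosmetically different but yield the same bounds. Two minor points worth tightening in a final write-up: your lower-bound sentence should account for the merged last group (total $<3s$, not $<2s$), though the conclusion $h>\tfrac{1}{\epsilon}-1$ survives; and the geometric claim that $p\notin\cP_T(\epsilon)$ forces $p$ past an arc chord deserves the explicit edge-by-edge check you already sketched.
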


\begin{proof}
First, we define $\kappa \triangleq\sum_{d\in S \backslash T}d$.
If $S$ is a feasible solution to \textsc{MultiCKP$[0,\frac{\pi}{2}]$}, but $S$ is not a feasible solution to {\sc PGZ}$_T$, then at least one of the following two conditions must hold: either (i) $\kappa^{\rm R}\ge\frac{w^{\rm R}_T}{2}$ or (ii) $\kappa^{\rm I}\ge\frac{w^{\rm I}_T}{2}$ (see Fig.~\ref{f3} for an illustration). 

Without loss of generality, we assume case (i). Let us pack consecutive demands $d \in S \backslash T$ into batches such that the sum in each batch has a real component of length in $(\frac{\epsilon}{4} w^{\rm R}_T,\frac{\epsilon}{2} w^{\rm R}_T]$. More precisely, we fix an order on $S \backslash T = \{d_1, \ldots, d_r \}$, and find indices $1=k_1<k_2<\cdots <k_{h'}<k_{h'+1}=r+1$ such that
\begin{eqnarray}\label{ee1}
\sum_{k=k_\ell}^{k_{\ell+1}-1}d_k^{\rm R}&\leq&\frac{\epsilon}{2}w^{\rm R}_T,\text { for $l=1,\ldots,h'$}\\ \mbox{and \ } \sum_{k=k_\ell}^{k_{\ell+1}}d_k^{\rm R} &>&\frac{\epsilon}{2}w^{\rm R}_T\text { for $l=1,\ldots,h'-1$}. \label{ee2}
\end{eqnarray}
It follows from Eqn.~\raf{ee2} that $\sum_{k=k_\ell}^{k_{\ell+1}-1}d_k^{\rm R} >\frac{\epsilon}{4}w^{\rm R}_T $ for $l=1,\ldots,h'-1,$ since $d_{k_{\ell+1}}^{\rm R}\le\frac{\epsilon}{4}w^{\rm R}_T $. It also follows that $\frac{1}{\epsilon}\le h'<\frac{4}{\epsilon}+1$, since summing Eqn.~\raf{ee1} for $\ell=1,\ldots, h'$ yields
\begin{eqnarray}
\frac{\epsilon}{2} h' w^{\rm R}_T \ge \sum_{\ell=1}^{h'}\sum_{k=k_\ell}^{k_{\ell+1}-1}d_k^{\rm R}=\sum_{k=1}^r d_k^{\rm R}=\kappa^{\rm R}\ge \frac{w^{\rm R}_T}{2}. \label{eq:kappa}
\end{eqnarray}
The last inequality follows from our assumption.
%If $T=\varnothing$, the R.H.S. of Eqn.~\raf{eq:kappa} is $\kappa^{\rm R} \ge \frac{\sqrt{2}}{2} w_T^{\rm R}$ , therefore  $h'\ge \frac{1}{\epsilon} + 1$ (assuming $\epsilon < \frac{1}{4}$). More precisely,
%$ h' \ge \frac{\sqrt{2}}{\epsilon} = \frac{1}{\epsilon} + \frac{\sqrt{2}-1}{\epsilon} \ge \frac{1}{\epsilon} +  1$.

 Similarly, summing Eqn.~\raf{ee2} for $\ell=1,\ldots, h'-1$ yields
\begin{eqnarray*}
(h'-1) \frac{\epsilon}{2} w^{\rm R}_T < \sum_{\ell=1}^{h'-1}\sum_{k=k_\ell}^{k_{\ell+1}}d_k^{\rm R}\le 2\sum_{k=1}^r d_k^{\rm R}=2\kappa^{\rm R}\le 2 w^{\rm R}_T,
\end{eqnarray*}
where the last inequality is derived by the feasibility of S.
Setting $V_\ell\triangleq\{d_{k_\ell},d_{k_\ell+1},\ldots,d_{k_{\ell+1}-1}\}$, for $\ell=1,\ldots,h'-2$,  $V_{h'-1}=\{d_{k_{h'-1}},\ldots,d_r\}$, and $h\triangleq h'-1$  satisfies the claim of the Lemma.  
\end{proof}

%\vspace{-5pt}
\subsection{Making the PTAS Truthful}\label{sec:truthful-ptas}
 In this section, we make the PTAS, presented in Sec.~\ref{sec:ptas}, truthful. 
One technical difficulty that arises in this case is that the polygons $\cP_T(\epsilon)$ defined by a guessed initial sets $T$ are not {\it monotone} w.r.t. the set of demands in $T$, that is, if we obtain $T'$ from $T$ by increasing one of the demands from $d_k$ to $d_k'\succ d_k$, then it could be the case that $\cP_T(\epsilon)\not\supseteq \cP_{T'}(\epsilon)$. Hence, it is possible to manipulate the algorithm by a selfish user in $T$ who untruthfully increases his demand to change his allocation to become a winner. To handle this issue, we will show that the number of possible polygons that arise from such a selfish user, misreporting his true demand set, and can possibly change the outcome, is only a constant in $\epsilon$ and $\delta$ (recall that we consider {\sc MultiCKP}$[0,\frac{\pi}{2}-\delta]$). Thus, it would be enough to consider only all such polygons arising from the reported demand set. Figure~\ref{fnm} provides a pictorial illustration. 
\begin{figure}
	\centering
	\includegraphics[scale=0.8]{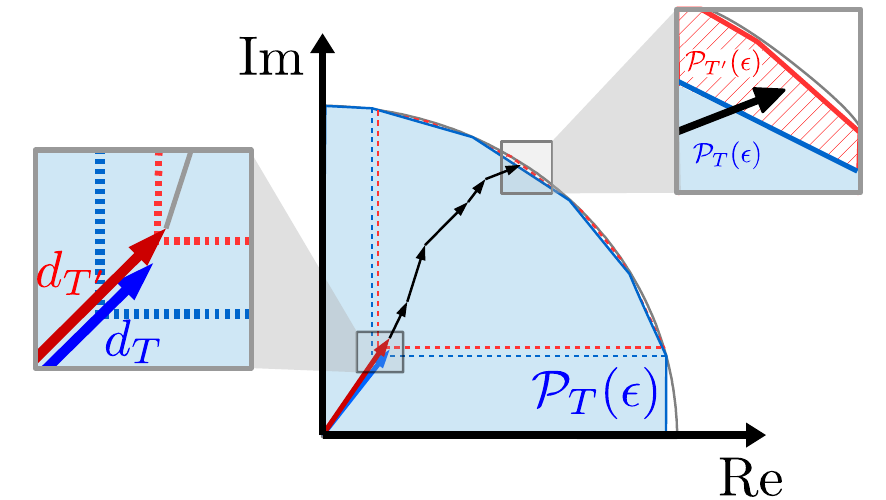}
	\caption{We consider two polygonized regions $\cR_T$ and $\cR_{T'}$, one defined by $d_T$ and the other by $d_{T'}$ (which is obtained by slightly increasing one demand in $T$). The figure illustrates the case that such a slight increase can possibly change the granularity of the grid  such that an infeasible solution in $\cP_T(\epsilon)$ would become feasible in $\cP_{T'}(\epsilon)$.     }
	\label{fnm}
\end{figure}

Since we assume that $\arg(d)\in[0,\frac{\pi}{2}-\delta]$, for all $d\in \bigcup_kD_k$, we may assume further by performing a rotation that any such vector $d$ satisfies $\arg(d)\in[\frac{\delta}{2},\frac{\pi}{2}-\frac{\delta}{2}]$. 
For convenience, we continue to denote the new demand sets by $D_k$, and redefine the valuation functions in terms of these rotated sets. By this assumption, 
\begin{equation}\label{e-a}
	\tan\frac{\delta}{2}\le\frac{d_T^{\rm I}}{d_T^{\rm R}}\le\left(\tan\frac{\delta}{2}\right)^{-1},~~\text{for any $T\subseteq\cD$}.
\end{equation}
%where, as before, $d_T\triangleq\sum_{d\in T}d$.
We may also assume, by scaling $\epsilon$ by $2/(1+2\cot^2\frac{\delta}{2})$ if necessary, that
\begin{equation}\label{assum}
	\epsilon\le\frac{2}{1+2\cot^2\frac{\delta}{2}}.
\end{equation}

Now we state an important lemma that will be used to prove our main result.

\begin{lemma}\label{cl1-}
	Let $T,T'\subseteq\cD$ be such that $d_T\preceq d_{T'}$. Consider a vector $\kappa\in\CC$ such that $d_{T'}+\kappa\in \cP_{T'}(\epsilon)$. Then either (i) $d_{T}+\kappa\in\cP_{T}(\epsilon)$, or (ii) $\rho_1(T')\le \rho_1(T)+1$ and $\rho_2(T')\le \rho_2(T)+1$.
\end{lemma}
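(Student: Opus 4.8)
Here is the plan. Throughout I use that in the regime $\arg(d)\in[\tfrac{\delta}{2},\tfrac{\pi}{2}-\tfrac{\delta}{2}]$ all the vectors $d_T,d_{T'},\kappa$ have nonnegative real and imaginary parts (recall $\kappa$ arises as a sum of demands), so $d_T\preceq d_{T'}$ reads componentwise; write $\Delta^{\rm R}\triangleq d_{T'}^{\rm R}-d_T^{\rm R}\ge 0$, $\Delta^{\rm I}\triangleq d_{T'}^{\rm I}-d_T^{\rm I}\ge 0$. From these inequalities one gets directly $w^{\rm R}_{T'}\le w^{\rm R}_T$, $w^{\rm I}_{T'}\le w^{\rm I}_T$, hence $\rho_1(T)\le\rho_1(T')$ and $\rho_2(T)\le\rho_2(T')$; thus (ii) is exactly the statement that neither index jumps by $2$ or more, and the whole proof amounts to a case split on whether it does. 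I will also record an elementary geometric fact about $\cP_T(\epsilon)$: if $p\in\cR_T$ and the horizontal distance from $p$ to the bounding circle is at least the horizontal grid width $C/2^{\rho_1(T)}$ (respectively, the vertical distance is at least $C/2^{\rho_2(T)}$), then $p\in\cP_T(\epsilon)$. This holds because the arc-side boundary of $\cP_T(\epsilon)$ is a chain of chords joining consecutive grid–circle intersections, and at any height $y$ in the relevant range such a chord has abscissa at least $\sqrt{C^2-y^2}-C/2^{\rho_1(T)}$, since its upper endpoint is a circle point sitting on a grid line within $C/2^{\rho_1(T)}$ of $p$ horizontally; a small amount of care is needed only at the two ends of the arc (the edges through $\psi^1_T,\psi^2_T$), where $d_T+\kappa\succeq z_T$ keeps us in the arc portion. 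Finally note $d_T+\kappa\succeq z_T$ and $|d_T+\kappa|\le|d_{T'}+\kappa|\le C$, so $d_T+\kappa\in\cR_T$.

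Case 1: $\Delta^{\rm R}\ge C/2^{\rho_1(T)}$ or $\Delta^{\rm I}\ge C/2^{\rho_2(T)}$. Using only $|d_{T'}+\kappa|\le C$ and $d_{T'}\succeq d_T$, the horizontal distance from $d_T+\kappa$ to the circle is $\sqrt{C^2-(d_T^{\rm I}+\kappa^{\rm I})^2}-(d_T^{\rm R}+\kappa^{\rm R})\ge (d_{T'}^{\rm R}+\kappa^{\rm R})-(d_T^{\rm R}+\kappa^{\rm R})=\Delta^{\rm R}$ (the first inequality because $d_T^{\rm I}+\kappa^{\rm I}\le d_{T'}^{\rm I}+\kappa^{\rm I}$ and $d_{T'}^{\rm R}+\kappa^{\rm R}\le\sqrt{C^2-(d_{T'}^{\rm I}+\kappa^{\rm I})^2}$), and symmetrically the vertical distance is $\ge\Delta^{\rm I}$. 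Combining with the geometric fact above, the case hypothesis yields $d_T+\kappa\in\cP_T(\epsilon)$, i.e.\ (i).

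Case 2: $\Delta^{\rm R}< C/2^{\rho_1(T)}\le\tfrac{\epsilon}{4}w^{\rm R}_T$ and $\Delta^{\rm I}< C/2^{\rho_2(T)}\le\tfrac{\epsilon}{4}w^{\rm I}_T$. I prove (ii); by the symmetry of the construction under swapping the two axes it suffices to show $w^{\rm R}_{T'}>\tfrac12 w^{\rm R}_T$, since then $C/2^{\rho_1(T)+1}\le\tfrac{\epsilon}{8}w^{\rm R}_T<\tfrac{\epsilon}{4}w^{\rm R}_{T'}$, forcing $\rho_1(T')\le\rho_1(T)+1$ (and the mirror estimate gives $\rho_2(T')\le\rho_2(T)+1$). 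Write $w^{\rm R}_T-w^{\rm R}_{T'}=\bigl(\sqrt{C^2-(d_T^{\rm I})^2}-\sqrt{C^2-(d_{T'}^{\rm I})^2}\bigr)+\Delta^{\rm R}$; the second summand is below $\tfrac{\epsilon}{4}w^{\rm R}_T$, so it remains to bound the first summand by $(\tfrac12-\tfrac{\epsilon}{4})w^{\rm R}_T$. Clearing the square roots turns this into a polynomial inequality in $d_T^{\rm R},d_T^{\rm I},w^{\rm R}_T,\Delta^{\rm I}$, whose only nontrivial ingredients are: (a) $\Delta^{\rm I}<\tfrac{\epsilon}{4}w^{\rm I}_T$; (b) the angle bound \raf{e-a}, which gives $d_T^{\rm I}\le d_T^{\rm R}\cot\tfrac{\delta}{2}$ and, via the identity $w^{\rm R}_T(w^{\rm R}_T+2d_T^{\rm R})=C^2-|d_T|^2=w^{\rm I}_T(w^{\rm I}_T+2d_T^{\rm I})$, also $w^{\rm I}_T\le w^{\rm R}_T\cot\tfrac{\delta}{2}$; and (c) the calibration \raf{assum}, which implies $\epsilon\,(1+\cot^2\tfrac{\delta}{2})\le 2$ and is exactly what makes the coefficient of the dominant $d_T^{\rm R}w^{\rm R}_T$ term come out nonnegative. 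This establishes (ii).

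The main obstacle is this last inequality $\sqrt{C^2-(d_T^{\rm I})^2}-\sqrt{C^2-(d_{T'}^{\rm I})^2}<(\tfrac12-\tfrac{\epsilon}{4})w^{\rm R}_T$. Bounding the left side naively by $\Delta^{\rm I}$ times the slope of the circle at $d_{T'}^{\rm I}$ fails, because that slope is unbounded as $d_T^{\rm R}\to 0$; one must instead keep the $d_T^{\rm R}$-terms on both sides and verify they cancel in the right direction, which is precisely where the specific quantitative form of \raf{assum} — not merely ``$\epsilon$ small'' — and the derived inequality $w^{\rm I}_T\le w^{\rm R}_T\cot\tfrac{\delta}{2}$ are needed. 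The geometric fact in Case 1 is routine but should be checked carefully at the transition between the circular arc and the straight edges of $\cP_T(\epsilon)$; the hypotheses $\kappa^{\rm R},\kappa^{\rm I}\ge 0$ and $|d_{T'}+\kappa|\le C$ are used there to place $d_T+\kappa$ safely inside $\cR_T$.
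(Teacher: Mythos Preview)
Your proposal is correct and follows essentially the same route as the paper: reduce to the situation where both $\Delta^{\rm R}$ and $\Delta^{\rm I}$ are below the corresponding grid widths, and then show $w_{T'}^{\rm R}\ge\tfrac12 w_T^{\rm R}$ (and symmetrically for the imaginary side) using the angle bound \raf{e-a} and the calibration \raf{assum}.

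Two small differences are worth noting. First, your Case~1 is the clean contrapositive of what the paper does: the paper argues that if (i) fails then $d_T+\kappa$ and $d_{T'}+\kappa$ must lie in the \emph{same} grid cell of $\cL_1(\rho_1(T))\cup\cL_2(\rho_2(T))$, whence $\Delta^{\rm R},\Delta^{\rm I}$ are bounded by the cell dimensions; you instead prove directly that a large $\Delta^{\rm R}$ (or $\Delta^{\rm I}$) forces $d_T+\kappa$ to have horizontal (resp.\ vertical) clearance to the circle at least one grid width, hence lies in $\cP_T(\epsilon)$. Both arguments encode the same geometric content; yours is a bit more explicit. Second, in Case~2 you claim the sharper bound $w_T^{\rm I}\le w_T^{\rm R}\cot\tfrac{\delta}{2}$ via the identity $w_T^{\rm R}(w_T^{\rm R}+2d_T^{\rm R})=w_T^{\rm I}(w_T^{\rm I}+2d_T^{\rm I})$, whereas the paper derives only $w_T^{\rm I}\le 2w_T^{\rm R}\cot\tfrac{\delta}{2}$ (through rationalising and bounding a monotone function). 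Your sharper bound is in fact correct: the identity gives $w_T^{\rm I}/w_T^{\rm R}=(w_T^{\rm R}+2d_T^{\rm R})/(w_T^{\rm I}+2d_T^{\rm I})$, and one checks that this ratio always lies between $1$ and $d_T^{\rm R}/d_T^{\rm I}\le\cot\tfrac{\delta}{2}$. Either bound suffices, since \raf{assum} already gives $\epsilon(1+2\cot^2\tfrac{\delta}{2})\le 2$; your version simply uses a bit less of that slack. In the write-up it would help to replace ``clearing the square roots'' by the explicit difference-of-squares factorisation the paper uses, $\sqrt{C^2-(d_T^{\rm I})^2}-\sqrt{C^2-(d_{T'}^{\rm I})^2}=\Delta^{\rm I}(d_T^{\rm I}+d_{T'}^{\rm I})/\bigl((w_T^{\rm R}+d_T^{\rm R})+(w_{T'}^{\rm R}+d_{T'}^{\rm R})\bigr)$, after which your listed ingredients plug in directly.
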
 
\begin{proof}
	Suppose that $d_{T}+\kappa\not\in\cP_{T}(\epsilon)$. Since $d_{T'}\succeq d_T$, it also holds that  $d_{T'}+\kappa\not\in\cP_{T}(\epsilon)$. This implies that both $d_{T}+\kappa$ and $d_{T'}+\kappa$ lie within the same grid cell at vertical and horizontal levels $\rho_1(T)$ and  $\rho_2(T)$, respectively, because otherwise $|d_T + \kappa + (\frac{C}{2^{\rho_1(T)}} + {\bf i} \frac{C}{2^{\rho_2(T)}})|> C$ and $d_T + \kappa + (\frac{C}{2^{\rho_1(T)}} + {\bf i} \frac{C}{2^{\rho_2(T)}}) \not\in \cal{P}_{T'}(\epsilon)$ which is a contradiction. Hence $d_{T'}^{\rm R}-d_{T}^{\rm R}\le\frac{C}{2^{\rho_1(T)}}\le\frac{\epsilon w_T^{\rm R}}{4}$ and $d_{T'}^{\rm I}-d_{T}^{\rm I}\le\frac{C}{2^{\rho_2(T)}}\le\frac{\epsilon w_T^{\rm I}}{4}$. 
	
	From the definition \raf{wT} of $w_{T}^{\rm R}$, we have 
	\begin{align*}
		(w_{T}^{\rm R} + d_{T}^{\rm R})^2 = C^2 - (d_{T}^{\rm I })^2 \quad\text{ and } \quad  (w_{T'}^{\rm R} + d_{T'}^{\rm R})^2 = C^2 - (d_{T'}^{\rm I})^2
	\end{align*}
	Subtracting the above equations from each other obtains
	\begin{align}
		&(w_{T}^{\rm R} + d_{T}^{\rm R})^2 - (w_{T'}^{\rm R} + d_{T'}^{\rm R})^2 =  (d_{T'}^{\rm I })^2  - (d_{T}^{\rm I })^2 \notag\\
			\Rightarrow& (w_{T}^{\rm R} + d_{T}^{\rm R} - w_{T'}^{\rm R} - d_{T'}^{\rm R}) ({w_{T}^{\rm R} + d_{T}^{\rm R} + w_{T'}^{\rm R} + d_{T'}^{\rm R}}) = ({  d_{T'}^{\rm I} -  d_{T}^{\rm I}}) ({  d_{T'}^{\rm I} +  d_{T}^{\rm I}})\notag\\
		\Rightarrow& w_{T}^{\rm R} + d_{T}^{\rm R} - w_{T'}^{\rm R} - d_{T'}^{\rm R} =(  d_{T'}^{\rm I} -  d_{T}^{\rm I})\bigg( \frac{  d_{T'}^{\rm I} +  d_{T}^{\rm I}}{w_{T}^{\rm R} + d_{T}^{\rm R} + w_{T'}^{\rm R} + d_{T'}^{\rm R}}\bigg) \label{eq:wrl}
	\end{align}
%	We obtain by Using Eqn.~\raf{eq:wrl} obtains,
	Therefore,
	\begin{eqnarray}\label{e1-}
	w_{T}^{\rm R}&=&w_{T'}^{\rm R}+d_{T'}^{\rm R}-d_{T}^{\rm R}+ (w_{T}^{\rm R} + d_{T}^{\rm R} - w_{T'}^{\rm R} - d_{T'}^{\rm R})\nonumber
	\\&=&w_{T'}^{\rm R}+d_{T'}^{\rm R}-d_{T}^{\rm R}+\left(d_{T'}^{\rm I}-d_{T}^{\rm I}\right)\left(\frac{d_{T'}^{\rm I}+d_{T}^{\rm I}}{d_{T'}^{\rm R}+w_{T'}^{\rm R}+d_{T}^{\rm R}+w_{T}^{\rm R}}\right)\label{eq:swrl}\\
	&\le&w_{T'}^{\rm R}+\frac{\epsilon w_T^{\rm R}}{4}+\frac{\epsilon w_T^{\rm I}}{4}\left(\frac{d_{T'}^{\rm I}+d_{T}^{\rm I}}{d_{T'}^{\rm R}+d_{T}^{\rm R}}\right)\nonumber\\
	&\le &w_{T'}^{\rm R}+\frac{\epsilon w_{T}^{\rm R}}{4}\left(1+\frac{w_T^{\rm I}}{w_T^{\rm R}}\cdot\frac{1}{\tan \frac{\delta}{2}}\right),
	\end{eqnarray}
	where we use Eqn.~\raf{eq:wrl} in Eqn.~\raf{eq:swrl}, and use Eqn.~\raf{e-a} in the last inequality. We can upper-bound $w_T^{\rm I}/w_T^{\rm R}$ by $2/\tan\frac{\delta}{2}$ also using \raf{e-a} as follows:
	\begin{align*}
			\frac{w_T^{\rm I}}{w_T^{\rm R}}=&\frac{\sqrt{1 -\left(\frac{d_T^{\rm R}}{C}\right)^2} - \frac{d_T^{\rm I}}{C}}{ \sqrt{1 -\left(\frac{d_T^{\rm I}}{C}\right)^2} - \frac{d_T^{\rm R}}{C}} =  \frac{\sqrt{1 -\left(\frac{d_T^{\rm R}}{C}\right)^2} - \frac{d_T^{\rm I}}{C}}{ \sqrt{1 -\left(\frac{d_T^{\rm I}}{C}\right)^2} - \frac{d_T^{\rm R}}{C}} \cdot 
			\frac{\bigg( \sqrt{1 -\left(\frac{d_T^{\rm R}}{C}\right)^2} + \frac{d_T^{\rm I}}{C}\bigg)\bigg( \sqrt{1 -\left(\frac{d_T^{\rm I}}{C}\right)^2} + \frac{d_T^{\rm R}}{C}\bigg)}{\bigg(\sqrt{1 -\left(\frac{d_T^{\rm R}}{C}\right)^2} + \frac{d_T^{\rm I}}{C}\bigg)\bigg( \sqrt{1 -\left(\frac{d_T^{\rm I}}{C}\right)^2} + \frac{d_T^{\rm R}}{C}\bigg)} \\
			&=\frac{\sqrt{1 -\left(\frac{d_T^{\rm I}}{C}\right)^2} + \frac{d_T^{\rm R}}{C}}{ \sqrt{1 -\left(\frac{d_T^{\rm R}}{C}\right)^2} + \frac{d_T^{\rm I}}{C}} \le\frac{1+\frac{d_T^{\rm R}}{C}}{\sqrt{1 -\left(\frac{d_T^{\rm R}}{C}\right)^2} + \frac{d_T^{\rm R}}{C}\tan\frac{\delta}{2}}
	\end{align*}
	 
	The latter quantity is bounded by $f(1)=\tfrac{2}{\tan\tfrac{\delta}{2}}$, since the function $f(a)\triangleq\frac{1+a}{\sqrt{1-a^2}+a\tan\frac{\delta}{2}}$ is monotone increasing in $a\in[0,1]$. Using this bound in \raf{e1-} and rearranging terms, we get
	\begin{equation}\label{e3-}
	w_{T'}^{\rm R}\ge w_{T}^{\rm R}\left(1-\frac{\epsilon}{4}(1+2\cot^{2}\frac{\delta}{2})\right)\ge \frac{1}{2}w_{T}^{\rm R},
	\end{equation}
	by our assumption~\raf{assum} on $\epsilon$.
	From \raf{e3-} and $\frac{\epsilon w_{T'}^{\rm R} }{8}<\frac{C}{2^{\rho_1(T')}}$, and $\frac{C}{2^{\rho_1(T)}}\le\frac{\epsilon w_{T}^{ \rm R}}{4}$, follows that $\rho_1(T')\le \rho_1(T)+1$. Similarly, we have $\rho_2(T')\le \rho_2(T)+1$. 
\end{proof}

We now state our main result for this section.
\begin{theorem}\label{t-TIE-CKP}
For any $\epsilon,\delta>0$ there is a $(1-3\epsilon)$-socially efficient truthful mechanism for {\sc MultiCKP}$[0,\frac{\pi}{2}-\delta]$. The running time is $\left|\bigcup_k D_k\right|^{O\left(\frac{\cot^2\frac{\delta}{2}}{\epsilon^2}\right)}$.
\end{theorem}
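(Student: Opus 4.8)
The plan is to realize the (non-strategic) PTAS of Section~\ref{sec:ptas} (Algorithm~\ref{CKP-PTAS}, suitably modified) as a maximal-in-range (MIR) allocation rule over a \emph{declaration-independent} range $\cR\subseteq\cX$, and to pair it with VCG payments computed with respect to $\cR$. By the principle recalled after Definition~\ref{d5}, it then suffices to establish three things: (a) $\cR$ does not depend on the declared sets $D_1,\dots,D_n$; (b) $\max_{\bd\in\cR}v(\bd)\ge(1-3\epsilon)\OPT$ for every (possibly misreported) valuation profile; and (c) an element of $\argmax_{\bd\in\cR}v(\bd)$, together with the $n$ VCG payments, is computable within the claimed running time. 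Truthfulness and $(1-3\epsilon)$-social efficiency are then immediate.

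\noindent\emph{The range and social efficiency.} I would take $\cR$ to be the union, over all $N\subseteq\cN$ with $|N|\le\frac1\epsilon$, over all feasible partial selections $\bar\bd=(\bar d_k)_{k\in N}$ of \emph{arbitrary} first‑quadrant complex demands with $\big|\sum_{k\in N}\bar d_k\big|\le C$, and over a bounded family $\Pi(\bar\bd)$ of polygons, of those allocations in which each $k\in N$ receives $\bar d_k$ while the users of $\cN\setminus N$ receive demands lying in the lattice‑restricted range $\cS$ of the instance \textsc{Multi-$m$DKP$\{\sigma_T^i\}$} attached to some $P\in\Pi(\bar\bd)$ (where $T=\{\bar d_k:k\in N\}$). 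The family $\Pi(\bar\bd)$ consists of $\cP_T(\epsilon)$ together with the polygons obtained from it by increasing each grid level $\rho_1,\rho_2$ by at most one; by Lemma~\ref{l-size} and the definitions of $\rho_1,\rho_2$ there are only $O_{\epsilon,\delta}(1)$ of them. Since this description mentions only $\cN,\epsilon,\delta,C$ and never the $D_k$'s, property (a) holds; and $\cR\subseteq\cX$ because every allocation in $\cR$ has total demand inside a region inscribed in the disk of radius $C$. For (b), apply Lemma~\ref{main-lem} to an optimal allocation $\bd^\ast$ (which uses only declared demands): it produces a set $T$ with $|T|\le\frac1\epsilon$ and an allocation $\widetilde\bd$ feasible to \textsc{PGZ}$_T$ with $v(\widetilde\bd)\ge(1-2\epsilon)\OPT$; by Lemma~\ref{lem-proj} the restriction of $\widetilde\bd$ to $\cN\setminus N$ is feasible to \textsc{Multi-$m$DKP$\{\sigma_T^i\}$}, and by Lemma~\ref{l1-} that instance has an allocation in $\cS$ of value at least $(1-\epsilon)v(\widetilde\bd)\ge(1-3\epsilon)\OPT$. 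Taking $\bar\bd$ to be the $T$‑part of $\widetilde\bd$ and $P=\cP_T(\epsilon)$, this allocation lies in $\cR$, which proves (b).

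\noindent\emph{Computing the MIR (the crux).} The difficulty in (c) is that $\cR$ is defined through a continuum of partial selections and that, as noted before Lemma~\ref{cl1-}, $\cP_T(\epsilon)$ is \emph{not} monotone in $T$, so a user in $N$ could inflate a declared demand to refine the grid and enlarge the feasible region in his favour. I would argue that nonetheless $\argmax_{\bd\in\cR}v(\bd)$ is always attained at one of polynomially many candidates that the algorithm can enumerate: iterate over $N$ ($n^{O(1/\epsilon)}$ choices) and, for each $k\in N$, over a declared support point $p_k\in D_k$, setting $T_P=\{p_k:k\in N\}$. By monotonicity of the $v_k$'s, every element of $\cR$ has its special‑user values matched by such a selection with $p_k\preceq\bar d_k$, hence $d_{T_P}\preceq d_N$; since $w_T^{\rm R},w_T^{\rm I}$ from \raf{wT} are monotone decreasing in the coordinatewise order on $d_T$, so are $\rho_1,\rho_2$. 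Lemma~\ref{cl1-} (applied with $T=T_P$ and $T'=\{\bar d_k\}$) then guarantees that the polygon the inflated selection produces either is already ``covered'' by $\cP_{T_P}(\epsilon)$ (case (i) of the lemma), or has grid levels within one of those of $\cP_{T_P}(\epsilon)$ — i.e.\ it belongs to the family $\Pi$ built from $T_P$. Hence it suffices, for each triple (choice of $N$, support points $(p_k)_{k\in N}$, polygon $P\in\Pi$ built from $T_P$), to solve the $m$DKP MIR of Section~\ref{sec:Tm-mC-KS} on \textsc{Multi-$m$DKP$\{\sigma_{T_P}^i\}$} over the users $\cN\setminus N$ (Lemma~\ref{l2-}), and to output the best allocation encountered; with the lattice condition on the non‑special users built into the definition of $\cR$, this equals $\argmax_{\bd\in\cR}v(\bd)$, so the rule is a genuine MIR over the fixed range $\cR$. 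The VCG payments are obtained by $n$ further invocations of this procedure.

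\noindent\emph{Running time and the main obstacle.} The outer enumeration contributes $n^{O(1/\epsilon)}\cdot\big|\bigcup_kD_k\big|^{O(1/\epsilon)}$, the polygon family a further $O_{\epsilon,\delta}(1)$ factor, and each $m$DKP MIR costs $\big|\bigcup_kD_k\big|^{O(m_T(\epsilon)/\epsilon)}$ by Lemma~\ref{l2-}, with $m_T(\epsilon)=O(1/\epsilon)$ by Lemma~\ref{l-size}; after the rescaling~\raf{assum}, which forces $1/\epsilon=\Omega(\cot^2\tfrac\delta2)$, the dominant term is $\big|\bigcup_kD_k\big|^{O(\cot^2\frac\delta2/\epsilon^2)}$, as claimed. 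I expect the main obstacle to be precisely the non‑monotonicity step above: showing that the polygons a misreporting user can bring into play are captured by a family of size depending only on $\epsilon$ and $\delta$, so that the finite, declaration‑dependent enumeration still computes the exact maximum over the fixed declaration‑independent range $\cR$. This is where Lemma~\ref{cl1-} is indispensable, and where the hypothesis $\delta>0$ enters in an apparently unavoidable way — through the bound on $w_T^{\rm I}/w_T^{\rm R}$ — which is why the argument does not extend to $\delta=0$.
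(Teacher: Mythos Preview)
Your proposal is essentially correct and follows the same route as the paper: build a declaration-independent range by augmenting each guessed partial selection with a bounded family of polygons, invoke Lemmas~\ref{main-lem} and~\ref{l1-} for the $(1-3\epsilon)$ guarantee, and use Lemma~\ref{cl1-} to argue that optimizing over the range can be done using only the declared demand sets.

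The one place where you are noticeably less precise than the paper is the definition of $\Pi(\bar\bd)$ and the way Lemma~\ref{cl1-} is invoked. Read literally, ``the polygons obtained by increasing each grid level by at most one'' keeps the base point fixed and yields only four polygons; but matching grid levels alone do not force containment of polygons, since $\cP_T(\epsilon)$ also depends on the base point $z_T$. The paper indexes the extra polygons not by level shifts but by a set $G(T)$ of $O(1/\epsilon)$ \emph{grid points} $z\in\cR_T$ at the refined levels, and introduces a dummy user $n{+}1$ with zero valuation who is allocated the offset $z-d_T$; this makes the total $z+\kappa$ (rather than $d_T+\kappa$), which is exactly what is needed to conclude $\cP_{T'}(\epsilon)\subseteq\cP_{T\cup\{z-d_T\}}(\epsilon)$ in case~(ii) of Lemma~\ref{cl1-}. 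Relatedly, applying the lemma with $T'=\{\bar d_k\}$ misses the case where the optimal range element already sits in a shifted polygon $P^\ast\neq\cP_{\bar\bd}(\epsilon)$; the paper instead takes $T'=T''\cup\{z'-d_{T''}\}$ so that $\cP_{T'}(\epsilon)=P^\ast$ before invoking the lemma. These are exactly the refinements needed to turn your sketch into a complete argument; the overall strategy, the role of Lemma~\ref{cl1-}, and the running-time accounting are all as in the paper.
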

\begin{proof}
It suffices to define a declaration-independent range $\cS$ of feasible allocations, such that $\max_{\bd\in\cS}v(\bd)\ge(1-3\epsilon)\cdot\OPT$, and we can optimize over $\cS$ in the stated time.

For $T\subseteq\cD$, let $G(T)$ be the set of vectors in $\CC$ defined by the union of $\{d_T\}$ and 
\begin{itemize}
\item[(a)] the (component-wise) minimal grid points $z\in\cR_T$, such that $z=\ell_1\cap\ell_2$ for some $\ell_1\in\cL_1(\rho_1(T)+1)$ and $\ell_2\in\cL_2(\rho_2(T)+1)$, and either $\rho_1(\{z\})=\rho_1(T)+1$ {\it or} $\rho_2(\{z\})=\rho_2(T)+1$, but not both; and
\item[(b)] the (component-wise) minimal grid points $z\in\cR_T$, such that $z=\ell_1\cap\ell_2$ for some $\ell_1\in\cL_1(\rho_1(T)+1)$ and $\ell_2\in\cL_2(\rho_2(T)+1)$, and $\rho_1(\{z\})=\rho_1(T)+1$ {\it and} $\rho_2(\{z\})=\rho_2(T)+1$. 
\end{itemize}

Note that $|G(T)|=O(\frac{1}{\epsilon})$. We mention that it is possible to enumerate over all $\cL_1(\rho_1(T)+1) \cap \cL_2(\rho_2(T)+1)$ points ( i.e., the intersection points of the dotted grid lines in Figure~\ref{fz}), instead we choose to only enumerate over a smaller subset of Pareto minimal points that are defined by (a) and (b). Figure~\ref{fz} gives a pictorial example of these minimal points.

\begin{figure}
	\centering
	\includegraphics[scale=0.8]{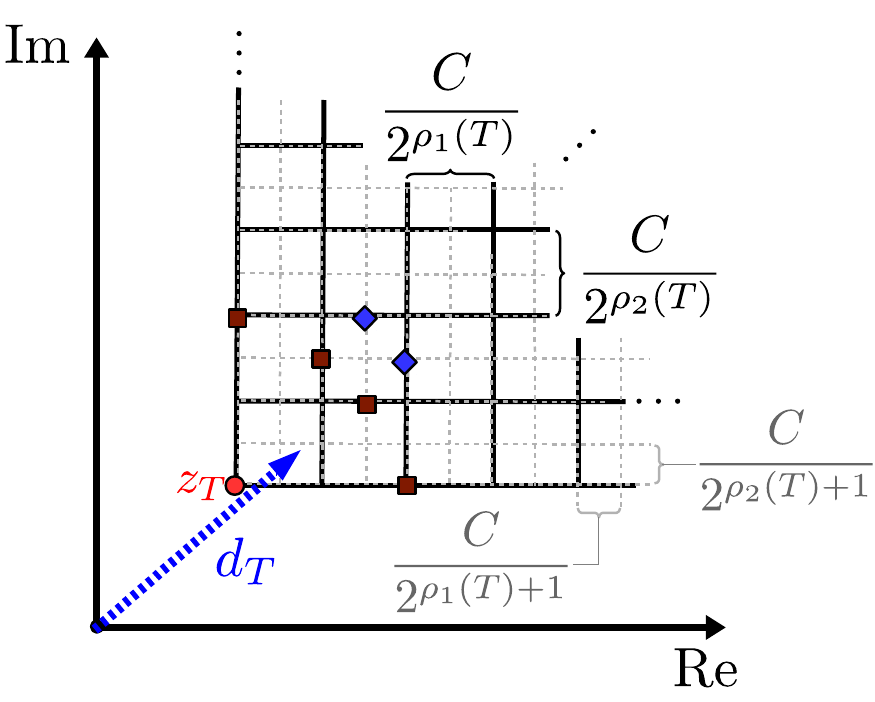}
	\caption{The figure illustrates the points of the set $G(T)$. The black solid lines are  $\cL_1(\rho_1(T) )$ and $\cL_2(\rho_2(T) )$ on the vertical and horizontal directions respectively; the gray dotted lines (with some overlapping with the black lines) correspond to $\cL_1(\rho_1(T)+1)$ and $\cL_2(\rho_2(T) +1)$. The red square points correspond to condition (a), and the blue diamond points correspond to condition (b).}
%	\caption{This figure illustrates how points of the set $G(T)$ define different ranges. The solid black lines have twice larger horizontal and vertical spacing than the dotted gray lines. There are mainly four grid settings that can be achieved using these lines:
% black lines in both horizontal and vertical directions, gray in both horizontal and vertical, gray in horizontal and black in vertical, and black in horizontal and gray in vertical. The points belong to the set $G(t)$: the square points are Pareto minimal points that define the grid of horizontal black and vertical gray lines, the triangular points are Pareto minimal points that define the grid of horizontal gray and vertical  black lines, the diamond points define the grid of horizontal and vertical black lines.    }
	\label{fz}
\end{figure}

For convenience of notation, let us fix two subsets $\cD_1,\cD_2\subseteq\cD$.
For $z\in G(T)$, let us denote by $\cS_z(\cD_2)$ the range of feasible allocations defined as in \raf{range} with respect to the \textsc{Multi-$m$DKP} problem with constraints \raf{mCm-1}-\raf{mCm-2}, when 
\begin{itemize}
\item[(I)] $T$ is replaced by $T\cup\{z-d_T\}$ (and hence, $z$ is used to define the polygon $\cP_T(\epsilon)$);
\item[(II)] we add an additional ``dummy'' user $n+1$ to $\cN$ with valuation $v_{n+1}(d)=0$ for all $d\in\cD$, such that the vector $z-d_T$ as allocated to this user; and 
\item[(III)] the set of vectors in $\cN\backslash N$ is chosen from $\cD_2$. 
\end{itemize}
Then we define the range $\cS(\cD_1,\cD_2)$ as the union: $$\cS(\cD_1,\cD_2)\triangleq\bigcup_{T\subseteq\cD_1:~|T|\le\frac{1}{\epsilon}}\left(\bigcup_{z\in G(T)}\cS_{z}(\cD_2)\right).$$ 
By Lemmas \ref{l1-} and \ref{main-lem}, we have $\max_{\bd\in\cS(\cD,\cD)}v(\bd)\ge(1-3\epsilon)\OPT$ (since $d_T\in G(T)$). It remains to argue that we can efficiently  optimize over $\cS(\cD,\cD)$. 
%This essentially follows from the following claim.
Using Lemma~\ref{cl1-}, we argue that we can solve the optimization problem over $\cS(\cD,\cD)$ assuming that $\cD=\bigcup_{k}D_k$, that is, $\max_{\bd\in \cS(\cD,\cD)}v(\bd)=\max_{\bd\in \cS(\bigcup_{k}D_k,\bigcup_{k}D_k)}v(\bd).$ 
One direction ``$\geq$'' is obvious;
so let us show that $\max_{\bd\in \cS(\cD,\cD)}v(\bd)\leq\max_{\bd\in \cS(\bigcup_{k}D_k,\bigcup_{k}D_k)}v(\bd).$ 

Suppose that $\bd^*=(d_1^*,\ldots,d_n^*)$ is an optimal allocation over $\cS(\cD,\cD)$, but such that $\bd^*\in\cS_{z'}$ for some $z'\in G(T'')$,  $T''\subseteq\cD$, and $T''\not\subseteq \bigcup_kD_k$. Then let us show that there is a set $T\subseteq \bigcup_kD_k$, $z\in G(T)$, and $\widetilde\bd\in\cS_z(\cD)$, such that  $v(\widetilde\bd)=v(\bd^*)$.

Define an allocation $\widetilde\bd$ as follows: Let $N=\{k:~d_k^*\in T''\}$; for each $k\in N$, we choose $\widetilde d_k\in D_k$ such that $\widetilde d_k\preceq d_k^*$ and $v_k(\widetilde d_k)=v_k(d_k^*)$, and we keep $\widetilde d_k=d_k^*$ if $k\not\in N$.
Let us apply the statement of the lemma with $T=\{\widetilde d_k:~k\in N\}$, $T'=T''\cup\{z'-d_{T''}\}$, and $\kappa=\sum_{k:k\not\in N\cup\{n+1\}}d_k^*$. If (i) holds then
$d_T+\kappa\in \cP_T(\epsilon)$ and therefore we have  
\begin{equation}\label{oneDir}
\max_{\bd\in\cS(\cD,\cD)}v(\bd)=\max_{\bd\in \cS(\bigcup_kD_k,\cD)}v(\bd).
\end{equation}
On the other hand,
if (ii) holds, then $\rho_1(T')\in\{\rho_1(T),\rho_1(T)+1\}$ and $\rho_2(T')\in\{\rho_2(T),\rho_2(T)+1\}$. In this case, if $\rho_1(T')=\rho_1(T)$ and $\rho_2(T')=\rho_2(T)$ then $\cP_{T'}(\epsilon)\subseteq     
\cP_{T}(\epsilon)$ (since $d_T\preceq d_{T'}$), in contradiction that (i) does not hold; otherwise, there is a point $z\in G(T)$ such that $z\preceq z'$, $\rho_1(T\cup\{z-d_T\})=\rho_1(T')$ and $\rho_2(T\cup\{z-d_T\})=\rho_2(T')$. Then $z+\kappa\preceq z'+\kappa\in\cP_{T'}(\epsilon)\subseteq\cP_{T\cup\{z-d_T\}}(\epsilon)$, and we get again \raf{oneDir}. 

Finally, we note that
\begin{equation*}
\max_{\bd\in \cS(\bigcup_kD_k,\cD)}v(\bd)=\max_{\bd\in\cS(\bigcup_{k}D_k,\bigcup_kD_k)}v(\bd),
 \end{equation*}as follows from (the proof of) Lemma~\ref{l2-}.

\end{proof}

%\vspace{-10pt}
\section{A Truthful $(1, 1+\epsilon)$-FPTAS for {\sc MultiCKP}$[0,\pi - \delta]$}\label{sec:tbp}
 In this section, we present our truthful $(1,1+\epsilon)$-approximation for {\sc MultiCKP}$[0, \pi-\delta]$ by a reduction to {\sc Multi-$2$DKP}.
As in \cite{KTV13}, the basic idea is to round off the set of possible demands to obtain a range, by which we can optimize over in polynomial time using dynamic programming (to obtain an MIR). 

Let $\theta = \max\{\phi - \frac{\pi}{2},0\}$, where $\phi\triangleq\max_{d\in\cD}{\rm arg}(d)$. We assume that $\tan \theta$ is bounded by an {\it a-priori} known polynomial $P(n)\ge 1$ in $n$, that is {\it independent} of the customers declarations (valuations and demands), because the power factors are often limited by certain regulations in practice.

Let $\cN_+\triangleq \{k \in \cN \mid d^{\rm R} \ge 0~\forall d\in D_k\}$ and $\cN_-\triangleq\{k \in \cN \mid d^{\rm R} < 0~\forall d\in D_k\}$ be the subsets of users with demand sets in the first and second quadrants respectively (recall that we restrict users' declarations to allow such a partition).

We can upper bound the total projections for any feasible allocation $\bd=(d_1,\ldots,d_n)$ of demands as follows:
\begin{align}
 \sum_{k \in \cN} d_k^{\rm I} \le C, \quad
 \sum_{k \in \cN_-(\bd) } - d_k^{\rm R}   \le  C \tan \theta, \quad 
 \sum_{k \in \cN_+(\bd)}  d_k^{\rm R}   &\le C(1+ \tan \theta), \label{eq:ubounds}
\end{align}
where $\cN_+(\bd) \triangleq \{ k\in \cN \mid d_k^{\rm R} \ge 0\}\subseteq\cN_+$ and $\cN_-(\bd) \triangleq \{ k \in \cN\mid   d_k^{\rm R} < 0 \}\subseteq\cN_-$.
Define $L \triangleq \frac{\epsilon C}{n (P(n)+1)}$, and for $d\in\cD$, define the new rounded demand $\widehat{d}$ as follows:
\begin{equation}
\widehat d =
\widehat d^{\rm R} + {\bf i} \widehat d^{\rm I} \triangleq 
\left\{\begin{array}{ll}
\left\lceil \frac{d^{\rm R}}{L} \right\rceil \cdot L + {\bf i} \left\lceil \frac{d^{\rm I}}{L} \right\rceil \cdot L, &\text{ if }d^{\rm R}\ge 0,\\[3mm]
\left\lfloor \frac{d^{\rm R}}{L} \right\rfloor \cdot L + {\bf i} \left\lceil \frac{d^{\rm I}}{L} \right\rceil \cdot L, & \text{ otherwise. } 
\end{array}\right.
\label{eq:truc}
\end{equation}
For convenience, we will write $\widehat\bd=(\widehat d_1,\ldots,\widehat d_n)$. Note that by this definition, $\cN_+(\bd)=\cN_+(\widehat\bd)$ and $\cN_-(\bd)=\cN_-(\widehat\bd)$. 
 
Consider an optimal allocation $\bd^\ast=(d_1^\ast,\ldots,d_n^\ast)$ to \textsc{MultiCKP} $[0,\pi-\delta]$.
Let $\xi_+$ (and $\xi_-$), $\zeta_+$ (and $\zeta_-$) be the respective real and imaginary absolute total projections of the rounded demands in $\cN_+(\bd^\ast)$ (and $\cN_-(\bd^\ast)$).
Then the possible values of $\xi_+, \xi_-, \zeta_+, \zeta_-$ are integral mutiples of $L$ in the following ranges:
\begin{align*}
 \xi_+ \in {\cal A}_+ & \triangleq \left\{0, L, 2L,\ldots,\left\lceil \frac{C (1 + P(n) )}{L} \right\rceil \cdot L\right\},\\
\xi_- \in {\cal A}_-& \triangleq \left\{0,L, 2L,\ldots, \left\lceil \frac{C \cdot P(n) }{L} \right\rceil\cdot L \right\},\\
\zeta_+,\zeta_-  \in {\cal B}&  \triangleq \left\{0, L, 2L,\ldots,\left\lceil \frac{C}{L} \right\rceil\cdot L\right\}.
\label{eq:grid}
\end{align*}

We first present a  $(1,1+4\epsilon)$-approximation algorithm ({\sc MultiCKP-FPTAS}) for \textsc{MultiCKP$[0,\pi-\delta]$}; then we show how to implement it as an MIR mechanism.

The basic idea of Algorithm {\sc MultiCKP-FPTAS} is to enumerate the guessed total projections on real and imaginary axes for $\cN_+(\bd^\ast)$ and $\cN_-(\bd^\ast)$ respectively. 
We then solve two separate {\sc Multi-2DKP} problems (one for each quadrant) to find subsets of demands that satisfy the individual guessed total projections. But since {\sc Multi-2DKP} is generally NP-hard, we need first to round the demands to get a problem that can be solved efficiently by dynamic programming. We note that the violation of the optimal solution to the rounded problem w.r.t. the original problem is small in $\epsilon$. 

For any allocation $\bd=(d_1,\ldots,d_n)\in\cD^n$, let us for brevity write $\tau_+(\bd)\triangleq\sum_{k\in\cN_+(\bd)}d_k^{\rm R}$, $\tau_-(\bd)\triangleq\sum_{k\in \cN_-(\bd)}-d_k^{\rm R}$, and $\tau_I(\bd)\triangleq\sum_{k\in \cN}d_k^{\rm I}$. Then by \raf{eq:truc} and the fact that $x \le t \lceil \frac{x}{t} \rceil  \le x + t$ for any $x,t$ such that $t>0$, we have 
\begin{eqnarray}\label{eq:bds}
&\max\{\tau(\widehat \bd)-nL,0\}\le \tau(\bd)\le \tau( \widehat\bd),
%,~~~\max\{\tau_-(\widehat \bd)-nL,0\}\le \tau_-(\bd)\le \tau_-( \widehat\bd), \nonumber \\ & \max\{\tau_I(\widehat \bd)-nL,0\}\le \tau_I(\bd)\le \tau_I(\widehat \bd).
\end{eqnarray}
for all $\tau\in\{\tau_+,\tau_-,\tau_I\}$. 
\begin{lemma}
For any feasible allocation $\bd=(d_1,\ldots,d_n)$ to \textsc{MultiCKP $[0,\pi-\delta]$}, we have $\big | \sum_{k} \widehat d_k\big| \le  ( 1 + 2\epsilon)C$.
 \label{lem-trunc}
\end{lemma}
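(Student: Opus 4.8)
The plan is to bound $\bigl|\sum_k \widehat d_k\bigr|$ directly against the feasibility bound $\bigl|\sum_k d_k\bigr|\le C$ of the given allocation, controlling the rounding error one coordinate at a time. First I would observe that, by the rounding rule \raf{eq:truc}, every coordinate moves by strictly less than $L$: for each $k$ one has $\bigl|\widehat d_k^{\rm R}-d_k^{\rm R}\bigr|<L$ (a nonnegative real part is rounded up, a negative one down, so the displacement is at most $L$ either way) and $0\le \widehat d_k^{\rm I}-d_k^{\rm I}<L$, where the sign here uses that after the normalizing $90^\circ$ rotation of Section~\ref{sec:app} all demands satisfy $d_k^{\rm I}\ge 0$. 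These are precisely the per-coordinate versions of \raf{eq:bds} (together with $\cN_\pm(\bd)=\cN_\pm(\widehat\bd)$). Summing over $k\in\cN$ gives $\bigl|\sum_k \widehat d_k^{\rm R}-\sum_k d_k^{\rm R}\bigr|\le nL$ and $0\le \sum_k \widehat d_k^{\rm I}-\sum_k d_k^{\rm I}\le nL$; viewing the two sums as points of the plane, this says $\bigl|\sum_k\widehat d_k-\sum_k d_k\bigr|\le \sqrt 2\,nL$.

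Next I would finish with the triangle inequality in the plane: $\bigl|\sum_k\widehat d_k\bigr|\le \bigl|\sum_k d_k\bigr|+\sqrt 2\,nL\le C+\sqrt2\,nL$. (If one prefers to avoid invoking the $\ell_2$-triangle inequality, the same bound drops out of expanding $\bigl(|{\rm Re}\sum_k d_k|+nL\bigr)^2+\bigl({\rm Im}\sum_k d_k+nL\bigr)^2$ and using $|{\rm Re}\sum_k d_k|+{\rm Im}\sum_k d_k\le\sqrt2\,\bigl|\sum_k d_k\bigr|\le\sqrt2\,C$.) It then remains to substitute $nL=\frac{\epsilon C}{P(n)+1}$ and use $P(n)\ge 1$, so that $nL\le\frac{\epsilon C}{2}$ and hence $\bigl|\sum_k\widehat d_k\bigr|\le C\bigl(1+\frac{\epsilon}{\sqrt2}\bigr)\le(1+2\epsilon)C$; the slacker constant $1+2\epsilon$ is more than enough and leaves room for the later steps that further perturb the rounded instance.

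I do not expect a genuine obstacle here --- the statement is essentially a one-paragraph computation. The one point that needs care is the sign bookkeeping for the real part: the \emph{asymmetric} rounding in \raf{eq:truc} (ceiling when $d^{\rm R}\ge 0$, floor when $d^{\rm R}<0$) is exactly what prevents the magnitude of any real coordinate, and therefore the total real projection, from drifting by more than $L$ (resp.\ $nL$) in either direction; and the fact that $d_k^{\rm I}\ge 0$ --- which holds only because of the rotation --- is what lets the imaginary error be one-signed and bounded by $nL$. Once those two facts are in place the estimate above closes immediately.
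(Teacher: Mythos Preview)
Your proof is correct, and it is in fact more direct than the paper's. The paper does not invoke the complex triangle inequality; instead it expands $\bigl|\sum_k\widehat d_k\bigr|^2=(\tau_+(\widehat\bd)-\tau_-(\widehat\bd))^2+\tau_I^2(\widehat\bd)$, replaces each $\tau(\widehat\bd)$ by $\tau(\bd)+nL$ using \raf{eq:bds}, and then controls the resulting cross term $2nL(\tau_+(\bd)+\tau_-(\bd)+\tau_I(\bd))$ via the a~priori bounds~\raf{eq:ubounds}, which give $\tau_+(\bd)+\tau_-(\bd)+\tau_I(\bd)\le 2C(P(n)+1)$; the factor $(P(n)+1)$ is exactly what cancels the same factor in the denominator of $L$, producing the $4\epsilon C^2$ middle term and hence $(1+2\epsilon)^2C^2$. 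Your route sidesteps~\raf{eq:ubounds} entirely: once $\bigl|\sum_k(\widehat d_k-d_k)\bigr|\le\sqrt2\,nL$, the only fact needed about $L$ is $P(n)\ge 1$, yielding the sharper constant $1+\epsilon/\sqrt2$. (One small remark: the asymmetric ceiling/floor in~\raf{eq:truc} is not actually what buys you $|\widehat d_k^{\rm R}-d_k^{\rm R}|<L$ --- any rounding to the $L$-grid does that; the asymmetry is there to ensure $|\widehat d_k^{\rm R}|\ge|d_k^{\rm R}|$, which is used downstream, e.g.\ in Lemma~\ref{lem-trunc2}, where the paper's $\tau_\pm$-decomposition is reused.)
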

\begin{proof}
Using~\raf{eq:bds} and~\raf{eq:ubounds},
\begin{align*}
\left( \sum_{k\in\cN} \widehat d_k^{\rm R} \right)^2 +  \left(\sum_{k \in \cN} \widehat d_k^{\rm I} \right)^2 &=\left(\tau_+(\widehat \bd)-\tau_-( \widehat \bd)\right)^2 +   \tau_I^2( \widehat \bd)\nonumber\\
&=\tau_+^2( \widehat \bd)+\tau_-^2(\widehat\bd)-2\tau_+(\widehat \bd) \tau_-(\widehat\bd)+\tau_I^2(\widehat\bd)\nonumber\\ 
&\le (\tau_+(\bd)+nL)^2+(\tau_-(\bd)+nL)^2-2\tau_+(\bd) \tau_-(\bd)+(\tau_I(\bd)+nL)^2\nonumber\\
&=(\tau_+(\bd)-\tau_-(\bd))^2+\tau_I^2(\bd)+2nL (\tau_+(\bd)+\tau_-(\bd)+\tau_I(\bd))+3n^2L^2\nonumber\\
&= \left(\sum_{k\in \cN} d_k^{\rm R} \right)^2 + \left(\sum_{k\in \cN} d_k^{\rm I}  \right)^2 +2nL\left(\sum_{k\in \cN}|d_k^{\rm R}|+\sum_{k\in\cN}d_k^{\rm I}\right)+3n^2L^2\nonumber\\
&\le C^2 + 4nL (P(n)+1) C + 3n^2L^2 = C^2 + 4 \epsilon C^2 + 3\epsilon^2 C^2/(1+P(n))^2\nonumber \\ &\le C^2 (1+4\epsilon + 3\epsilon^2)\le C^2(1+2\epsilon)^2. %\label{lem1-p2}
\end{align*}
\end{proof}

The next step is to solve the rounded instances exactly. This can be done with essentially the same dynamic program used in the proof of Lemma~\ref{l2-}, modulo a small modification, which we include here for completeness. 

For $d\in\cD$, we use $\bar d$ to denote the vector in $\CC$ with components $\bar d^{\rm R}=-d^{\rm R}$ and $\bar d^{\rm I}=d^{\rm I}$.
%; for a set $F\subseteq\cD$, we write $\bar F$ for the set $\{\bar d:~d\in F \}$. 
We define a (new) valuation function $\bar v_k$ by: $\bar v_k(d)=v_k(d)$, for $k\in\cN_+$, and $\bar v_k(d)=v_k(\bar d)$, for $k\in\cN_-$.  Let further $\widehat{\cD}\triangleq \{ \frac{d}{L} \in \cD :~ d^{\rm R}\in\cA_+ \text{ and }d^{\rm I}\in\cB\},$ and note that $|\widehat\cD|=O(\frac{n^2P^3(n)}{\epsilon^2})$.

Assume an arbitrary order on $\cN = \{ 1, ..., n\}$. We define a 3D table, with each entry ${U}(k,c)$ being the maximum value obtained from a subset of users $\{1,2,\dots,k\} \subseteq \cN$, each choosing a demand from $\widehat{\cD}$, such that the chosen demands fit exactly  within capacity $c\in\widehat\cD$ (i.e., satisfy the capacity constraints as an equation along each of the axes). 
The cells of the table are defined according to the following rules:

\begin{align*}
{U}(1,c) &\triangleq  \bar v_1(L\cdot c);%\label{eq:mdyn-rule1}
\\
U(k,  c)& \triangleq -\infty\text{ for all } c\not \in\widehat\cD;\\
U(k+1,c) &\triangleq  \max_{c' \in \widehat \cD} \{\bar v_{k+1}(L\cdot c') + {U}(k, c - c')\}.
% \label{eq:mdyn-rule2}
\end{align*}
The corresponding optimal allocation ($F_+$ or $F_-$) can be mapped to the original range of demands $\cD$ as follows:
\begin{align*}
\cI(1,c) &\triangleq \{(1,d_1) \}\text{ where $d_1\in D_1$ is}\\ 
&\text{  s.t. } \left\{
\begin{array}{ll}
v_1(d_1)=v_1(L\cdot c)\text{ and }d_1\preceq L\cdot c,&\text{ if $1\in\cN_+$,}\\ 
v_1(d_1)=v_1(L\cdot \bar c)\text{ and }d_1\preceq L\cdot \bar c,&\text{ if $1\in\cN_-$;} 
\end{array}
\right.
\\
{\cI}(k+1,c) &\triangleq  \cI(k, c) \cup \{(k + 1, d_{k+1} )\} \text{ where $d_{k+1}\in D_{k+1}$ is}\\ &\text{ s.t. }\left\{
\begin{array}{ll}
v_{k+1}(d_{k+1})=v_{k+1}(L\cdot d)\text{ and }d_{k+1}\preceq L\cdot d,&\text{ if $k+1\in\cN_+$,}\\ 
v_{k+1}(d_{k+1})=v_{k+1}(L\cdot \bar{d})\text{ and }d_{k+1}\preceq L\cdot \bar{d},&\text{ if $k+1\in\cN_-$,} 
\end{array}
\right.  % \label{eq:mdyn-rule3}
\\
&\text{where }  d\in\argmax _{c' \in \widehat \cD} \{\bar v_{k+1}(L\cdot c') + {U}(k, c - c')\}.
\end{align*}
This table can be filled-up by standard dynamic programming; we denote such a program by {\sc Multi-2DKP-Exact}$[\cdot]$. 

\begin{algorithm}[!htb]%\label{MultiCKP-biFPTAS}
\caption{{\sc MultiCKP-FPTAS} $( \{v_k,D_k\}_{k \in \cN}, C,\epsilon) $}\label{MultiCKP-biFPTAS}
\begin{algorithmic}[1]
\Require Users' multi-minded valuations $\{v_k,D_k\}_{k\in \cN}$; capacity $C$; accuracy parameter $\epsilon$
\Ensure a $(1,1+4\epsilon)$-approximation $\bd=({d}_1,\ldots,d_n)$ to \textsc{MultiCKP$[0,\pi-\delta]$}
\State $({d}_1,\ldots,d_n) \leftarrow (\bzero,\ldots,\bzero)$
\State $\widehat{\cD}\leftarrow \{ \frac{d}{L} \in \cD :~ d^{\rm R}\in\cA_+ \text{ and }d^{\rm I}\in\cB\}$ 
\ForAll {$\xi_+ \in {\cal A}_+, \xi_- \in {\cal A}_-, \zeta_+, \zeta_- \in {\cal B}$}
\If {$(\xi_+ - \xi_-)^2 + (\zeta_+ + \zeta_-)^2 \le (1+2\epsilon)^2C^2$}\label{cond1}
\State {\small $F_+ \leftarrow \text{\sc Multi-2DKP-Exact}(\{v_k,D_k\}_{k\in\cN_+}, \frac{\xi_+}{L},\frac{\zeta_+}{L},\widehat\cD)$}
\State {\small $F_- \leftarrow \text{\sc Multi-2DKP-Exact}(\{\bar v_k,D_k\}_{k\in \cN_-}, \frac{\xi_-}{L},\frac{\zeta_-}{L},\widehat\cD)$} 
\State $(d_1',\ldots,d_n')\leftarrow F_+ \cup F_-$ 
\If{$\sum_kv_k(d'_k) > \sum_kv_k(d_k)$}
\State $(d_1,\ldots,d_n)\leftarrow  (d_1',\ldots,d_n')$
\EndIf 
\EndIf
\EndFor
%\ForAll{$k\in\cN_+$}
%\State Choose $\widetilde d_k\in D_k$ s.t. $\widetilde d_k\preceq d_k$ and 
%$v_k(d_k)=v_k(\widetilde d_k)$
%\EndFor
\State \Return $(d_1,\ldots,d_n)$
\end{algorithmic}
\end{algorithm}

The following lemma states that the allocation returned by {\sc MultiCKP-FPTAS} does not violate the capacity constraint by more than a factor of $1+4\epsilon$.
\begin{lemma}\label{lem-trunc2}
Let $\bd$ be the allocation returned by {\sc MultiCKP-FPTAS}. Then $|\sum_{k} d_k|\le(1+4\epsilon) C$. 
\end{lemma}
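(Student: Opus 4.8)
The plan is to show that the returned allocation $\bd$ is produced from a guessed tuple $(\xi_+,\xi_-,\zeta_+,\zeta_-)$ that passed the feasibility test in line~\raf{cond1}, and then to push the $(1+2\epsilon)$-bound enjoyed by the rounded allocation $\widehat\bd=(\widehat d_1,\ldots,\widehat d_n)$ onto $\bd$ itself using the elementwise estimates \raf{eq:bds}.

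First I would record what the two calls to {\sc Multi-2DKP-Exact} guarantee. Since $\bd$ is the best candidate ever recorded, it comes from some tuple $(\xi_+,\xi_-,\zeta_+,\zeta_-)$ with $(\xi_+-\xi_-)^2+(\zeta_++\zeta_-)^2\le(1+2\epsilon)^2C^2$, and by construction of the exact-fit dynamic program the rounded demands of $F_+$ use up the guessed capacity $(\xi_+,\zeta_+)$ exactly along both axes, and those of $F_-$ use up $(\xi_-,\zeta_-)$ exactly; recalling that $\cN_+(\bd)=\cN_+(\widehat\bd)$ and $\cN_-(\bd)=\cN_-(\widehat\bd)$, this gives $\sum_{k\in\cN_+(\bd)}\widehat d_k^{\rm R}=\xi_+$, $\sum_{k\in\cN_-(\bd)}(-\widehat d_k^{\rm R})=\xi_-$ and $\sum_{k\in\cN}\widehat d_k^{\rm I}=\zeta_++\zeta_-$. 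Hence
\[
\Big|\sum_{k\in\cN}\widehat d_k\Big|^2=(\xi_+-\xi_-)^2+(\zeta_++\zeta_-)^2\le(1+2\epsilon)^2C^2,
\]
so $\widehat\bd$ lies in the $(1+2\epsilon)$-enlarged disk; this is exactly the place where the test in line~\raf{cond1} is used.

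Next I would transfer this bound to $\bd$, mimicking the computation in the proof of Lemma~\ref{lem-trunc} but in the opposite direction. Using the notation $\tau_+(\bd),\tau_-(\bd),\tau_I(\bd)$ from the paragraph preceding \raf{eq:bds}, the inequalities $\max\{\tau(\widehat\bd)-nL,0\}\le\tau(\bd)\le\tau(\widehat\bd)$ give $0\le\tau_I(\bd)\le\tau_I(\widehat\bd)$ and, by combining the upper bound on $\tau_+$ with the lower bound on $\tau_-$ (and vice versa), $|\tau_+(\bd)-\tau_-(\bd)|\le|\tau_+(\widehat\bd)-\tau_-(\widehat\bd)|+nL$. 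Therefore
\[
\Big|\sum_{k}d_k\Big|^2=\big(\tau_+(\bd)-\tau_-(\bd)\big)^2+\tau_I(\bd)^2\le\big(|\tau_+(\widehat\bd)-\tau_-(\widehat\bd)|+nL\big)^2+\tau_I(\widehat\bd)^2=\Big|\sum_k\widehat d_k\Big|^2+2nL\,|\tau_+(\widehat\bd)-\tau_-(\widehat\bd)|+n^2L^2.
\]
Plugging in $|\tau_+(\widehat\bd)-\tau_-(\widehat\bd)|\le\big|\sum_k\widehat d_k\big|\le(1+2\epsilon)C$ and $nL=\tfrac{\epsilon C}{P(n)+1}\le\epsilon C$, the right-hand side is at most $(1+2\epsilon)^2C^2+2\epsilon(1+2\epsilon)C^2+\epsilon^2C^2=(1+6\epsilon+9\epsilon^2)C^2\le(1+4\epsilon)^2C^2$, which yields $\big|\sum_k d_k\big|\le(1+4\epsilon)C$.

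The step that needs the most care is the cancellation term $\tau_+(\bd)-\tau_-(\bd)$: rounding inflates $\tau_+$ and $\tau_-$ in opposite senses inside $(\tau_+-\tau_-)^2$, so a one-sided estimate is not enough and one must invoke both the upper bound on one of them and the $\max\{\,\cdot-nL,0\}$ lower bound on the other from \raf{eq:bds}; the nonnegativity clause there is also what is needed to conclude $\tau_I(\bd)^2\le\tau_I(\widehat\bd)^2$ rather than just $|\tau_I(\bd)-\tau_I(\widehat\bd)|\le nL$. A second, more routine, point is to make precise that the dynamic program reports an allocation whose \emph{rounded} per-quadrant totals equal the guesses exactly (not merely fit within them), which is what ties the test in line~\raf{cond1} to the allocation actually returned.
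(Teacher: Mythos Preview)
Your argument is correct, and it is in fact cleaner than the paper's. The paper expands $(\tau_+(\bd)-\tau_-(\bd))^2$, then splits into two cases according to whether both $\tau_+(\widehat\bd),\tau_-(\widehat\bd)$ are below $nL$ (to deal with the $\max\{\cdot,0\}$ in \raf{eq:bds}), and finally controls the error term $2nL(\tau_+(\widehat\bd)+\tau_-(\widehat\bd))=2nL(\xi_++\xi_-)$ via the a-priori ranges $\cA_+,\cA_-$, which is where the polynomial $P(n)$ re-enters and cancels against $L$. You instead bound the \emph{difference} directly, $|\tau_+(\bd)-\tau_-(\bd)|\le|\tau_+(\widehat\bd)-\tau_-(\widehat\bd)|+nL$, and then use the circle test from line~\ref{cond1} once more to cap $|\tau_+(\widehat\bd)-\tau_-(\widehat\bd)|\le(1+2\epsilon)C$. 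This removes the case split, never touches the range endpoints of $\cA_\pm$, and even lands at $(1+3\epsilon)C$ before you relax to $(1+4\epsilon)C$. The paper's route has the advantage that it reuses exactly the same bookkeeping as Lemma~\ref{lem-trunc}; yours has the advantage of being shorter and of making explicit that only the feasibility test, and not the size of $P(n)$, drives the constant.
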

\begin{proof}
As in the proof of Lemma~\ref{lem-trunc},
\begin{eqnarray}
\label{eq:ee1}
\left( \sum_{k \in \cN} d_k^{\rm R} \right)^2 +  \left(\sum_{k \in \cN} d_k^{\rm I} \right)^2 &=&\left(\tau_+(\bd)- \tau_-(\bd)\right)^2 +  \tau_I^2(\bd)\nonumber\\
&=&\tau_+^2(\bd)+\tau_-^2(\bd)-2\tau_+(\bd)\tau_-(\bd)+\tau_I^2(\bd).
\end{eqnarray} 
If both $\tau_+(\widehat\bd)$ and $\tau_-(\widehat\bd)$  are less than $nL$, then the R.H.S. of \raf{eq:ee1} can be bounded by 
\begin{eqnarray}\label{eq:ee2}
\tau_+^2(\widehat\bd)+\tau_-^2(\widehat\bd)+\tau_I^2(\widehat\bd)
&\le&\tau_+^2(\widehat\bd)+\tau_-^2(\widehat\bd)-2\tau_+(\widehat\bd) \tau_-(\widehat\bd)+2n^2L^2+\tau_I^2(\widehat\bd)\nonumber\\
&=&(\tau_+(\widehat\bd)- \tau_-(\widehat\bd))^2+\tau_I^2(\widehat\bd)+2n^2L^2.
\end{eqnarray}
Otherwise, we bound the R.H.S. of \raf{eq:ee1} by
\begin{eqnarray}\label{eq:ee3}
&\tau_+^2(\widehat\bd)+\tau_-^2(\widehat\bd)-2(\tau_+(\widehat\bd)-nL)(\tau_-(\widehat\bd)-nL)+\tau_I^2(\widehat\bd)\nonumber\\
&=(\tau_+(\widehat\bd)-\tau_-(\widehat\bd))^2+\tau_I^2(\widehat\bd)+2nL(\tau_+(\widehat\bd)+ \tau_-(\widehat\bd))-2n^2L^2.
\end{eqnarray}
Since $\bd=F_+\cup F_-$ is obtained from feasible solutions $F_+$ and $F_-$ to $\text{\sc 2DKP-Exact}(\{v_k,D_k\}_{k\in\cN_+}, \frac{\xi_+}{L},\frac{\zeta_+}{L},\widehat\cD)$
 and $\text{\sc 2DKP-Exact}(\{\bar v_k,D_k\}_{k\in \cN_-}, \frac{\xi_-}{L},\frac{\zeta_-}{L},\widehat\cD)$, respectively, and  $\xi_+ , \xi_-,\zeta_+, \zeta_-$ satisfy the condition in Step~\ref{cond1}, it follows from \raf{eq:ee1}-\raf{eq:ee3} that
\begin{align*}
\left( \sum_{k \in \cN} d_k^{\rm R} \right)^2 +  \left(\sum_{k \in \cN} d_k^{\rm I} \right)^2 &\le \left( \sum_{k \in \cN} \widehat d_k^{\rm R}\right)^2 +  \left(\sum_{k \in \cN}\widehat d_k^{\rm I} \right)^2+2nL\sum_{k \in \cN} |\widehat d_k^{\rm R}|+2n^2L^2\nonumber\\
&=(\xi_+ - \xi_-)^2 + (\zeta_+ + \zeta_-)^2 + 2nL(\xi_++\xi_-)+2n^2L^2\nonumber\\
&\le \left((1+2\epsilon)^2C^2 +4n\frac{\epsilon C}{n(P(n)+1)} (1+P(n))C + 2n^2\frac{\epsilon^2C^2}{n^2(P(n)+1)^2}\right)\\
&\le \left((1+2\epsilon)^2 +4\epsilon + 2\epsilon^2\right)C^2\le (1+4\epsilon)^2C^2.
\end{align*}
\end{proof}

\begin{theorem}
For any $\epsilon,\delta>0$, there is a truthful mechanism for {\sc MultiCKP$[0,\pi - \delta]$}, that returns a $(1,1+4\epsilon)$-approximation. The running time is polynomial in $n$, $\cot\delta$,  and $\frac{1}{\epsilon}$.
\end{theorem}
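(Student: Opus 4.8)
I would obtain the mechanism by realizing Algorithm~\ref{MultiCKP-biFPTAS} as a maximal-in-range (MIR) allocation rule and attaching VCG payments: by Definition~\ref{d5} and the remark after it, an MIR over a \emph{declaration-independent} range $\cR$ with VCG payments (computed w.r.t.\ $\cR$) is truthful, and it is $\alpha$-socially efficient once $\max_{\bd\in\cR}v(\bd)\ge\alpha\cdot\OPT$. So the task splits into (i) describing a range that does not depend on the reported $(v_k,D_k)$; (ii) showing $\max_{\bd\in\cR}v(\bd)\ge\OPT$ and that the allocation returned violates the capacity constraint \raf{nsm-CV1} by at most $1+4\epsilon$; and (iii) showing $\argmax_{\bd\in\cR}v(\bd)$ is computable in time $\poly(n,\cot\delta,\tfrac1\epsilon)$. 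Parts (ii)--(iii) are essentially Lemmas~\ref{lem-trunc}, \ref{lem-trunc2} and~\ref{l2-}, so the new content is (i).

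\noindent\textbf{The range.} The scaling factor $L=\tfrac{\epsilon C}{n(P(n)+1)}$, the grids $\cA_+,\cA_-,\cB$, and the finite rounded grid $\widehat\cD$ depend only on $n$, $C$, $\epsilon$ and the a-priori bound $P(n)$ on $\tan\theta$ --- not on the declarations. On \textsc{MultiCKP}$[0,\pi-\delta]$ one has $\theta=\max\{\phi-\tfrac\pi2,0\}\le\tfrac\pi2-\delta$, hence $\tan\theta\le\cot\delta$, so I may take $P(n):=\lceil\cot\delta\rceil$, a constant fixed by $\delta$. I then let $\cR$ be the set of allocations $\widehat\bd=(\widehat d_1,\dots,\widehat d_n)\in\cD^n$ in which each $\widehat d_k$ is a grid vector in $L\widehat\cD$ (in \emph{either} quadrant, or $\bzero$) with $\bigl|\sum_k\widehat d_k\bigr|\le(1+2\epsilon)C$; equivalently, $\cR$ is the union, over all quadruples $(\xi_+,\xi_-,\zeta_+,\zeta_-)\in\cA_+\times\cA_-\times\cB\times\cB$ passing the test of line~\ref{cond1}, of the allocations whose positive/negative real totals and imaginary totals equal $\xi_+,\xi_-,\zeta_+,\zeta_-$. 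This $\cR$ is manifestly independent of $\{v_k,D_k\}_k$, and is not even affected by the partition $\cN=\cN_+\cup\cN_-$: since the multi-minded valuation \raf{mm-val} assigns value $0$ to any grid vector lying outside a user's declared quadrant (its real part has the wrong sign, so only $\bzero$ is $\preceq$ it in $D_k$), the MIR objective $v(\widehat\bd)=\sum_k v_k(\widehat d_k)$ automatically keeps every served user in her own quadrant.

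\noindent\textbf{Optimizing and quality.} For a fixed admissible quadruple the two quadrants decouple; each half is a knapsack over $L\widehat\cD$ with two equality constraints and is solved exactly by the dynamic program \textsc{Multi-2DKP-Exact}, as in the proof of Lemma~\ref{l2-}, in time polynomial in $|\widehat\cD|=O(n^2P^3(n)/\epsilon^2)$, $n$ and $\max_k|D_k|$ (the only access to the reports being $v_k(Lc)=\max\{v_k(d):d\in D_k,\,d\preceq Lc\}$). Sweeping the $\poly(n,\cot\delta,\tfrac1\epsilon)$ quadruples and keeping the best yields $\argmax_{\widehat\bd\in\cR}v(\widehat\bd)$, which is exactly what Algorithm~\ref{MultiCKP-biFPTAS} computes --- the final relabelling of each $\widehat d_k$ by a $d_k\in D_k$ with $d_k\preceq\widehat d_k$ and $v_k(d_k)=v_k(\widehat d_k)$ is legitimate by free disposal and changes neither the objective nor the (range-and-payment-based) incentive analysis. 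For the guarantee, let $\bd^\ast$ be optimal and $\widehat\bd^\ast$ its rounding \raf{eq:truc}: each $\widehat d_k^\ast\succeq d_k^\ast$, so monotonicity gives $v(\widehat\bd^\ast)\ge\OPT$, while Lemma~\ref{lem-trunc} gives $\bigl|\sum_k\widehat d_k^\ast\bigr|\le(1+2\epsilon)C$, so $\widehat\bd^\ast\in\cR$ and $\max_{\widehat\bd\in\cR}v(\widehat\bd)\ge\OPT$; thus the mechanism is $1$-socially efficient. Lemma~\ref{lem-trunc2} bounds the capacity violation of the returned allocation by $1+4\epsilon$. Hence the mechanism is a truthful $(1,1+4\epsilon)$-approximation running in time $\poly(n,\cot\delta,\tfrac1\epsilon)$ (replacing $\epsilon$ by $\epsilon/4$ gives the $(1,1+\epsilon)$ form of the abstract).

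\noindent\textbf{Main obstacle.} The delicate point is the tension inside (i): $\cR$ must be wholly declaration-independent --- otherwise VCG fails --- yet rich enough to contain a good rounding of \emph{every} instance's optimum and still be optimized over in polynomial time. This is exactly why $L$ must be tied to $n,C,\epsilon,P(n)$ and not to any reported quantity (contrast the textbook, non-monotone knapsack FPTAS, whose scale depends on the largest value), why the one-quadrant-per-user restriction is needed (it lets us enumerate the four signed projection totals and split into two independent \textsc{Multi-2DKP} instances), and why the rounding is taken \emph{away} from the origin, so that it can only raise valuations (monotonicity) while the aggregate stays within $(1+2\epsilon)C$ (Lemma~\ref{lem-trunc}). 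Everything else is the arithmetic already done in Lemmas~\ref{lem-trunc} and~\ref{lem-trunc2}.
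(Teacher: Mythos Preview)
Your approach is essentially the paper's: define a rounded-grid range, invoke Lemmas~\ref{lem-trunc} and~\ref{lem-trunc2} for the $(1,1+4\epsilon)$ guarantee, and let \textsc{MultiCKP-FPTAS} act as the MIR optimizer; the explicit identification $P(n)=\lceil\cot\delta\rceil$ is a useful addition the paper leaves implicit. The one substantive departure is your attempt to make the range independent of the partition $\cN_+,\cN_-$ by allowing each user a grid vector ``in either quadrant''.

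That step has a genuine gap. Your claim that \textsc{MultiCKP-FPTAS} computes $\argmax_{\widehat\bd\in\cR}v(\widehat\bd)$ over this enlarged range is not correct: a user $k\in\cN_+$ who is assigned a second-quadrant grid vector indeed receives $v_k=0$, but her vector can cancel positive real mass and thereby let \emph{other} users fit under the $(1+2\epsilon)C$ test. Concretely, with three first-quadrant users having demands $(C,0),(0.5C,0),(0.1C,0)$ and values $100,50,1$, the allocation $\widehat d_3=(-0.5C,0)$ lies in your $\cR$ and attains declared value $150$; the algorithm, which splits along $\cN_+,\cN_-$, never sees it and returns value at most $101$. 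Worse, the post-hoc relabelling $d_3\leftarrow\bzero$ would then leave $\bigl|\sum_k d_k\bigr|=1.5C$, so Lemma~\ref{lem-trunc2} (whose proof uses $\tau(\widehat\bd)-nL\le\tau(\bd)$, i.e.\ that $\widehat d_k$ is the specific rounding of $d_k$) no longer applies. The paper keeps $\cN_+,\cN_-$ in the definition of $\cS$; that makes $\cS$ formally depend on the declared quadrant, but a user who misreports her quadrant can only be allocated a wrong-sign demand of true value $0$ and hence obtains nonpositive utility under VCG, so the quadrant can be treated as a fixed, truthfully revealed type. You should either adopt that argument explicitly, or restrict each coordinate of the range to the single sign matching the user's (fixed) quadrant.
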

\begin{proof}
We define a declaration-independent range $\cS$ as follows. 
For $\xi_+ \in {\cal A}_+, \xi_- \in {\cal A}_-, \zeta_+, \zeta_- \in {\cal B}$, define
\begin{align*}
\cS_{\xi_+,\xi_-,\zeta_+,\zeta_-}&\triangleq\{\bd=(d_1,\ldots,d_n)\in \widehat\cD^n:~\sum_{k\in\cN_+}L\cdot d_k^{\rm R}=\xi_+,~\sum_{k\in\cN_+}L\cdot d_k^{\rm I}=\zeta_+,\\
&~~~~-\sum_{k\in\cN_-}L\cdot d_k^{\rm R}=\xi_-,~\sum_{k\in\cN_-}L\cdot d_k^{\rm I}=\zeta_-\}.
\end{align*}
Define further 
\begin{align*}
\cS\triangleq\bigcup_{(\xi_+ - \xi_-)^2 + (\zeta_+ + \zeta_-)^2 \le (1+2\epsilon)^2C^2}\cS_{\xi_+,\xi_-,\zeta_+,\zeta_-}.
\end{align*}
Using Algorithm {\sc MultiCKP-FPTAS}, we can optimize over $\cS$ in time polynomial in $n$ and $\frac{1}{\epsilon}$.  
Thus, it remains only to argue that the algorithm returns a $(1,1+4\epsilon)$-approximation w.r.t. the original range $\cD^n$. 
To see this, let $d_1^*,\ldots,d_n^*\in\cD$ be the demands allocated in an optimum solution to {\sc MultiCKP$[0,\pi - \delta]$}, and $d_1,\ldots,d_n\in\cD$ be the demands allocated by {\sc MultiCKP-FPTAS}. Then by Lemma~\ref{lem-trunc}, the truncated optimal allocation $(\widehat d_1^*,\ldots,\widehat d_n^*)$ is feasible with respect to a capacity of $(1+2\epsilon)C$, and thus its projections will satisfy the condition in Step~\ref{cond1} of Algorithm~\ref{MultiCKP-biFPTAS}. It follows that $v(\bd) \ge v(\widehat\bd^*)\ge v(\bd^*)=\OPT$, where the second inequality follows from the way we round demands~\raf{eq:truc} and the monotonicity of the valuations. Finally, the fact that the solution returned by{\sc MultiCKP-FPTAS} violates the capacity constraint by a factor of at most $(1+4\epsilon)$ follows readily from Lemma~\ref{lem-trunc2}. 
\end{proof}

%\vspace{-10pt}
\section{Conclusion}

In this paper, we provided truthful mechanisms for an important variant of the knapsack problem with complex-valued demands. We gave a truthful PTAS when all demand sets of users lie in the positive quadrant (which is attained by limited power factors), and a truthful FPTAS with capacity augmentation when some of the demand sets can lie in the second quadrant (which captures the general setting of large power factors). Our hardness results show that this is essentially best possible assuming P$\neq$NP. Our fundamental results underpin a wide class of resource allocation problems arising in smart grid. The complete understanding of truthful complex-demand problem paves the way to more sophisticated and efficient mechanism design in future smart grids.  Recently, this work has been extended to consider scheduling problems \cite{KKCEZ16,KKEC16CSP,KKCE16b}.

%\section*{Acknowledgment} We thank Piotr Krysta for informing us about the results in \cite{DN10,KTV13} (presented in Section~\ref{sec:Tm-mC-KS}).
%\vspace{-5pt}

\bibliographystyle{plain}
\bibliography{reference}
\appendix
\section*{APPENDIX}

\section{Proofs of Lemmas~\ref{l1-} and~\ref{l2-}}

\noindent{\bf Lemma~\ref{l1-}.}  
\begin{proof}
	
Consider an optimal allocation $\bd=(d_1,\ldots,d_n)$, and assume without loss of generality that $v_1(d_1)\ge \ldots\ge v_n(d_n)$, and (by the monotonicity of valuations) that $\sum_{k\in \cN}d_k=c$. Let $t\triangleq \lceil\frac{m}{\epsilon}\rceil$, and $N=\{1,2,\ldots,t\}$. Without loss of generality, suppose $c^i-\sum_{k\in N}d_k^i > 0$ for all $i$ and $n>t$. Then, there exist $k_1,\ldots,k_m\in\cN\setminus N$ such that, for all $i\in[m]$, $d_{k_i}^i\ge\frac{c^i-\sum_{k\in N}d_k^i}{n-t}$. We define another allocation $\widehat\bd=(\widehat d_1,\ldots,\widehat d_n)\in\cD^n$ as follows: let $\bar\bd=(d_k:~k\in N)$, and for $i\in[m]$, set
\begin{equation*}\label{round-}
\widehat d_k^i=\left\{
\begin{array}{ll}
d_k^i&\text{ if }k\in N,\\
0&\text{ if }k=k_i,\\
\left\lceil \frac{d_k^i}{b^i_{N,\bar \bd}}\right\rceil\cdot b_{N,\bar\bd}^i&\text{ if }k\not\in N \text{ and }k\neq k_i.
\end{array}
\right.
\end{equation*} 
Note that
\begin{align}
	\sum_{k\not\in N}\widehat d_k^i\leq\sum_{k\not\in N,~k\ne k_i}(d_k^i+b_{N,\bar\bd}^i)\le\sum_{k\not\in N}d_k^i+(n-t) b_{N,\bar\bd}^i-d_{k_i}^i\le c^i-\sum_{k\in N}d_k^i \label{eq:apx51}
\end{align}
 that is, $\widehat\bd$ is a feasible allocation.
Let $r_k^i\triangleq\left\lceil \frac{d_k^i}{b^i_{N,\bar\bd}}\right\rceil$ for $k\in\cN\backslash(N\cup\{k_i\})$ and $r_k^i\triangleq 0$ for $k=k_i$. Then it follows using Eqn.~\raf{eq:apx51} that $\sum_{k\not\in N}r_k^i\le (n-t)^2$:
\begin{align}
	\sum_{k\not\in N}r_k^i \le \sum_{k\not\in N,~k\ne k_i} \frac{d_k^i + b^i_{N,\bar \bd} }{b^i_{N,\bar \bd}} = (n-t)^2  \frac{\sum_{k\not\in N,~k\ne k_i} (d_k^i + b^i_{N,\bar \bd}) }{c^i-\sum_{k\in N}d_k^i} \le (n-t)^2
\end{align}
Thus $\widehat\bd\in\cS$. Finally note that, for all $i\in[m]$, $v_k(d_{k_i})\le\frac{1}{t}\sum_{k\in N}v_k(d_k)$, from which we get by the monotonicity of the valuations that $v(\widehat\bd)=\sum_{k\in N}v_k(d_k)+\sum_{k\not\in N,~k\not\in\{ k_1,\ldots,k_m\}}v_k(\widehat d_k)\ge\sum_{k}v_k(d_k)-\sum_{i=1}^mv_k(d_{k_i})\ge(1-\epsilon)\OPT.$
\end{proof}

\noindent{\bf Lemma~\ref{l2-}.}
\begin{proof}
We first observe that, due to the way the valuations are defined in \raf{mm-val}, we may assume for  the purpose of computing an optimal allocation $\bd^*$ that $\cD=\bigcup_{k}D_k$. Indeed, suppose that $\bd^*=(d_1^*,\ldots,d_n^*)\in \cS_{N,\bar\bd^*}$, where $\bar\bd^*=(d_k^*:~k\in N)$, $d^*_{k'}\not\in D_{k'}$ for some $k'\in N$, and  $d^*_k=(r_k^i\cdot b_{N,\bd^*}^i:~i\in[m])$ for $k \not\in N$. Then let us define a new allocation $\widetilde d$ as follows: for each $k\in N$, we choose $\widetilde d_k\in D_k$ such that $\widetilde d_k\preceq d_k^*$ and $v_k(\widetilde d_k)=v_k(d_k^*)$; we set $\bar\bd=(\widetilde d_k:~k\in N)$, and for $k\not\in N$, define $\widetilde d_k=(r^i_k \cdot b^i_{N,\bar\bd}:~i\in[m])$. Note by \raf{bdT} that $b_{N,\bar\bd}\ge b_{N,\bar\bd^*}$, and hence $v(\widetilde \bd)\ge v(\bd^*)$. We note furthermore that $\widetilde\bd\in\cS_{N,\bar \bd}$, since for all $i$, we have 
{\small
\begin{align*}
\sum_k (\widetilde d_k^i-d_k^{*,i})&=\sum_{k\in N}(\widetilde d_k^i-d_k^{*,i})+\sum_{k'\not\in N}\frac{r^i_{k'}}{(n-{|N|})^2}\sum_{k\in N}(d_k^{*,i}-\widetilde d_k^i)\\
&=\sum_{k\in N}(\widetilde d_k^i-d_k^{*,i})\left(1-\frac{\sum_{k'\not\in N}r^i_{k'}}{(n-{|N|})^2}\right)\le 0,
\end{align*}}
\hspace{-0.05in}since $\widetilde d_k^i\le d_k^{*,i}$ and $\sum_{k'\not\in N}r^i_{k'}\le(n-{|N|})^2$, for all $i$. It follows that $\sum_k\widetilde d_k\le\sum_k{d_k}^*\le c$, and hence $\widetilde d\in\cS_{N,\bar \bd}$ as claimed.

To maximize over $\cS$, with the restriction that $\cD=\bigcup_{k}D_k$, we iterate over all subsets $N\subseteq \cN$ of size at most $\frac{m}{\epsilon}$ and all partial selections $\bar\bd=(d_k\in D_k:~k\in N)$. For each such choice $(N,\bar\bd)$, we use dynamic programming to find $\argmax_{\bd\in\cS_{N,\bar\bd}}v(\bd)$. Let $b_{N,\bar \bd}$ be as defined in \raf{bdT}. Without loss of generality, assume $\cN\setminus N = \{ 1, \ldots, n-t\}$. For $k\in\cN\setminus N$ and $r=(r^1,\ldots,r^m)\in\{0,1,\ldots,(n-|N|)^2\}^m$, define ${U}(k,r)$ to be the maximum value obtained from a subset of users $\{1,2,\dots,k\} \subseteq \cN\setminus N$, with user $j\in[k]$ having demand $\widehat d_j^i=r_j^i\cdot b_{N,\bar\bd}^i$ for  $i\in[m]$, where $r_j^i\in\{0,1,\ldots,(n-|N|)^2\}$, and such that $\sum_{j\in[k]}r_j^i\le r^i$. For two vectors $x,y\in\RR^m$, let us denote by $x*y$ the vector with components $(x_1y_1,\ldots,x_my_m)$. Define ${U}(1,r)=-\infty$, if $r\not\geq \bzero$. Then we can use the following recurrence to compute $U(k,r)$:
\begin{eqnarray*}
{U}(1,r) &=&  \max_{r}v_1(b_{N,\bar d}*r)\\
{U}(k+1,r) &=& \max_{r_{k+1}\le r}\left\{v_{k+1}(b_{N,\bar d}*r_{k+1})+U(k,r-r_{k+1})\right\}.
\end{eqnarray*}
Note that the number of possible choices for $r$ is at most $n^{2m}$, and hence the total time required by the dynamic program is $n^{O(m)}$. 
Finally, given the vector $r$ that maximizes $U(n-|N|,r)$, we can obtain (by tracing back the optimal choices in the table) an optimal vector $r_k=(r_k^1,\ldots,r_k^m)$, for each $k\in\cN\setminus N$.
From this, we get an allocation $\widetilde\bd\in\cS$, by defining $\widetilde d_k=d_k$ for $k\in N$ and, for $k\not\in N$, we choose $\widetilde d_k\in D_k$ such that $\widetilde d_k\preceq r_k*b_{N,\bar\bd}$ and $v_k(\widetilde d_k)=v_k(r_k*b_{N,\bar\bd})$. 
\end{proof}

\end{document}